\documentclass[notitlepage,a4paper,superscriptaddress,10pt,tightenlines,twocolumn,pra,nofootinbib]{revtex4-2}
\usepackage[utf8]{inputenc}
\usepackage{graphicx}
\usepackage{physics, amsmath, amssymb, amsthm, units, dsfont, bm}
\usepackage{mathtools, amsfonts, mathrsfs, bbm}

\usepackage[dvipsnames]{xcolor}
\usepackage{adjustbox}
\usepackage{svg}

\usepackage{booktabs}
\usepackage{tabularx}

\newcolumntype{C}[1]{>{\centering\arraybackslash}p{#1}}

\usepackage{tikz}
\usepackage{tikzsymbols}
\usetikzlibrary{math,calc}
\usetikzlibrary{overlay-beamer-styles}
\usetikzlibrary{arrows.meta}
\usetikzlibrary{automata, positioning, arrows}
\usetikzlibrary{backgrounds}
\usetikzlibrary{decorations.markings}

\colorlet{alice}{Green}
\colorlet{bob}{Blue}
\colorlet{charlie}{Red}

\definecolor{dark-gray}{rgb}{.35,.55,.55}
\definecolor{dark-blue}{rgb}{.0,.0,.6}
\usepackage[colorlinks=true,linkcolor=dark-blue,citecolor=dark-blue,urlcolor=dark-blue]{hyperref}
\usepackage{url}

\newcommand{\kommentar}[1]{}

\renewcommand{\ket}[1]{| #1 \rangle}
\renewcommand{\bra}[1]{\langle #1 |}

\renewcommand{\braket}[2]{\langle #1 | #2 \rangle}

\renewcommand{\varrho}{\rho}

\theoremstyle{plain}
\newtheorem{theorem}{Theorem}
\numberwithin{theorem}{section}
\newtheorem{lemma}[theorem]{Lemma}

\newtheorem{proposition}[theorem]{Proposition}
\newtheorem{corollary}[theorem]{Corollary}
\newtheorem{conjecture}[theorem]{Conjecture}

\newtheorem{definition}[theorem]{Definition}
\newtheorem{property}[theorem]{Property}

\theoremstyle{remark}

\makeatletter
\def\maketitle{
\@author@finish
\title@column\titleblock@produce
\suppressfloats[t]}
\makeatother

\begin{document}

\title{Asymptotic entanglement in circle stabilizer states\\ and states forbidding arbitrary vertex-minors}

\author{Kenneth Goodenough
}
\affiliation{Naturwissenschaftlich-Technische Fakult\"{a}t, Universit\"{a}t Siegen, Walter-Flex-Stra\ss e 3, 57068 Siegen, Germany}

\author{
Marcelo Sales
}
\affiliation{Department of Mathematics, California Institute of Technology, Pasadena, CA 91125, USA}

\date{\today}

\begin{abstract}
Stabilizer states play a central role in quantum information theory, and understanding their entanglement has motivated a large body of work. A well-studied question in particular is when a stabilizer state $\ket{\psi}$ can be transformed into another stabilizer state $\ket{\phi}$ using only single-qubit Clifford operations and Pauli measurements. If this is possible, we say that $\ket{\phi}$ is a \emph{vertex-minor} of $\ket{\psi}$. Assuming Geelen's weak structural conjecture on vertex-minors, we establish the following general statement. For any fixed stabilizer state $\ket{\phi}$, the entanglement in stabilizer states $\ket{\psi}$ that do not contain $\ket{\phi}$ as a vertex-minor is asymptotically constrained. More concretely, we show that the distance of any sufficiently \emph{rank-connected} $\ket{\psi}$ not containing $\ket{\phi}$ as a vertex-minor grows as $O(\log n)$, and prove similar results for the so-called locally accessible information, a quantity that captures the amount of information that can be learned through single-qubit Pauli measurements. Our results rely on (i) connecting the above two entanglement measures to rank functions of \emph{multimatroids}, (ii) connecting the rank functions of circle stabilizer states to rank functions on $4$-regular multigraphs, which asymptotically constrains the entanglement of circle stabilizer states, and (iii) using Geelen's weak structural conjecture on vertex-minors to `lift' the previous result to sufficiently connected states in proper vertex-minor-closed families of stabilizer states. Our results establish a connection between asymptotic stabilizer entanglement and forbidden vertex-minors, with direct implications for the entanglement that can be generated in quantum devices.
\end{abstract}

\keywords{Entanglement}



\maketitle

\section{Introduction}
Entanglement is one of the cornerstones of quantum information theory, where in particular the understanding of entanglement in stabilizer states has led to both theoretical and practical insights~\cite{gottesman1997stabilizer, grassl2002graphs, markham2008graph, shettell2020graph, hein2004multiparty, goodenough2024near}. While prior work has shown that the entanglement in stabilizer states can be large, such works typically assume very few constraints---if any---on the states under consideration.

In practice, such constraints can arise from limited long-range interactions~\cite{brandhofer2025hardware} or a limited number of physical devices such as photonic emitters~\cite{li2022photonic}. Any experimental setup must contend with such constraints while trying to generate highly entangled states. From a theoretical perspective, it is important to understand under which conditions strongly entangled states can---and cannot---exist.

We show, assuming a widely believed conjecture in combinatorics, that there is a strong dichotomy; any  constraint on the so-called \emph{vertex-minors} strongly limits the asymptotic entanglement in stabilizer states. 

To make the above statement precise, we will need to quantify what we mean by entanglement and constraints/vertex-minors. To quantify entanglement, we use two entanglement measures: the distance $d$ and the locally accessible information $\alpha_{\rm loc}$. The distance (closely connected to the $k$-uniformity) is a well-known quantity, corresponding to the smallest number of qubits whose marginal is not maximally mixed~\cite{javelle2012minimum, cattaneo2015mindeg, raissi2022general}.

\begin{definition}
The distance $d$ of a stabilizer state $\ket{\psi}$ is the weight of the smallest non-identity stabilizer in the stabilizer group of $\ket{\psi}$.
\end{definition}

To motivate the other measure, we use the following framing. A stabilizer state on $n$ qubits has an associated stabilizer basis; these are all $2^n$ stabilizer states that share the same stabilizers, but differ in the phases in front of the stabilizers. A choice of one of the $2^n$ states encodes $n$ bits of information; the locally accessible information $\alpha_{\rm loc}$ is concerned with how many bits of information can be extracted under single-qubit Pauli measurements (SQPMs).

\begin{definition}[informal]
The locally accessible information $\alpha_{\rm loc}$ of a stabilizer state $\ket{\psi}$ is defined as the maximum number of bits that can be learned when measuring $\ket{\psi}$ with SQPMs and classical communication.
\end{definition}

One hallmark of entanglement is that local measurements cannot extract all information present in the global state. As such, strongly entangled states have small $\alpha_{\rm loc}$. With this in mind, we will still refer to $\alpha_{\rm loc}$ as an entanglement measure. The locally accessible information has been studied in~\cite{markham2007entanglement}, where it has been used to bound the geometric measure of entanglement.

We summarize below (bounds on) the optimal asymptotic behavior of the two entanglement measures.

\begin{proposition}\label{prop:meas_for_arbitrary_stab}
Optimized over all stabilizer states on $n$ qubits, the quantities $d$ and $\alpha_{\rm loc}$ satisfy
\begin{align}
\max_{\psi}d(\ket{\psi})=&~\Theta\left( n\right), \nonumber \\
\min_{\psi}\alpha_{\rm loc}(\ket{\psi})=&~O\left(\sqrt{n}\right). \nonumber \\
\end{align}
This follows from~\cite{javelle2012minimum} and~\cite{ascoli2026almost}, respectively.
\end{proposition}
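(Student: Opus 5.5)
The proposition has two parts, and I would prove each separately, relying on the cited works for the hard constructions.

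\emph{The distance.} The upper bound $d\le n$ holds trivially, since every stabilizer has weight at most $n$. For the lower bound I would use a counting argument in the style of Gilbert--Varshamov, following~\cite{javelle2012minimum}. Every stabilizer state is LC-equivalent to a graph state $\ket{G}$, and LC-equivalence preserves the weight of every stabilizer, so it suffices to find graphs with large distance. For a graph state, a nontrivial stabilizer $\prod_{v\in D} K_v$ is supported on $D\cup \mathrm{Odd}(D)$, where $\mathrm{Odd}(D)$ is the set of vertices with an odd number of neighbours in $D$. Hence $d(\ket{G})\ge \delta$ as soon as $|D\cup\mathrm{Odd}(D)|\ge\delta$ for every nonempty $D$.

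Now take $G$ uniformly random and fix a nonempty $D$ with $|D|<\delta$. The events ``$v\in\mathrm{Odd}(D)$'' for $v\notin D$ are independent fair coins, because each depends on a disjoint set of edge variables and $D\neq\emptyset$. So the probability that $|D\cup\mathrm{Odd}(D)|<\delta$ is at most $2^{-(n-|D|)}\binom{n}{<\delta}$. A union bound over $D$ then gives a total failure probability of at most $\binom{n}{<\delta}^2 2^{-n+\delta}$. This is below $1$ once $\delta=cn$ with $2h(c)+c<1$. Consequently some graph has $d\ge cn$, which gives $\max_\psi d=\Theta(n)$.

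\emph{The locally accessible information.} The claim here is only an upper bound $O(\sqrt n)$ on the minimum. I would simply invoke the construction of~\cite{ascoli2026almost}, which exhibits a family of $n$-qubit stabilizer states for which every adaptive SQPM strategy learns $O(\sqrt n)$ bits.

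If I had to reconstruct this, I would first reduce the problem: for a stabilizer basis, measuring a set of commuting Pauli observables locally reveals exactly the dimension of the subspace of the stabilizer group generated by the stabilizers that are products of the measured single-qubit Paulis. The most natural formalization is that $\alpha_{\rm loc}$ is the maximum, over choices of one Pauli per qubit (possibly adaptively), of the dimension of the subgroup of stabilizers supported as tensor products of those local Paulis. In multimatroid language this is the rank of the stabilizer subgroup restricted to a transversal. The task is then to construct states in which every transversal-restricted subgroup has dimension $O(\sqrt n)$.

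The main obstacle is this explicit construction, together with handling adaptivity, since earlier outcomes may change later choices. I would defer to~\cite{ascoli2026almost} for both rather than reprove them here.
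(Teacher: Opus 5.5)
\textbf{Distance.} This part is correct. It is essentially the probabilistic argument behind~\cite{javelle2012minimum}: the support of $\prod_{v\in D}K_v$ is $D\cup\mathrm{Odd}(D)$, the parities for $v\notin D$ are independent fair coins, sets with $|D|\ge\delta$ are automatically fine, and a union bound over $|D|<\delta$ gives $d\ge cn$ for some graph whenever $2h(c)+c<1$. Your constant is weaker than theirs, but $\Theta(n)$ only needs some $c>0$, and $d\le n$ is trivial.

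\textbf{Locally accessible information.} The paper gives nothing here beyond the two citations, so deferring puts you at the same level as the paper. Your description of what is needed is off in two ways, though.

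First, the cited work is a graph-theoretic statement about vertex-minors, not a statement about SQPM strategies. The bridge you are missing is Proposition~\ref{prop:alpha_loc_independent}, which gives $\alpha_{\rm loc}(\ket{G})=\max_{H\sim G}\alpha(H)$, the size of the largest edgeless vertex-minor of $G$.

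Second, the ``main obstacle'' you name is not one. Adaptivity is irrelevant, because the formal definition (Definition~\ref{corr:alpha_char}) is the non-adaptive $\max_P\nu(P)$. In any case, the phase-free stabilizer group after a Pauli measurement does not depend on the outcome. No explicit construction is needed either; a counting argument in the same spirit as your distance part closes the gap:
\begin{itemize}
\item Local Cliffords act on phase-free Pauli strings through $S_3^n$, and a graph is determined by the phase-free stabilizer group of its graph state. So every LC orbit contains at most $6^n$ graphs.
\item Hence the number of labelled $n$-vertex graphs $G$ with $\max_{H\sim G}\alpha(H)\ge k$ is at most $6^n\binom{n}{k}2^{\binom{n}{2}-\binom{k}{2}}$.
\item This is below $2^{\binom{n}{2}}$ once $\binom{k}{2}>(1+\log_2 6)\,n$, so some graph state has $\alpha_{\rm loc}<C\sqrt{n}$ for an absolute constant $C$.
\end{itemize}
Since $\alpha_{\rm loc}$ is LC-invariant and every stabilizer state is LC-equivalent to a graph state, this yields $\min_\psi\alpha_{\rm loc}(\ket{\psi})=O(\sqrt{n})$ with a self-contained proof.
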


\begin{figure*}[t]
    \centering
    \begin{tikzpicture}[
    scale=0.58,
    >=stealth,
    vertex/.style={circle, fill=black, inner sep=1.6pt},
    seqborder/.style={fill=gray!10!blue!7, draw=black, rounded corners=14pt},
    graphdisk/.style={fill=gray!20, draw=black}
]

\def\BaseSeed{213}

\def\R{1.02}        
\def\r{0.84}        
\def\TargetR{1.12}
\def\Targetr{0.92}

\newcommand{\RandomBorderGraph}[6]{%
\pgfmathtruncatemacro{\N}{#3}
\pgfmathtruncatemacro{\NmOne}{#3-1}
\pgfmathtruncatemacro{\SeedValue}{#4}
\begin{scope}[shift={(#1,#2)}]
    \filldraw[graphdisk] (0,0) circle (\R cm);

    \foreach \i in {1,...,\N}{
        \pgfmathsetmacro{\ang}{90 + 360/\N*(\i-1)}
        \node[vertex] (#6\i) at (\ang:\r) {};
    }

    \pgfmathsetseed{\SeedValue}
    \foreach \i in {1,...,\NmOne}{
        \pgfmathtruncatemacro{\jstart}{\i+1}
        \foreach \j in {\jstart,...,\N}{
            \pgfmathparse{rnd}
            \ifdim\pgfmathresult pt<#5pt
                \draw (#6\i)--(#6\j);
            \fi
        }
    }
\end{scope}
}

\draw[seqborder] (-3.1,1.2) rectangle (11.8,3.9);
\draw[seqborder] (-3.1,-3.25) rectangle (11.8,-0.55);

\node[font=\large] at (4.35,4.35) {Arbitrary stabilizer states};
\node[font=\large] at (4.35,0.00) {Arbitrary stabilizer states forbidding arbitrary $\ket{\phi}$};

\node[font=\large, align=center] at (12.7,2.55) {$\implies$};
\node[font=\large, align=center] at (16.9,2.55) {Asymptotically strong\\entanglement};

\node[font=\large, align=center] at (12.7,-1.9) {$\implies$};
\node[font=\large, align=center] at (16.9,-1.9) {Asymptotically weak\\entanglement};

\node[scale=1.5] at (-2.25,2.55) {$\cdots$};
\node[scale=1.2] at (1.5,2.15) {$,$};
\RandomBorderGraph{0}{2.55}{6}{\BaseSeed+13}{0.38}{A}
\node[scale=1.2] at (5.8,2.15) {$,$};
\RandomBorderGraph{4.3}{2.55}{8}{\BaseSeed+28}{0.4}{B}
\RandomBorderGraph{8.6}{2.55}{10}{\BaseSeed+40}{0.26}{C}
\node[scale=1.5] at (10.95,2.55) {$\cdots$};

\node[scale=1.5] at (-2.25,-1.9) {$\cdots$};
\RandomBorderGraph{0}{-1.9}{6}{\BaseSeed+46}{0.34}{D}
\node[scale=1.2] at (1.5,-2.3) {$,$};
\RandomBorderGraph{4.3}{-1.9}{8}{\BaseSeed+62}{0.32}{E}
\RandomBorderGraph{8.6}{-1.9}{10}{\BaseSeed+68}{0.32}{F}
\node[scale=1.2] at (5.8,-2.3) {$,$};
\node[scale=1.5] at (10.95,-1.9) {$\cdots$};

\draw[->, line width=0.9pt]
     (-4.25-0.9,-1.9) -- (-2.65-0.8,-1.9);
\node[font=\normalsize, align=center] at (-3.3-1.1,-1.15) {forbidden\\from};

\begin{scope}[shift={(-6.3-0.7,-1.9)}]
    \filldraw[graphdisk] (0,0) circle (\TargetR*1.2 cm);

    \foreach \i/\ang in {
        1/90,
        2/162,
        3/234,
        4/306,
        5/18
    }{
        \node[vertex] (T\i) at (\ang:\Targetr*1.2) {};
    }

    \draw (T1)--(T3);
    \draw (T1)--(T5);
    \draw (T2)--(T3);
    \draw (T2)--(T5);
    \draw (T3)--(T4);
    \draw (T4)--(T5);

    \node[font=\large] at (0,-1.92) {$\ket{\phi}$};
\end{scope}

\end{tikzpicture}
    \vspace*{-3mm}
\caption{Graphical description of theorem \ref{thm:main_result_informal}. Top depicts a sequence of arbitrary stabilizer states (which we depict as graph states for convenience), bottom depicts a sequence of stabilizer states, but that do not contain some arbitrary but fixed $\ket{\phi}$ as a vertex-minor. The asymptotic entanglement, as quantified by the distance $d$ and locally accessible information $\alpha_{\rm loc}$ can grow strongly (see main text for more details) for the top sequence, while the entanglement in any sequence of (sufficiently strongly rank-connected) stabilizer states forbidding $\ket{\phi}$ is necessarily constrained.
}
\label{fig:overview}
\end{figure*}

We now formally define what we mean by a constraint, which we phrase in terms of \emph{vertex-minors}.

\begin{definition}[informal]
A stabilizer state $\ket{\phi}$ is a vertex-minor of a stabilizer state $\ket{\psi}$ if $\ket{\psi}$ can be transformed into $\ket{\phi}$ by SQPMs and single-qubit Cliffords. The measured qubits are traced out, and the single-qubit Cliffords may depend on the measurement outcomes. 
\end{definition}
We will write $\ket{\psi} \succ \ket{\phi}$ to mean that $\ket{\psi}$ has a vertex-minor isomorphic to $\ket{\phi}$, where here isomorphism means up to qubit relabelling. Conversely, $\ket{\psi} \nsucc \ket{\phi}$ is short-hand for  $\ket{\psi}$ not having a vertex-minor isomorphic to $\ket{\phi}$. Informally, $\ket{\psi}\succ \ket{\phi}$ means that $\ket{\phi}$ is a `substructure' of $\ket{\psi}$, while $\ket{\psi}\nsucc \ket{\phi}$ implies this is not the case.
Forbidding one or more states $\ket{\phi_i}$ as a vertex-minor is an abstract way of modeling a constraint: no matter how an experimental setup is scaled up, it should not be able to create any of the states $\ket{\phi_i}$.

We now state our main result informally.

\begin{theorem}[informal]\label{thm:main_result_informal}
Fix an arbitrary stabilizer state $\ket{\phi}$. 
Let $\ket{\psi}$ be a \emph{sufficiently rank-connected} $n$-qubit stabilizer state such that $\ket{\psi}\nsucc \ket{\phi}$. Conditional on \emph{Geelen's weak structural conjecture}, it holds that
\begin{align}
d(\ket{\psi})=&~O\left(\log n\right), \nonumber \\
\alpha_{\rm loc}(\ket{\psi})=&~\Omega\left(\frac{n}{\log n}\right) \ . \nonumber \\
\end{align}

\end{theorem}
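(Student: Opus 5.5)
The plan follows the three-step strategy announced in the abstract: (i) translate both measures into statements about the rank function of the multimatroid associated with $\ket{\psi}$; (ii) prove the bounds directly for \emph{circle} stabilizer states, i.e.\ those coming from Eulerian circuits of $4$-regular multigraphs; (iii) use Geelen's weak structural conjecture to reduce the general case to a circle state perturbed by a bounded amount.

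\textbf{Step (i): translation to multimatroid ranks.} For a graph state with adjacency matrix $\Gamma$, the stabilizer elements supported on a set $S$ correspond to the kernel of a cut-rank type map. Thus $d(\ket{\psi})$ is the smallest size of a subtransversal $T$ of the multimatroid $Z_\psi$ whose rank is deficient, i.e.\ $r(T)<|T|$, which is the multimatroid analogue of a circuit. Similarly, single-qubit Pauli measurements choose one Pauli per qubit, i.e.\ a transversal $T$. Using adaptivity and the fact that stabilizer measurement outcomes are uniformly random or deterministic, I would show $\alpha_{\rm loc}=\max_T (n-r(T))$ or the equivalent nullity expression. Both measures are then statements about how small or large ranks of (sub)transversals can be.

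\textbf{Step (ii): circle states.} For a circle state coming from a $4$-regular multigraph $G$ on $n$ vertices, the multimatroid rank of a transversal equals $n+1-c(T)$, or a similar expression, where $c(T)$ counts the components (circuits) of the transition system selected by $T$. This is the interlace/Martin polynomial correspondence.
\begin{itemize}
\item \emph{Distance.} A stabilizer of small weight corresponds to a small set of vertices whose transitions disconnect a circuit. Since $G$ is $4$-regular with $n$ vertices, it has girth $O(\log n)$ by the Moore bound. A short cycle of length $O(\log n)$ should give a dependent subtransversal of size $O(\log n)$, hence $d=O(\log n)$.
\item \emph{Locally accessible information.} I would exhibit a transversal splitting the Eulerian system into many circuits. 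Greedily decompose $G$ into edge-disjoint short cycles, each of length $O(\log n)$ by the Moore bound applied repeatedly while many edges remain. Choosing transitions that close each such cycle gives $\Omega(n/\log n)$ circuits, hence $\alpha_{\rm loc}=\Omega(n/\log n)$.
\end{itemize}

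\textbf{Step (iii): lifting via the conjecture.} Geelen's weak structural conjecture says: for a proper vertex-minor-closed class (here, states with $\ket{\psi}\nsucc\ket{\phi}$), every sufficiently rank-connected member is obtained from a circle graph by a bounded number $k=k(\phi)$ of low-rank perturbations, such as adding or complementing a bounded-rank adjacency matrix. I need the two measures to be stable under such perturbations.
\begin{itemize}
\item Ranks of subsets change by at most $O(k)$ under a rank-$k$ perturbation.
\item For $\alpha_{\rm loc}$, this costs only an additive $O(k)$, so $\Omega(n/\log n)$ survives.
\item For $d$, I would take many ($\gg k$) disjoint short dependent sets in the circle part, which exist by the same greedy cycle argument. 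By a pigeonhole/linear-algebra argument, some combination of $O(k)$ of them remains dependent after the perturbation, giving size $O(k\log n)=O(\log n)$.
\end{itemize}

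\textbf{Main obstacle.} I expect Step (iii) to be the main obstacle, specifically controlling how the perturbation interacts with small dependent sets for the distance bound. It may also be necessary to use rank-connectivity to rule out the circle part being tiny relative to $n$.
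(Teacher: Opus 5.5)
Your proposal is correct and follows essentially the paper's own route: nullity characterizations of $d$ and $\alpha_{\rm loc}$, the Moore bound for the girth and for $\Omega(n/\log n)$ edge-disjoint short cycles in the $4$-regular multigraph, additive stability of ranks under a rank-$p$ perturbation, and closing $p+1$ short cycles so that nullity $p+1$ survives as nullity $\geq 1$ (the paper's $d(\ket{H})\leq d_{p+1}(\ket{G})$); your worry about rank-connectivity is unnecessary, since it enters only as the hypothesis of the conjecture and the circle graph lives on the same $n$ vertices. The stability step needs care: a rank-$p$ perturbation also zeroes the diagonal, so the rank of the \emph{same} Pauli string can change by $\Theta(n)$ (complementing the empty graph sends $r(X^{\otimes n})$ from $0$ to $n-1$ or $n$), and the additive-$p$ bound holds only after relabelling $X\leftrightarrow Y$ at vertices where the perturbing matrix has a diagonal $1$, a support-preserving relabelling that is harmless for $d$ and $\alpha_{\rm loc}$.
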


We provide further details on Geelen's weak structural conjecture, and what \emph{sufficiently rank-connected} means in Section \ref{sec:main_geelen}. We suspect that it is possible to get rid of the sufficiently $k$-rank-connected quantifier.

Informally, theorem \ref{thm:main_result_informal} says that any sequence of devices that cannot prepare a given fixed state $\ket{\phi}$, also cannot prepare strongly entangled states (that are sufficiently rank-connected). Furthermore, sufficiently rank-connected stabilizer states $\ket{\psi}$ with strong entanglement necessarily need to contain all sufficiently small states $\ket{\phi}$ as a vertex-minor, suggesting a close relationship between entanglement of stabilizer states and their substructures.

\section{Techniques}\label{sec:techniques}
To prove our main result, we show a connection between circuit-partitions of $4$-regular multigraphs and measurement statistics of single-qubit Pauli measurements on a subset of stabilizer states called \emph{circle stabilizer states}. These measurement statistics are described by the \emph{rank} and \emph{nullity}.

\subsection{Rank and nullity}
Stabilizer states are particularly well-behaved under SQPMs. Namely, conditioned on the outcomes of other SQPMs, an SQPM on a stabilizer state yields either a deterministic or completely random outcome. As such, a Pauli string $P$ encoding a choice of SQPMs on a subset of qubits of a stabilizer state yields a probability distribution over outcomes, such that the probability distribution has an entropy equal to an integer. This integer we call the \emph{rank} $r(P)$ of $P$ (with respect to a fixed $\ket{\psi}$). It will also be convenient to define the nullity of $P$ as $\nu(P)\equiv w(P)-r(P)$, where $w(P)$ is the weight of $P$, i.e.~the number of non-identity elements in $P$.

Let us provide a brief example. For a GHZ state $\ket{000}+\ket{111}$, measuring $P = ZZZ$ yields either the outcome $000$ or $111$ uniformly at random. This distribution has an entropy of $1$, and thus $r(ZZZ)=1$ and $\nu(ZZZ)=2$. On the other hand, it can be checked that $r(XXX) = 2$ and $\nu(XXX)=1$, see Appendix \ref{sec:K3_example} for more details.

Just as expectation values of all Pauli strings uniquely determine a state, so does the rank function uniquely determine a stabilizer state up to Pauli unitaries, see Proposition~\ref{prop:unique_rank}. Furthermore, the two entanglement measures $d$ and $\alpha_{\rm loc}$ are conveniently expressed in terms of the rank/nullity function.

\begin{proposition}
The distance of a stabilizer state is the smallest weight $w(P)$ of a Pauli string for which $\nu(P) =1$.
\end{proposition}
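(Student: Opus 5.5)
The plan is to give a closed-form description of the rank function in terms of the stabilizer group $S$ of $\ket{\psi}$, and then to show that $\nu(P)=1$ for a Pauli string $P$ exactly when there is a nontrivial stabilizer supported on, and compatible with, the measured Paulis.

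First I will compute $r(P)$ explicitly. Let $P$ act as the Pauli $P_i \in \{X,Y,Z\}$ on each qubit $i$ of its support $A$, with $|A| = w(P)$. Measuring these Paulis amounts to measuring the commuting group $G_P=\langle P_i : i\in A\rangle$, which has order $2^{w(P)}$. The outcome of a product $Q\in G_P$ is deterministic exactly when $\pm Q\in S$, so the deterministic part of the outcome string is governed by the subgroup $K_P=G_P\cap \pm S$ (taken modulo phases). By standard stabilizer measurement theory, the outcome distribution is uniform on an affine subspace of $\mathbb{F}_2^{w(P)}$ of dimension $w(P)-\dim K_P$. Hence
\begin{equation}
r(P)=w(P)-\dim K_P, \qquad \nu(P)=\dim K_P.
\end{equation}
In words, $\nu(P)$ is the $\mathbb{F}_2$-dimension of the space of stabilizers (up to sign) whose support lies in $A$ and which act on each qubit $i$ of their support as exactly $P_i$. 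The GHZ example checks out: for $ZZZ$ the relevant stabilizers are $Z_1Z_2$ and $Z_2Z_3$, giving $\nu=2$, while for $XXX$ only $XXX$ itself contributes, giving $\nu=1$.

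The two directions then follow from this formula. For the first, suppose $s$ is a nonidentity stabilizer of minimal weight $d$. Take $P$ equal to $s$ on its support, so $w(P)=d$ and $K_P\ni s$, which gives $\nu(P)\ge 1$. Any other nonidentity element $t\in K_P$ distinct from $s$ has support inside $\mathrm{supp}(s)$ and agrees with $s$ on $\mathrm{supp}(t)$. Then $st$ is a nonidentity stabilizer supported on $\mathrm{supp}(s)\setminus\mathrm{supp}(t)$, whose weight is strictly less than $d$. This contradicts minimality, so $\dim K_P=1$ and $\nu(P)=1$ at weight $d$.

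For the second direction, suppose $\nu(P)=1$, or indeed $\nu(P)\ge 1$. Then $K_P$ contains some nonidentity stabilizer, whose weight is at most $w(P)$, so $w(P)\ge d$. Together the two directions show that the minimum of $w(P)$ over $P$ with $\nu(P)=1$ equals $d$.

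The main point requiring care is the rank formula itself: the affine-subspace description of the outcome distribution, and in particular correctly identifying the deterministic constraints with $G_P\cap\pm S$ modulo phases. I would justify it by choosing generators of $S$ adapted to $K_P$ and applying the standard rule for measuring commuting Paulis on stabilizer states, under which each new generator of $G_P$ either commutes with all of $S$ modulo the previously measured Paulis, giving a deterministic outcome, or anticommutes with some element, giving a uniform outcome. The rest of the argument is elementary.
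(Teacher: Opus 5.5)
Your proposal is correct and follows essentially the same route as the paper: your $K_P$ is the paper's $\mathcal{I}(\mathcal{S},P)$, whose dimension is the nullity by the paper's definition (justified by its measurement-statistics lemma), and both directions match the paper's argument. The only cosmetic difference is that you get the contradiction from the product $st$, whereas the paper notes directly that $t$ itself, having support strictly inside $\operatorname{supp}(s)$, is already a nonidentity stabilizer of weight less than $d$.
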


A Pauli string with weight $n$ we call \emph{complete}.

\begin{proposition}
The locally accessible information $\alpha_{\rm loc}$ of a stabilizer state with nullity function $\nu$ is the largest nullity $\nu(P)$ of a complete Pauli string $P$.
\end{proposition}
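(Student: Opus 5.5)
Here $\alpha_{\rm loc}$ is interpreted in the natural way. A protocol is a sequence of SQPMs, each possibly chosen adaptively, one on every qubit. It is scored by the Shannon entropy of its outcome record when the state is drawn uniformly from the $2^n$ states of the stabilizer basis. The number of bits learned is the mutual information between the basis label and the outcomes.

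The plan has two parts. First, non-adaptive protocols: for a fixed complete Pauli string $P$, the information learned equals $\nu(P)$. Second, adaptivity and classical communication give nothing beyond the best non-adaptive string. Measuring a qubit twice, or measuring two Paulis on the same qubit, never helps. After the first Pauli measurement the qubit is in an eigenstate, so a second measurement is either deterministic or uninformative. Hence complete strings suffice.

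\textbf{Step 1 (fixed $P$).} Let $S$ be the stabilizer group and $\mathcal{P}_P=\langle P_1,\dots,P_n\rangle$ the abelian group generated by the single-qubit factors of $P$. The basis state with sign vector $s$ has generators $(-1)^{s_i}g_i$.

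For each basis state, the outcome distribution is uniform on a coset whose size is $2^{r(P)}$. Here $r(P)=w(P)-\dim(S\cap\mathcal{P}_P)$, consistent with the paper's example ($r(ZZZ)=1$ for GHZ). The reason is that the outcome parities that are forced are exactly those indexed by elements of $S\cap\mathcal{P}_P$. These forced parities are fixed by the sign vector $s$, and the remaining outcome bits are uniformly random.

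Average over uniform $s$. The signs of the $\nu(P)=\dim(S\cap\mathcal{P}_P)$ independent elements of $S\cap\mathcal{P}_P$ are independent uniform bits. The marginal outcome distribution is therefore uniform on all $2^{w(P)}$ strings. The mutual information is
\[
H(\text{outcomes})-H(\text{outcomes}\mid s)=w(P)-r(P)=\nu(P).
\]
For complete $P$ this is $n-r(P)$. Thus any non-adaptive complete string learns exactly $\nu(P)$ bits, giving $\alpha_{\rm loc}\ge\max_{P\text{ complete}}\nu(P)$.

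\textbf{Step 2 (adaptivity does not help).} This is the main obstacle. I would argue by induction on the number of measured qubits. The key claim is that after measuring qubit $1$ with Pauli $P_1$ and getting outcome $o$, the post-measurement states still form a stabilizer basis on qubits $2,\dots,n$. The conditional distribution over the basis label is uniform over a subset, labelled by the remaining sign bits. These facts follow from the standard stabilizer update rule. One has to check two cases:

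\begin{itemize}
\item \emph{Deterministic outcome}, i.e.\ $P_1$ or its negative lies in $S$ up to the identity on the other qubits. Then $o$ reveals one sign bit, and the post-measurement states form a stabilizer basis of the reduced state.
\item \emph{Random outcome.} Then $o$ is independent of $s$, and the post-measurement states again form a stabilizer basis of one fixed reduced stabilizer state. That reduced state depends on $P_1$ but not on $o$ up to Pauli corrections.
\end{itemize}

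The information gained is then $I(s;o)+I(s;\text{rest}\mid o)$. The induction hypothesis bounds the second term by the best non-adaptive nullity on the reduced state. A choice function depending on $o$ is irrelevant, since the reduced bases for different $o$ differ only by Pauli unitaries. Rank functions are Pauli-invariant by Proposition~\ref{prop:unique_rank}. Any adaptive strategy can therefore be replaced by a fixed string with the same value.

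Finally, one must verify that the nullity of the concatenated string equals the sum of the contributions from each step. This uses $\nu(P)=\dim(S\cap\mathcal{P}_P)$ and tracking how $S\cap\mathcal{P}_P$ decomposes under the measurement update, which is the routine linear-algebra part over $\mathbb{F}_2$.

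Together the two steps give $\alpha_{\rm loc}=\max_{P\text{ complete}}\nu(P)$.
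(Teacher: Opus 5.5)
Your argument is correct, but it follows a different and more substantive route than the paper. The paper proves no operational converse at all. In the appendix it simply \emph{defines} $\alpha_{\rm loc}=\max_P\dim\mathcal{I}(\mathcal{S},P)=\max_P\nu(P)$ (Definition~\ref{corr:alpha_char}). This is motivated by Lemma~\ref{lemma:meas_stats}: for fixed $P$ the outcome is uniform on a coset of $\mathcal{I}(\mathcal{S},P)^\perp$, and the coset is fixed by the $\nu(P)$ independent phases. The restriction to complete strings is implicitly just monotonicity, $\mathcal{I}(\mathcal{S},P)\subseteq\mathcal{I}(\mathcal{S},P')$ whenever $P'$ extends $P$. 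Adaptivity and classical communication, although mentioned in the informal definition, are never analysed.

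Your Step~1 recasts this content as a mutual-information computation, and your Step~2 supplies the missing converse against adaptive strategies. The step you call routine is the identity $\nu_{\mathcal{S}}(P_1\otimes P')=\delta+\nu_{\rm red}(P')$ with $\delta\in\{0,1\}$. It is what closes the induction, so write it out:
\begin{itemize}
\item Every element of $\mathcal{I}(\mathcal{S},P_1\otimes P')$ has $I$ or $P_1$ on the first qubit.
\item In the deterministic case, $\mathcal{S}=\langle\pm P_1\rangle\times\mathcal{S}'$, and $\mathcal{I}$ splits accordingly.
\item In the random case, these elements lie in the subgroup $C\subset\mathcal{S}$ commuting with $P_1$. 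Deleting the first qubit is injective on $C$, because $\pm P_1\notin\mathcal{S}$. It maps $\mathcal{I}(\mathcal{S},P_1\otimes P')$ onto $\mathcal{I}(C',P')$, where $C'$ is the reduced group.
\end{itemize}
With this in hand, your induction gives an exact formula. An adaptive protocol learns the average of $\nu(P^{(b)})$ over the complete strings $P^{(b)}$ measured along its branches, so it can never beat the best fixed complete string.

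Two minor points. First, $S\cap\mathcal{P}_P$ must be taken up to phases, i.e.\ $\mathcal{S}\cap\langle\mathcal{C}_P,i\rangle$. Second, Pauli invariance of the rank function is the trivial direction (Pauli unitaries only flip stabilizer signs); it is not the content of Proposition~\ref{prop:unique_rank}, which proves the converse.

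The trade-off is this: the paper's treatment is short but essentially definitional. Yours turns the informal claim ``SQPMs and classical communication extract at most $\max_P\nu(P)$ bits'' into an actual theorem, at the cost of the post-measurement bookkeeping.
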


We make the following observation, which we believe to be of independent interest.

\begin{property}
    Rank functions of stabilizer states satisfy the axioms of the rank function of a \emph{multimatroid}. 
\end{property}
Multimatroids have been studied in the mathematical community~\cite{bouchet1997multimatroids, bouchet1998multimatroids, bouchet2001multimatroids, bouchet19984multimatroids, merino2025activities, brijder2015isotropic1, noble2025primer, brijder2014interlace}, and generalize the notion of \emph{matroids}~\cite{welsh2010matroid}, see the Appendix for further details.

\subsection{Circle stabilizer states}
Circle graphs are a well-studied class of graphs in the mathematical literature~\cite{kloks1993treewidth, davies2019circle}, where they are in particular relevant from the perspective of vertex-minors~\cite{geelen2023grid, campbell2026erdHos}. Due to this, circle graph states have recently received a lot of attention in the quantum literature~\cite{dahlberg2018transforming, harrison2025fermionic, bhatti2025distributing, hahn2026structure}.

Circle graphs are often defined as the intersection graphs of a finite number of chords in a circle~\cite{davies2019circle}. Instead, we characterize circle graph states as those graph states whose rank functions can be described in terms of an associated \emph{$4$-regular multigraph $F$}.  This characterization connects directly with local measurement statistics, generalizes naturally to circle stabilizer states (i.e.~stabilizer states locally equivalent to circle graph states), and allows us to prove statements about $d$ and $\alpha_{\rm loc}$ for circle stabilizer states by exploiting the structure of $4$-regular multigraphs.

\begin{definition}
A $4$-regular multigraph $F$ is a graph where parallel edges and self loops are allowed, and where each vertex has degree $4$.
\end{definition}

Note that for each vertex $v\in V$ of a $4$-regular multigraph $F$, there are three possible pairings of the four incident half-edges to $v$. We label the three possible choices arbitrarily by $X_v, Y_v, Z_v$. 

\begin{definition} [informal]
Let $F$ be a $4$-regular multigraph on vertex set $V$. For a subset $W\subseteq V$, a choice of pairing $P_v \in \lbrace X_v, Y_v, Z_v\rbrace$ for each vertex $v\in W$ is called a \emph{splitter} $P$ which we identify with a Pauli string $P\in \lbrace{I, X, Y, Z\rbrace}^{V}$ in the obvious manner. The detachment of $F$ with respect to a splitter $P$ is the graph $F||P$ obtained after replacing each vertex $v\in W$ by two vertices $v'$ and $v''$, such that $v'$ is incident to the two half-edges of one pair, and $v''$ is incident to the two remaining half-edges.
    \end{definition}
See Fig.~\ref{fig:detachments_main} for a depiction of the three detachments at a vertex $v$.
\begin{figure}[h!]
    \centering
    \begin{tikzpicture}[
    scale=0.67,
    every node/.style={font=\small},
    vtx/.style={circle, fill=black, inner sep=1.9pt},
    hed/.style={line width=0.9pt}
]

\begin{scope}[xshift=4.4cm, yshift=2.8cm]
\filldraw[fill=gray!20, draw=black] (0,0) circle (1.3cm);
    \coordinate (N) at (0,1.2);
    \coordinate (E) at (1.2,0);
    \coordinate (S) at (0,-1.2);
    \coordinate (W) at (-1.2,0);

    \node[vtx] (v) at (0,0) {};
    \draw[hed] (v)--(N);
    \draw[hed] (v)--(E);
    \draw[hed] (v)--(S);
    \draw[hed] (v)--(W);
    \node[scale=1.3] at (0.3,0.3) {$v$};
\end{scope}

\begin{scope}[xshift=0cm, yshift=-1cm]
\filldraw[fill=gray!20, draw=black] (0,0) circle (1.3cm);
    \coordinate (N) at (0,1.2);
    \coordinate (E) at (1.2,0);
    \coordinate (S) at (0,-1.2);
    \coordinate (W) at (-1.2,0);

    \node[vtx] (x1) at (0.24,0.24) {};
    \node[vtx] (x2) at (-0.24,-0.24) {};

    \draw[hed] plot[smooth, tension=0.9] coordinates {(N) (x1) (E)};
    \draw[hed] plot[smooth, tension=0.9] coordinates {(S) (x2) (W)};

    \node[scale=1.5] at (0,1.75) {$X_v$};
\end{scope}

\begin{scope}[xshift=4.4cm, yshift=-1.3cm]
\filldraw[fill=gray!20, draw=black] (0,0) circle (1.3cm);
    \coordinate (N) at (0,1.2);
    \coordinate (E) at (1.2,0);
    \coordinate (S) at (0,-1.2);
    \coordinate (W) at (-1.2,0);

    \node[vtx] (y1) at (-0.30,0.26) {};

    \node[vtx] (y2) at (0.30,-0.26) {};

    \draw[hed] plot[smooth, tension=0.9] coordinates {(N) (y1) (S)};
    \filldraw[fill=gray!20, draw=none] (-0.25,-0.2) circle (0.1cm);
    
    \draw[hed] plot[smooth, tension=0.9] coordinates {(W) (y2) (E)};

    \node[scale=1.5] at (0,1.75) {$Y_v$};
\end{scope}

\begin{scope}[xshift=8.8cm, yshift=-1cm]
\filldraw[fill=gray!20, draw=black] (0,0) circle (1.3cm);
    \coordinate (N) at (0,1.2);
    \coordinate (E) at (1.2,0);
    \coordinate (S) at (0,-1.2);
    \coordinate (W) at (-1.2,0);

    \node[vtx] (z1) at (-0.24,0.24) {};
    \node[vtx] (z2) at (0.24,-0.24) {};

    \draw[hed] plot[smooth, tension=0.9] coordinates {(N) (z1) (W)};
    \draw[hed] plot[smooth, tension=0.9] coordinates {(E) (z2) (S)};

    \node[scale=1.5] at (0,1.75) {$Z_v$};
\end{scope}

\end{tikzpicture}
\caption{The three detachments at a vertex $v$. The under/overcrossing in the $Y_v$ detachment is only present for visual clarity. We have suppressed the labels of $v'$ and $v''$. 
}
\label{fig:detachments_main}
\end{figure}

It is possible to associate a rank function to splitters on $4$-regular multigraphs.
\begin{definition}[informal]
The rank function associated with a $4$-regular multigraph $F$ on vertex set $V$ is a function $r_F:\lbrace{I, X, Y, Z\rbrace}^V\rightarrow\mathbb{N}_{\geq0}$ defined as
\begin{align}
r_F(P) = w(P) - c(F||P)+c(F) \ , \nonumber 
\end{align}
where $c(H)$ is the number of connected components of a graph $H$. 
\end{definition}
Rather surprisingly, Bouchet showed---using different terminology---that circle stabilizer states are exactly those states whose rank functions are described by rank functions of $4$-regular multigraphs~\cite{bouchet1988graphic, bouchet2001multimatroids}. We show an example of a $4$-regular multigraph corresponding to a 3-GHZ state in Fig.~\ref{fig:detachment_example}.

\begin{figure}[h!]
    \centering

\begin{tikzpicture}[
    scale=0.58,
    every node/.style={font=\small},
    vtx/.style={circle, fill=black, inner sep=1.9pt},
    hed/.style={line width=0.9pt, line cap=round, line join=round},
    arr/.style={-{Latex[length=2.7mm,width=2mm]}, line width=0.9pt}
]


\begin{scope}[xshift=4.4cm, yshift=3.2cm]
    \filldraw[fill=gray!20, draw=black] (0,0) circle (1.55cm);

    \coordinate (T) at (0,0.95);
    \coordinate (L) at (-0.95,-0.62);
    \coordinate (R) at (0.95,-0.62);

    \draw[hed] (T) .. controls (-0.70,0.75) and (-1.20,0.00) .. (L);
    \draw[hed] (T) .. controls (-0.18,0.42) and (-0.48,-0.20) .. (L);

    \draw[hed] (T) .. controls (0.70,0.75) and (1.20,0.00) .. (R);
    \draw[hed] (T) .. controls (0.18,0.42) and (0.48,-0.20) .. (R);

    \draw[hed] (L) .. controls (-0.42,-0.22) and (0.42,-0.22) .. (R);
    \draw[hed] (L) .. controls (-0.45,-0.98) and (0.45,-0.98) .. (R);

    \node[vtx] at (T) {};
    \node[vtx] at (L) {};
    \node[vtx] at (R) {};
\end{scope}





\begin{scope}[xshift=0cm, yshift=-0.4cm]
    \filldraw[fill=gray!20, draw=black] (0,0) circle (1.90cm);

    %
    \coordinate (topL) at (-0.18,0.84);
    \coordinate (topR) at ( 0.18,0.84);

    \foreach \ang in {0,120,240}{
        \begin{scope}[rotate=\ang]

            \coordinate (A) at (-0.18,0.84);
            \coordinate (B) at (-0.817,-0.264);

            \draw[hed]
                (A)
                .. controls (-0.72,0.82) and (-1.12,0.16) ..
                (B);

            \draw[hed]
                (A)
                .. controls (-0.31,0.46) and (-0.48,0.02) ..
                (B);

            \node[vtx] at (A) {};
            \node[vtx] at (B) {};

        \end{scope}
    }

    \node[scale=1.2] at ( 90:1.52) {$Z$};
    \node[scale=1.2] at (210:1.52) {$Z$};
    \node[scale=1.2] at (330:1.52) {$Z$};
\end{scope}


\begin{scope}[xshift=4.4cm, yshift=-0.7cm]
    \filldraw[fill=gray!20, draw=black] (0,0) circle (1.90cm);

    \coordinate (To) at (0,0.92);
    \coordinate (Lo) at (-0.80,-0.46);
    \coordinate (Ro) at (0.80,-0.46);

    \coordinate (Ti) at (0,0.42);
    \coordinate (Li) at (-0.36,-0.21);
    \coordinate (Ri) at (0.36,-0.21);

    \draw[hed] (To) .. controls (-0.55,0.74) and (-0.92,0.10) .. (Lo);
    \draw[hed] (Lo) .. controls (-0.25,-0.84) and (0.25,-0.84) .. (Ro);
    \draw[hed] (Ro) .. controls (0.92,0.10) and (0.55,0.74) .. (To);

    \draw[hed] (Ti) .. controls (-0.22,0.30) and (-0.44,-0.01) .. (Li);
    \draw[hed] (Li) .. controls (-0.10,-0.36) and (0.10,-0.36) .. (Ri);
    \draw[hed] (Ri) .. controls (0.44,-0.01) and (0.22,0.30) .. (Ti);

    \node[vtx] at (To) {};
    \node[vtx] at (Lo) {};
    \node[vtx] at (Ro) {};
    \node[vtx] at (Ti) {};
    \node[vtx] at (Li) {};
    \node[vtx] at (Ri) {};

    \node[scale=1.2] at ( 90:1.49) {$X$};
    \node[scale=1.2] at (210:1.49) {$X$};
    \node[scale=1.2] at (330:1.49) {$X$};
\end{scope}


\begin{scope}[xshift=8.8cm, yshift=-0.4cm]
    \filldraw[fill=gray!20, draw=black] (0,0) circle (1.90cm);

    \coordinate (Touter) at (0,0.92);
    \coordinate (Tinner) at (0,0.42);

    \coordinate (Ltop) at (-1.02,-0.37);
    \coordinate (Lbot) at (-0.82,-0.71);

    \coordinate (Rtop) at ( 1.02,-0.37);
    \coordinate (Rbot) at ( 0.82,-0.71);


    \draw[hed]
        (Touter)
        .. controls (-0.72,0.84) and (-1.17,0.23) ..
        (Ltop);

    \draw[hed]
        (Ltop)
        .. controls (-0.58,-0.02) and (-0.18,0.30) ..
        (Tinner);

    \draw[hed]
        (Tinner)
        .. controls (0.18,0.30) and (0.58,-0.02) ..
        (Rbot);

    \draw[hed]
        (Rbot)
        .. controls (0.26,-0.92) and (-0.26,-0.92) ..
        (Lbot);

    \draw[
        hed,
        preaction={draw=gray!20, line width=2.8pt}
    ]
        (Lbot)
        .. controls (0.02,-0.18) and (0.62,-0.06) ..
        (Rtop);

    \draw[hed]
        (Rtop)
        .. controls (1.17,0.23) and (0.72,0.84) ..
        (Touter);

    \node[vtx] at (Touter) {};
    \node[vtx] at (Tinner) {};
    \node[vtx] at (Ltop)   {};
    \node[vtx] at (Lbot)   {};
    \node[vtx] at (Rtop)   {};
    \node[vtx] at (Rbot)   {};

    \node[scale=1.2] at ( 90:1.55) {$X$};
    \node[scale=1.2] at (210:1.55) {$Z$};
    \node[scale=1.2] at (330:1.55) {$Y$};
\end{scope}

\end{tikzpicture}
\caption{A $4$-regular multigraph corresponding to the $3$-GHZ state in the top, and three possible detachments at the bottom. The bottom three multigraphs have $3$, $2$ and $1$ connected components, matching the expected evaluations of $r(ZZZ)=1$, $r(XXX)=2$ and $r(XYZ)=3$, respectively.
}
\label{fig:detachment_example}
\end{figure}

The connection between the rank function of a circle stabilizer state and a $4$-regular multigraph $F$ is convenient, since $d$ and $\alpha_{\rm loc}$ can be bounded/expressed in terms of the so-called girth and largest circuit decomposition of $F$, respectively. We show in Appendix \ref{sec:max_ent_in_circ} that these two quantities are necessarily constrained for $4$-regular multigraphs, leading to the following theorem.

\begin{theorem}\label{thm:main_circle_ent}
Circle stabilizer states satisfy
\begin{align}
d\left(\ket{\psi}\right)=&\,\,O\left(\log n\right),\nonumber \\
\alpha_{\rm loc}\left(\ket{\psi}\right) =&\,\,\Omega\left(\frac{n}{\log n }\right) .\nonumber 
\end{align}
\end{theorem}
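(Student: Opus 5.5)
We work directly with the $4$-regular multigraph $F$ on vertex set $V$, $|V|=n$, whose rank function $r_F$ equals the rank function of $\ket{\psi}$ (Bouchet's characterization). Since $\nu(P)=w(P)-r_F(P)=c(F||P)-c(F)$, both measures become statements about how many extra components a detachment can create. We may assume $F$ is connected: otherwise the state is a product over the components, and we apply the argument to a component (for $d$), or add the bounds over components (for $\alpha_{\rm loc}$, since $n_i/\log n_i$ summed over components is at least of order $n/\log n$).

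\textbf{Distance.} We take a shortest cycle $C$ in $F$, possibly a loop or a pair of parallel edges, of length $g$ and with $g$ distinct vertices. At each vertex $v\in C$ we choose the pairing $P_v$ that pairs the two half-edges of $C$ at $v$ with each other. Then $C$ becomes a separate closed component of $F||P$, while the rest of $F$ stays together, so $\nu(P)\ge 1$ and $w(P)=g$. From a Pauli string $P$ with $\nu(P)\ge1$ we obtain a nonidentity stabilizer supported on $\mathrm{supp}(P)$: positive nullity means the outcomes are correlated, so some product of the measured Paulis is a stabilizer. Hence $d\le g$. It remains to bound the girth. The multigraph $F$ has $n$ vertices and $2n$ edges, and its cycle rank is $n+1$. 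Suppressing nothing, the minimum degree is $4\ge 3$, so the standard Moore-type BFS argument applies: a breadth-first ball of radius $t<g/2$ grows like $3^t$, giving $g\le 2\log_3 n+O(1)$. Therefore $d=O(\log n)$.

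\textbf{Locally accessible information.} By the stated Proposition, $\alpha_{\rm loc}=\max_{P \text{ complete}} c(F||P)-1$. So we need a complete splitter whose detachment has $\Omega(n/\log n)$ components, that is, a decomposition of the edges of $F$ into many closed trails (a circuit partition). We use a greedy procedure. While the current graph contains a cycle, we take a shortest cycle $C$, split it off, and resolve each vertex of $C$ as above. After splitting, $C$ is its own component, and the remainder is a multigraph in which those $|C|$ vertices become degree-$2$ vertices; we suppress these. The remainder then has $n-|C|$ vertices of degree $4$ (suppressing degree-$2$ vertices does not change detachment components). The girth bound applies to this remainder, which again has minimum degree $\ge3$ after suppression, possibly with loops. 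So each step costs $O(\log n)$ vertices and yields one new component. This gives $\Omega(n/\log n)$ components once all vertices are used. Note that every vertex is split exactly once, so the final splitter is complete, and the last piece, a disjoint union of cycles, also counts as components.

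\textbf{Main obstacle.} The main obstacle is making the greedy step rigorous. We must check that after suppression the remainder really is a $4$-regular multigraph, including degenerate cases: loops, suppressed vertices creating loops without vertices, and closed components with no vertices. Each of these should be counted as a component rather than breaking the girth argument. I would handle this by working throughout with graphs whose vertices all have degree $4$, allowing vertexless circles as components, and applying the girth bound $g\le 2\log_3 m+O(1)$ to each nonempty part with $m\le n$ vertices. The same connectivity bookkeeping also justifies the claim that $d\le g$, which requires that the remainder after removing $C$ does not disconnect into extra pieces. This affects only whether $\nu\ge1$, which holds regardless.
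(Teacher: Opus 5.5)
Your proof is correct. The distance part is essentially the paper's argument: Lemma~\ref{lemma:girth_distance} plus a Moore bound. The paper quotes the average-degree bound of Alon--Hoory--Linial with $r=4$; your BFS estimate for minimum degree $4$ is the elementary special case.

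For $\alpha_{\rm loc}$ the idea is the same in spirit (find $\Omega(n/\log n)$ edge-disjoint circuits and complete them to a circuit partition), but the mechanism differs. The paper proves a standalone statement, Corollary~\ref{lem:many_cycles}: a maximal family of edge-disjoint cycles of length at most $10\log n$ has at least $n/(20\log n)$ members. Otherwise the leftover graph has more than $3n/2$ edges, hence average degree above $3$, and the \emph{irregular} Moore bound produces one more short cycle. The leftover even-degree graph is then split into cycles by Veblen's theorem.

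You instead peel off one shortest cycle at a time and suppress the resolved degree-$2$ vertices. This keeps the remainder $4$-regular and does not change the component count of the final detachment. As a result you only need the regular Moore bound, you build the complete splitter directly without Veblen, and you get a better constant (about $n/(2\log_3 n)$ circuits). What you pay is the bookkeeping you already flag. Also, your later circuits, lifted back to $F$, run through previously resolved vertices and need not be short in $F$. The paper's formulation gives short cycles in $F$ itself, which is the form it reuses for the generalized Hamming weights in Theorem~\ref{thm:gen_hamming_weight}.

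Your reduction to connected $F$ via a product decomposition is asserted without proof, but you do not need it. The paper simply cites Bouchet to take $c(F)=1$. In any case, a shortest cycle in a $4$-regular multigraph on $m$ vertices uses at most $m$ of its $2m$ edges, so every component of $F$ keeps at least one unused closed curve after its last greedy step. Hence $\nu_F(P)=c(F||P)-c(F)$ is at least the total number of greedy steps even when $F$ is disconnected.
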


\subsection{Geelen's weak structural conjecture}\label{sec:main_geelen}

In the previous section we showed that the entanglement in circle stabilizer states is limited, at least in the asymptotic sense. Here we argue that such statements hold true for all \emph{proper vertex-minor-closed families of stabilizer states}, conditioned on Geelen's weak structural conjecture and when restricting to states that are sufficiently \emph{rank-connected}.

Informally, Geelen's weak structural conjecture states that the entanglement of stabilizer states $\ket{\psi}$ in any \emph{proper vertex-minor-closed family} $\mathcal{F}$ is asymptotically \emph{close} to that of a subset of circle stabilizer states. Here, closeness is measured in terms of rank-$p$ perturbations (to be defined shortly). The set $\mathcal{F}$ being vertex-minor-closed means that for all $\ket{\psi}\in \mathcal{F}$, every vertex-minor $\ket{\phi}$ of $\ket{\psi}$ is in $\mathcal{F}$; proper means that $\mathcal{F}$ is a strict subset of the set of all stabilizer states~\cite{davies2025preparing, mccarty2021local}. As an example, for any set of states $\lbrace{\ket{\phi_i}\rbrace}$, the set of states not containing any of the $\ket{\phi_i}$ is a proper vertex-minor-closed family $\mathcal{F}$.

We now define the notion of a rank-$p$ perturbation~\cite{bourneuf2026combinatorial, mccarty2021local, campbell2026erdHos}.
\begin{definition}
    A \emph{rank-$p$ perturbation} of a graph $G$ (or graph state $\ket{G}$) is any graph whose adjacency matrix, up to diagonal elements, can be obtained by adding a symmetric matrix of rank at most $p$ to the adjacency matrix of $G$.
\end{definition}

Geelen's conjecture also requires the notions of $k$-rank-connectedness.

\begin{definition}
    A graph $G$ on vertex set $V$ is $k$-rank-connected if $\left|V\right|\geq 2k$ and $\textrm{cutrk}(X) \geq \min(\left|X\right|, \left|V\setminus X\right|, k)$ for all $X\subseteq V$.
\end{definition}

We are now ready to state Geelen's weak conjecture.
\begin{conjecture}[Geelen's weak vertex-minor structure conjecture]\label{conj:geelen_main_text}
For every class of stabilizer states $\mathcal{F}$ which is proper vertex-minor-closed, there exist $k, p\in \mathbb{N}$ such that each $\ket{\psi}\in \mathcal{F}$ that is $k$-rank-connected is single-qubit Clifford equivalent to a rank-$p$ perturbation of a circle graph state.
\end{conjecture}

We note that this conjecture is normally phrased purely in terms of graphs instead of stabilizer states; these statements are equivalent, however.

In Appendix \ref{sec:ent_under_rank_perturbations} we show that, for a fixed $p$, rank-$p$ perturbations can only change the $\alpha_{\rm loc}$ measure by a constant additive factor. A slightly more involved statement can be made for the distance. Combined with theorem \ref{thm:main_circle_ent}, our main result from theorem \ref{thm:main_result_informal} follows.

\section{Discussion}
We have shown (conditional on Geelen's weak structural conjecture) that excluding a vertex-minor necessarily limits entanglement in the asymptotic setting.
While our results hold in the case of sufficiently rank-connected states, we believe it is possible to relax this condition by using Geelen's full structural conjecture~\cite{mccarty2021local}. 

Our results rely on finding a direct connection between the probability distribution of measurement outcomes under single-qubit Pauli measurements with the rank functions of multimatroids. There is a rich literature on multimatroids~\cite{bouchet1997multimatroids, bouchet1998multimatroids, bouchet2001multimatroids, bouchet19984multimatroids, noble2025primer}, and in particular the subclass corresponding to stabilizer states~\cite{ bouchet1987isotropic, brijder2015isotropic1, bouchet1989connectivity, brijder2015isotropic, traldi2016isotropic}; further deepening this connection may provide a systematic way to translate results from multimatroid theory into multipartite stabilizer entanglement and vice versa.

Two natural generalizations present themselves. First, the notion of a vertex-minor can be extended beyond the stabilizer formalism. That is, a (not necessarily stabilizer) state $\ket{\psi}$ contains a state $\ket{\phi}$ if $\ket{\phi}$ can be reached through (stochastic) LOCC operations. Do similar statements as theorem \ref{thm:main_result_informal} hold as well in this setting? Similarly, even when restricting to stabilizer states, one can consider arbitray local unitaries, instead of local Cliffords. Do the statements of this manuscript qualitatively change when allowing for local unitaries? This connects closely with the failure of the LU=LC conjecture to hold~\cite{ji2007lu, claudet2024local, claudet2025local, claudet202627}.

Second, stabilizer states can be generalized to stabilizer codes~\cite{gottesman1997stabilizer, khesin2025universal,  goodenough2024bipartite}. As with stabilizer states, stabilizer codes have a notion of distance that quantifies their error correction capabilities. In practice, a linear distance is desired, so a similar limitation on the distance when forbidding a vertex-minor would impose strong no-go statements on large-scale quantum computation.

Finally, the proof techniques used in this work might provide a way to tackle the McCarty-Geelen conjecture, which states that measurement-based quantum computation can be efficiently classically simulated on states in proper vertex-minor-closed families~\cite{mccarty2021local,harrison2025fermionic}.

{\it Acknowledgments---}%
We thank Axel Dahlberg, Vicente Lenz, Vis Taraz and David Elkouss for discussions.
KG acknowledges the support from the Alexander von Humboldt Foundation. Figures were created with ChatGPT.

\clearpage
\newpage

\bibliography{references}

@inproceedings{goodenough2024bipartite,
  title={Bipartite entanglement of noisy stabilizer states through the lens of stabilizer codes},
  author={Goodenough, Kenneth and Sajjad, Aqil and Kaur, Eneet and Guha, Saikat and Towsley, Don},
  booktitle={2024 IEEE International Symposium on Information Theory (ISIT)},
  pages={545--550},
  year={2024},
  organization={IEEE}
}

@article{hein2004multiparty,
  title={Multiparty entanglement in graph states},
  author={Hein, Marc and Eisert, Jens and Briegel, Hans J},
  journal={Physical Review A—Atomic, Molecular, and Optical Physics},
  volume={69},
  number={6},
  pages={062311},
  year={2004},
  publisher={APS}
}

@article{goodenough2024near,
  title={Near-term n to k distillation protocols using graph codes},
  author={Goodenough, Kenneth and De Bone, Sebastian and Addala, Vaishnavi and Krastanov, Stefan and Jansen, Sarah and Gijswijt, Dion and Elkouss, David},
  journal={IEEE Journal on Selected Areas in Communications},
  volume={42},
  number={7},
  pages={1830--1849},
  year={2024},
  publisher={IEEE}
}

@article{dahlberg2022complexity,
  title={The complexity of the vertex-minor problem},
  author={Dahlberg, Axel and Helsen, Jonas and Wehner, Stephanie},
  journal={Information Processing Letters},
  volume={175},
  pages={106222},
  year={2022},
  publisher={Elsevier}
}

@article{dahlberg2018transforming,
  title={Transforming graph states using single-qubit operations},
  author={Dahlberg, Axel and Wehner, Stephanie},
  journal={Philosophical Transactions of the Royal Society A: Mathematical, Physical and Engineering Sciences},
  volume={376},
  number={2123},
  pages={20170325},
  year={2018},
  publisher={The Royal Society Publishing}
}

@article{bouchet1994circle,
  title={Circle graph obstructions},
  author={Bouchet, Andr{\'e}},
  journal={Journal of Combinatorial Theory, Series B},
  volume={60},
  number={1},
  pages={107--144},
  year={1994},
  publisher={Elsevier}
}

@article{brijder2015isotropic,
  title={Isotropic matroids {II}: Circle graphs},
  author={Brijder, Robert and Traldi, Lorenzo},
  journal={arXiv preprint arXiv:1504.04299},
  year={2015}
}

@article{alon2002moore,
  title={The {M}oore bound for irregular graphs},
  author={Alon, Noga and Hoory, Shlomo and Linial, Nathan},
  journal={Graphs and Combinatorics},
  volume={18},
  number={1},
  pages={53--57},
  year={2002},
  publisher={Springer}
}

@article{davies2025preparing,
  title={Preparing graph states forbidding a vertex-minor},
  author={Davies, James and Jena, Andrew},
  journal={arXiv preprint arXiv:2504.00291},
  year={2025}
}

@phdthesis{mccarty2021local,
  title={Local structure for vertex-minors},
  author={McCarty, Rose},
  year={2021},
  school={University of Waterloo}
}

@article{markham2007entanglement,
  title={Entanglement and local information access for graph states},
  author={Markham, Damian and Miyake, Akimasa and Virmani, Shashank},
  journal={New Journal of Physics},
  volume={9},
  number={6},
  pages={194},
  year={2007},
  publisher={IOP Publishing}
}

@article{weinbrenner2025quantifying,
  title={Quantifying entanglement from the geometric perspective},
  author={Weinbrenner, Lisa T and G{\"u}hne, Otfried},
  journal={arXiv preprint arXiv:2505.01394},
  year={2025}
}

@article{bouchet1998multimatroids,
  title={Multimatroids {II}. {Orthogonality}, minors and connectivity},
  author={Bouchet, Andr{\'e}},
  journal={the electronic journal of combinatorics},
  pages={R8--R8},
  year={1998}
}

@article{brijder2015isotropic1,
  title={Isotropic matroids {I}: {Multimatroids} and neighborhoods},
  author={Brijder, Robert and Traldi, Lorenzo},
  journal={arXiv preprint arXiv:1503.04406},
  year={2015}
}

@article{bouchet1997multimatroids,
  title={Multimatroids {I}. {Coverings} by independent sets},
  author={Bouchet, Andr{\'e}},
  journal={SIAM Journal on Discrete Mathematics},
  volume={10},
  number={4},
  pages={626--646},
  year={1997},
  publisher={SIAM}
}

@article{niekamp2012entropic,
  title={Entropic uncertainty relations and the stabilizer formalism},
  author={Niekamp, S{\"o}nke and Kleinmann, Matthias and G{\"u}hne, Otfried},
  journal={Journal of mathematical physics},
  volume={53},
  number={1},
  year={2012},
  publisher={AIP Publishing}
}

@inproceedings{javelle2012minimum,
  title={On the minimum degree up to local complementation: Bounds and complexity},
  author={Javelle, J{\'e}r{\^o}me and Mhalla, Mehdi and Perdrix, Simon},
  booktitle={International Workshop on Graph-Theoretic Concepts in Computer Science},
  pages={138--147},
  year={2012},
  organization={Springer}
}

@article{hein2006entanglement,
  title={Entanglement in graph states and its applications},
  author={Hein, Marc and D{\"u}r, Wolfgang and Eisert, Jens and Raussendorf, Robert and Nest, M and Briegel, H-J},
  journal={arXiv preprint quant-ph/0602096},
  year={2006}
}

@incollection{noble2025primer,
  title={A Primer on Delta-matroids and Multimatroids},
  author={Noble, Steven},
  booktitle={2023 MATRIX Annals},
  pages={391--420},
  year={2025},
  publisher={Springer}
}

@article{merino2025activities,
  title={An activities expansion of the transition polynomial of a multimatroid},
  author={Merino, Criel and Moffatt, Iain and Noble, Steven},
  journal={SIAM Journal on Discrete Mathematics},
  volume={39},
  number={2},
  pages={1372--1407},
  year={2025},
  publisher={SIAM}
}

@article{brijder2014interlace,
  title={Interlace polynomials for multimatroids and delta-matroids},
  author={Brijder, Robert and Hoogeboom, Hendrik Jan},
  journal={European Journal of Combinatorics},
  volume={40},
  pages={142--167},
  year={2014},
  publisher={Elsevier}
}

@article{harrison2025fermionic,
  title={Fermionic Insights into Measurement-Based Quantum Computation: Circle Graph States Are Not Universal Resources},
  author={Harrison, Brent and Iyer, Vishnu and Parekh, Ojas and Thompson, Kevin and Zhao, Andrew},
  journal={arXiv preprint arXiv:2510.05557},
  year={2025}
}

@article{geelen2023grid,
  title={The grid theorem for vertex-minors},
  author={Geelen, Jim and Kwon, O-joung and McCarty, Rose and Wollan, Paul},
  journal={Journal of Combinatorial Theory, Series B},
  volume={158},
  pages={93--116},
  year={2023},
  publisher={Elsevier}
}

@article{oum2017rank,
  title={Rank-width: Algorithmic and structural results},
  author={Oum, Sang-il},
  journal={Discrete Applied Mathematics},
  volume={231},
  pages={15--24},
  year={2017},
  publisher={Elsevier}
}

@book{gottesman1997stabilizer,
  title={Stabilizer codes and quantum error correction},
  author={Gottesman, Daniel},
  year={1997},
  publisher={California Institute of Technology}
}

@article{shettell2020graph,
  title={Graph states as a resource for quantum metrology},
  author={Shettell, Nathan and Markham, Damian},
  journal={Physical review letters},
  volume={124},
  number={11},
  pages={110502},
  year={2020},
  publisher={APS}
}

@article{markham2008graph,
  title={Graph states for quantum secret sharing},
  author={Markham, Damian and Sanders, Barry C},
  journal={Physical Review A—Atomic, Molecular, and Optical Physics},
  volume={78},
  number={4},
  pages={042309},
  year={2008},
  publisher={APS}
}

@article{claudet2024local,
  title={Local equivalence of stabilizer states: a graphical characterisation},
  author={Claudet, Nathan and Perdrix, Simon},
  journal={arXiv preprint arXiv:2409.20183},
  year={2024}
}

@article{dahlberg2020counting,
  title={Counting single-qubit {Clifford} equivalent graph states is \#{P}-complete},
  author={Dahlberg, Axel and Helsen, Jonas and Wehner, Stephanie},
  journal={Journal of Mathematical Physics},
  volume={61},
  number={2},
  year={2020},
  publisher={AIP Publishing}
}

@article{van2004graphical,
  title={Graphical description of the action of local {Clifford} transformations on graph states},
  author={Van den Nest, Maarten and Dehaene, Jeroen and De Moor, Bart},
  journal={Physical Review A},
  volume={69},
  number={2},
  pages={022316},
  year={2004},
  publisher={APS}
}

@article{van2007classical,
  title={Classical simulation versus universality in measurement-based quantum computation},
  author={Van den Nest, Maarten and D{\"u}r, Wolfgang and Vidal, Guifr{\'e} and Briegel, Hans J},
  journal={Physical Review A—Atomic, Molecular, and Optical Physics},
  volume={75},
  number={1},
  pages={012337},
  year={2007},
  publisher={APS}
}

@article{raissi2022general,
  title={General stabilizer approach for constructing highly entangled graph states},
  author={Raissi, Zahra and Burchardt, Adam and Barnes, Edwin},
  journal={Physical Review A},
  volume={106},
  number={6},
  pages={062424},
  year={2022},
  publisher={APS}
}

@article{briegel2001persistent,
  title={Persistent entanglement in arrays of interacting particles},
  author={Briegel, Hans J and Raussendorf, Robert},
  journal={Physical Review Letters},
  volume={86},
  number={5},
  pages={910},
  year={2001},
  publisher={APS}
}

@article{prielinger2025piecemaker,
  title={Piecemaker: a resource-efficient entanglement distribution protocol},
  author={Prielinger, Luise and Goodenough, Kenneth and Avis, Guus and Krastanov, Stefan and Towsley, Don and Vardoyan, Gayane},
  journal={arXiv preprint arXiv:2508.14737},
  year={2025}
}

@inproceedings{grassl2002graphs,
  title={Graphs, quadratic forms, and quantum codes},
  author={Grassl, Markus and Klappenecker, Andreas and Rotteler, Martin},
  booktitle={Proceedings IEEE International Symposium on Information Theory,},
  pages={45},
  year={2002},
  organization={IEEE}
}

@article{khesin2025universal,
  title={Universal graph representation of stabilizer codes},
  author={Khesin, Andrey Boris and Lu, Jonathan Z and Shor, Peter W},
  journal={PRX Quantum},
  volume={6},
  number={4},
  pages={040325},
  year={2025},
  publisher={APS}
}

@article{li2022photonic,
  title={Photonic resource state generation from a minimal number of quantum emitters},
  author={Li, Bikun and Economou, Sophia E and Barnes, Edwin},
  journal={npj Quantum Information},
  volume={8},
  number={1},
  pages={11},
  year={2022},
  publisher={Nature Publishing Group UK London},
nolink={}
}

@incollection {cattaneo2015mindeg,
    AUTHOR = {Cattan\'eo, David and Perdrix, Simon},
     TITLE = {Minimum degree up to local complementation: bounds,
              parameterized complexity, and exact algorithms},
 BOOKTITLE = {Algorithms and computation},
    SERIES = {Lecture Notes in Comput. Sci.},
    VOLUME = {9472},
     PAGES = {259--270},
 PUBLISHER = {Springer, Heidelberg},
      YEAR = {2015},
      ISBN = {978-3-662-48971-0; 978-3-662-48970-3},
   MRCLASS = {68R10 (68Q25)},
  MRNUMBER = {3489490},
       DOI = {10.1007/978-3-662-48971-0\_23},
       URL = {https://doi.org/10.1007/978-3-662-48971-0_23},
}

@article{bhatti2025distributing,
  title={Distributing graph states with a photon-weaving quantum server},
  author={Bhatti, Daniel and Goodenough, Kenneth},
  journal={arXiv preprint arXiv:2504.07410},
  year={2025}
}

@article{traldi2011binary,
  title={Binary nullity, {Euler} circuits and interlace polynomials},
  author={Traldi, Lorenzo},
  journal={European Journal of Combinatorics},
  volume={32},
  number={6},
  pages={944--950},
  year={2011},
  publisher={Elsevier}
}

@article{oum2023rank,
  title={Rank connectivity and pivot-minors of graphs},
  author={Oum, Sang-il},
  journal={European Journal of Combinatorics},
  volume={108},
  pages={103634},
  year={2023},
  publisher={Elsevier}
}

@article{abramsky2017complete,
  title={A complete characterization of all-versus-nothing arguments for stabilizer states},
  author={Abramsky, Samson and Barbosa, Rui Soares and Car{\`u}, Giovanni and Perdrix, Simon},
  journal={Philosophical Transactions of the Royal Society A: Mathematical, Physical and Engineering Sciences},
  volume={375},
  number={2106},
  pages={20160385},
  year={2017},
  publisher={The Royal Society Publishing}
}

@article{weinbrenner2026complete,
  title={Complete Hierarchies for the Geometric Measure of Entanglement},
  author={Weinbrenner, Lisa T and Rico, Albert and Goodenough, Kenneth and Yu, Xiao-Dong and G{\"u}hne, Otfried},
  journal={arXiv preprint arXiv:2601.23243},
  year={2026}
}

@article{bouchet2001multimatroids,
  title={Multimatroids {III}. {Tightness} and fundamental graphs},
  author={Bouchet, Andr{\'e}},
  journal={European Journal of Combinatorics},
  volume={22},
  number={5},
  pages={657--677},
  year={2001},
  publisher={Elsevier}
}

@article{de2011linearized,
  title={A linearized stabilizer formalism for systems of finite dimension},
  author={De Beaudrap, Niel},
  journal={arXiv preprint arXiv:1102.3354},
  year={2011}
}

@article{bouchet1993compatible,
  title={Compatible {Euler} tours and supplementary {Eulerian} vectors},
  author={Bouchet, Andr{\'e}},
  journal={European journal of combinatorics},
  volume={14},
  number={6},
  pages={513--520},
  year={1993},
  publisher={Elsevier}
}

@article{jackson1991supplementary,
  title={Supplementary {Eulerian} vectors in isotropic systems},
  author={Jackson, Bill},
  journal={Journal of Combinatorial Theory Series B},
  volume={53},
  number={1},
  pages={93--105},
  year={1991},
  publisher={Academic Press, Inc. Orlando, FL, USA}
}

@article{bouchet1987isotropic,
  title={Isotropic systems},
  author={Bouchet, Andr{\'e}},
  journal={European Journal of Combinatorics},
  volume={8},
  number={3},
  pages={231--244},
  year={1987},
  publisher={Elsevier}
}

@article{bouchet1988graphic,
  title={Graphic presentations of isotropic systems},
  author={Bouchet, Andr{\'e}},
  journal={Journal of Combinatorial Theory, Series B},
  volume={45},
  number={1},
  pages={58--76},
  year={1988},
  publisher={Elsevier}
}

@inproceedings{bouchet1989connectivity,
  title={Connectivity of isotropic systems},
  author={Bouchet, Andr{\'e}},
  booktitle={Proceedings of the third international conference on Combinatorial mathematics},
  pages={81--93},
  year={1989}
}

@article{aaronson2004improved,
  title={Improved simulation of stabilizer circuits},
  author={Aaronson, Scott and Gottesman, Daniel},
  journal={Physical Review A—Atomic, Molecular, and Optical Physics},
  volume={70},
  number={5},
  pages={052328},
  year={2004},
  publisher={APS}
}

@article{brandhofer2025hardware,
  title={Hardware-efficient preparation of architecture-specific graph states on near-term quantum computers},
  author={Brandhofer, Sebastian and Polian, Ilia and Barz, Stefanie and Bhatti, Daniel},
  journal={Scientific Reports},
  volume={15},
  number={1},
  pages={2095},
  year={2025},
  publisher={Nature Publishing Group UK London}
}

@article{bouchet19984multimatroids,
  title={Multimatroids {IV}. {Chain-group} representations},
  author={Bouchet, Andr{\'e}},
  journal={Linear algebra and its applications},
  volume={277},
  number={1-3},
  pages={271--289},
  year={1998},
  publisher={Elsevier}
}

@inproceedings{kloks1993treewidth,
  title={Treewidth of circle graphs},
  author={Kloks, Ton},
  booktitle={International Symposium on Algorithms and Computation},
  pages={108--117},
  year={1993},
  organization={Springer}
}

@article{davies2019circle,
  title={Circle graphs are quadratically $\chi $-bounded},
  author={Davies, James and McCarty, Rose},
  journal={arXiv preprint arXiv:1905.11578},
  year={2019}
}

@inproceedings{campbell2026erdHos,
  title={The Erd{\H{o}}s-P{\'o}sa property for circle graphs as vertex-minors},
  author={Campbell, Rutger and Gollin, J Pascal and Hatzel, Meike and Kwon, O-joung and McCarty, Rose and Oum, Sang-il and Wiederrecht, Sebastian},
  booktitle={Proceedings of the 2026 Annual ACM-SIAM Symposium on Discrete Algorithms (SODA)},
  pages={4930--4952},
  year={2026},
  organization={SIAM}
}

@article{hahn2026structure,
  title={The Structure of Circle Graph States},
  author={Hahn, Frederik and McCarty, Rose and Nautrup, Hendrik Poulsen and Claudet, Nathan},
  journal={arXiv preprint arXiv:2603.08847},
  year={2026}
}

@article{claudet2025local,
  title={Local Equivalences of Graph States},
  author={Claudet, Nathan},
  journal={arXiv preprint arXiv:2511.22271},
  year={2025}
}

@article{ji2007lu,
  title={The {LU-LC} conjecture is false},
  author={Ji, Zhengfeng and Chen, Jianxin and Wei, Zhaohui and Ying, Mingsheng},
  journal={arXiv preprint arXiv:0709.1266},
  year={2007}
}

@article{claudet202627,
  title={The 27-qubit Counterexample to the {LU-LC} Conjecture is Minimal},
  author={Claudet, Nathan},
  journal={arXiv preprint arXiv:2603.25219},
  year={2026}
}

@article{traldi2016isotropic,
  title={Isotropic matroids {III}: {C}onnectivity},
  author={Traldi, Lorenzo and Brijder, Robert},
  journal={arXiv preprint arXiv:1602.03899},
  year={2016}
}

@book{welsh2010matroid,
  title={Matroid theory},
  author={Welsh, Dominic JA},
  year={2010},
  publisher={Courier Corporation}
}

@article{brijder2018orienting,
  title={Orienting transversals and transition polynomials of multimatroids},
  author={Brijder, Robert},
  journal={Advances in Applied Mathematics},
  volume={94},
  pages={120--155},
  year={2018},
  publisher={Elsevier}
}

@article{ascoli2026almost,
  title={Almost all graphs are vertex-minor universal},
  author={Ascoli, Ruben and Frederickson, Bryce and Frederickson, Sarah and McFarland, Caleb and Post, Logan},
  journal={arXiv preprint arXiv:2602.09049},
  year={2026}
}

@article{bae2026vertex,
  title={Vertex-minor {R}amsey numbers: exact values and extremal structure},
  author={Bae, Ji Ho},
  journal={arXiv preprint arXiv:2604.13434},
  year={2026}
}

@book{graham1991ramsey,
  title={Ramsey theory},
  author={Graham, Ronald L and Rothschild, Bruce L and Spencer, Joel H},
  year={1991},
  publisher={John Wiley \& Sons}
}

@article{geelen2008some,
  title={Some open problems on excluding a uniform matroid},
  author={Geelen, Jim},
  journal={Advances in applied mathematics},
  volume={41},
  number={4},
  pages={628--637},
  year={2008},
  publisher={Elsevier}
}

@book{nielsen2010quantum,
  title={Quantum computation and quantum information},
  author={Nielsen, Michael A and Chuang, Isaac L},
  year={2010},
  publisher={Cambridge university press}
}

@article{eisert2008area,
  title={Area laws for the entanglement entropy-a review},
  author={Eisert, Jens and Cramer, Marcus and Plenio, Martin B},
  journal={arXiv preprint arXiv:0808.3773},
  year={2008}
}

@article{tzitrin2018local,
  title={Local equivalence of complete bipartite and repeater graph states},
  author={Tzitrin, Ilan},
  journal={Physical Review A},
  volume={98},
  number={3},
  pages={032305},
  year={2018},
  publisher={APS}
}

@article{cabello2009entanglement,
  title={Entanglement in eight-qubit graph states},
  author={Cabello, Ad{\'a}n and L{\'o}pez-Tarrida, Antonio J and Moreno, Pilar and Portillo, Jos{\'e} R},
  journal={Physics Letters A},
  volume={373},
  number={26},
  pages={2219--2225},
  year={2009},
  publisher={Elsevier}
}

@article{fattal2004entanglement,
  title={Entanglement in the stabilizer formalism},
  author={Fattal, David and Cubitt, Toby S and Yamamoto, Yoshihisa and Bravyi, Sergey and Chuang, Isaac L},
  journal={arXiv preprint quant-ph/0406168},
  year={2004}
}

@incollection{GreenbergerHorneZeilinger1989,
  title     = {Going Beyond Bell's Theorem},
  author    = {Greenberger, Daniel M. and Horne, Michael A. and Zeilinger, Anton},
  booktitle = {Bell's Theorem, Quantum Theory and Conceptions of the Universe},
  editor    = {Kafatos, Menas},
  publisher = {Kluwer Academic},
  address   = {Dordrecht},
  pages     = {69--72},
  year      = {1989}
}

@article{brijder2022characterization,
  title={A characterization of circle graphs in terms of total unimodularity},
  author={Brijder, Robert and Traldi, Lorenzo},
  journal={European Journal of Combinatorics},
  volume={102},
  pages={103455},
  year={2022},
  publisher={Elsevier}
}

@article{bourneuf2026combinatorial,
  title={A combinatorial framework for clustering graph states: Algorithms and hardness for rank-integrity},
  author={Bourneuf, Romain and Claudet, Nathan and Kim, Sang Yoon and McCarty, Rose and Sullivan, Blair D and Thomass{\'e}, St{\'e}phan},
  journal={arXiv preprint arXiv:2607.09469},
  year={2026}
}

@article{goodenough2026exact,
  title={Exact noise characterization of entanglement distribution in star networks},
  author={Goodenough, Kenneth and Chen, Xiaonan and Emonts, Patrick},
  journal={arXiv preprint arXiv:2606.07043},
  year={2026}
}

@article{wei1991generalized,
  title={Generalized Hamming weights for linear codes},
  author={Wei, Victor K},
  journal={IEEE Transactions on information theory},
  volume={37},
  number={5},
  pages={1412--1418},
  year={1991},
  publisher={IEEE}
}

\newpage

\onecolumngrid
\section*{APPENDIX}
\appendix


We sketch here an outline of each section.\\

\noindent \paragraph*{Section~\ref{sec:prelim} (Basics of the stabilizer formalism).}
We briefly review the stabilizer formalism, defining stabilizer groups, graph states, local equivalence and vertex-minors. We extend notions commonly defined only for graph states to stabilizer states.\\

\noindent \paragraph*{Section~\ref{sec:single_qubit} (Single-qubit measurements and rank/nullity functions).}
We develop a general framework for the statistics of single-qubit Pauli measurements on stabilizer states. In particular, we show that the so-called rank and nullity of a Pauli measurement determine the randomness/information in the outcomes of such a Pauli measurement. We then reformulate these notions in terms of isotropic matroids, which will be convenient for some of the proofs.\\

\noindent \paragraph*{Section~\ref{sec:ent_measures} (Entanglement measures).}
We show that the distance $d$ and the locally accessible information $\alpha_{\rm loc}$ are naturally captured by the rank/nullity. We interpret these quantities in terms of how much information can be extracted using single-qubit Pauli measurements: $d$ corresponds to the smallest support from which any information can be obtained, while $\alpha_{\rm loc}$ captures the maximum number of extractable bits.\\

\noindent \paragraph*{Section~\ref{sec:single_qubit_on_circle} (Single-qubit Pauli measurements on circle stabilizer states).}
We specialize the above measurement framework to circle graph states/circle stabilizer states. Bouchet \cite{bouchet2001multimatroids} defined a rank function on $4$-regular multigraphs, and showed---using different terminology--- that the rank function of a stabilizer state $\ket{\psi}$ can be written as that of the \emph{rank function of a $4$-regular multigraph} iff $\ket{\psi}$ is a circle stabilizer state~\cite{bouchet1988graphic}. The rank/nullity on $4$-regular multigraphs can be expressed in terms of detachments and circuit partitions of the underlying multigraph. This yields characterizations of the distance and the locally accessible information of circle stabilizer states in terms of shortest cycles and largest cycle partitions, respectively.\\

\noindent \paragraph*{Section~\ref{sec:max_ent_in_circ} (Maximum entanglement in circle stabilizer states).}
We use properties of $4$-regular multigraph to show that the entanglement of circle stabilizer states is constrained in the asymptotic limit. In particular, we show that the distance and locally accessible information satisfy $d=O(\log n)$ and $\alpha_{\rm loc}=\Omega\!\left(\frac{n}{\log n}\right)$ for circle stabilizer states.\\

\noindent \paragraph*{Section~\ref{sec:geelens_conj} (PVMC families are asymptotically weakly entangled).}
We first discuss the basics of vertex-minor theory, such as local equivalence and vertex-minors.
We then discuss the notions of proper vertex-minor-closed families, rank-$p$ perturbations and ($k$-)rank-connectivity, after which we have the background to state Geelen's weak vertex-minor structure conjecture.
This will allow us to `lift' the constraints on the asymptotic entanglement in circle stabilizer states to arbitrary proper vertex-minor-closed families (conditional on Geelen's weak structural conjecture and the states being sufficiently rank-connected).\\

\section{Basics of the stabilizer formalism}\label{sec:prelim}

In this section we collect some of the basic background and notation used throughout the paper. We begin with a brief review of the stabilizer formalism and graph states, after which we recall the cut-rank function and its connection to bipartite entanglement in graph/stabilizer states. For a more thorough introduction, see for example~\cite{gottesman1997stabilizer} or~\cite{nielsen2010quantum}. 

Afterwards, we give the relevant notions of local equivalence and vertex-minors. For posterity, we will extend here definitions from graphs/graph states to stabilizer states whenever possible.

\subsection{Stabilizer formalism}\label{sec:formalism}

The stabilizer formalism provides an efficient algebraic description of a physically relevant class of multipartite quantum states. Instead of specifying a state vector explicitly in a $2^n$-dimensional complex Hilbert space over $\mathbb{C}$, one characterizes the state as the common $+1$ eigenspace of certain abelian subgroups of the $n$-qubit Pauli group. This description is significantly more compact, especially when restricting to Clifford and Pauli measurements as we will do.

We work with $n$ qubits, whose state space is $(\mathbb{C}^2)^{\otimes n}$. On one qubit,
the Pauli matrices are
\begin{align*}
I=\begin{pmatrix}1&0\\0&1\end{pmatrix},\qquad
X=\begin{pmatrix}0&1\\1&0\end{pmatrix},\qquad
Y=\begin{pmatrix}0&-i\\i&0\end{pmatrix},\qquad
Z=\begin{pmatrix}1&0\\0&-1\end{pmatrix}.
\end{align*}
A \emph{Pauli string} is a tensor product
\begin{align*}
P=P_1\otimes \cdots \otimes P_n,
\qquad P_j\in \{I,X,Y,Z\}.
\end{align*}
That is, a Pauli string applies one of the four matrices $I,X,Y,Z$ independently on each
qubit. For the sake of brevity and convenience, we will often omit the tensor sign and write the Pauli string just as strings of elements in $\{I,X,Y,Z\}$. For instance, for $n=3$ the element $X\otimes X\otimes Z$, will be just denoted by $XXZ$. We shall denote the identity $III$ by $I^{\otimes 3}$, or just $I$ when $n$ is clear from the context. We start with the definition of a Pauli group.

\begin{definition}[Pauli group]
The $n$-qubit \emph{Pauli group} $\mathcal{P}_n$ is defined as the group consisting of all Pauli strings of length $n$ together with their phases, i.e.,
\begin{align}
\mathcal P_n= \left\{\omega P_1\otimes\cdots\otimes P_n:
\omega\in\{\pm 1,\pm i\},\; P_j\in\{I,X,Y,Z\} \right\}.
\end{align}
\end{definition}

As we are going to see in later sections, even though we refer to the phases of the elements of the group, our results do not require the tracking of these phases.

\begin{definition}[Support and weight]
    The \emph{support} $\textrm{supp}(P)$ of a Pauli string $P$ is the set of qubits it acts on non-trivially, i.e.,
    \begin{align*}
    \operatorname{supp}(P)=\{j\in[n]:P_j\neq I\}.
    \end{align*}
    The \emph{weight} $w(P)$ of a Pauli string $P$ equals $\left|\textrm{supp}(P)\right|$.
\end{definition}

Let $\mathcal{S}\subseteq \mathcal{P}_n$ be a subgroup of the Pauli group. We say that $\mathcal S$ is generated by $S_1,\ldots,S_r$ if every element of $\mathcal S$ is a product of these generators. The generators are called \emph{algebraically independent} if no non-empty product of distinct elements is equal to $I$.

\begin{definition}[Stabilizer states/groups]
A \emph{stabilizer group} $\mathcal{S}\subset \mathcal{P}_n$ is an abelian subgroup of the Pauli group not containing the element $-I$, and generated by $n$ algebraically independent commuting Pauli operators. A pure state $\ket{\psi}$ is a \emph{stabilizer state} if it is the common $+1$ eigenstate of all elements of $\mathcal{S}$, i.e.
\begin{align*}
S\ket{\psi}=\ket{\psi}
\end{align*}
for every $S\in \mathcal{S}$.
\end{definition}

We remark that the state $\ket{\psi}$ is unique up to global phases. Let us briefly justify this assertion for completeness. Let
$S_1,\ldots,S_n$ be independent commuting generators of $\mathcal S$. Since $-I$ is not
in $\mathcal S$, each $S_i$ satisfies $S_i^2=I$. Moreover, it holds that $S_i$ has eigenvalues $\pm 1$. This implies that the operator $\prod_{i=1}^n\frac{I+S_i}{2}$ is a projector onto the common $+1$ eigenspace of $\mathcal{S}$. This subspace has dimension one, because
of the fact that $\prod_{i=1}^n\frac{I+S_i}{2}$ has eigenvalues $0$ and $1$, and that $\mathrm{Tr}(P)=0$ for all Pauli strings, except for $\mathrm{Tr}(I)=2^n$. From this it follows that the dimension of the projector $\prod_{i=1}^n\frac{I+S_i}{2}$ equals $\mathrm{Tr}\left(\prod_{i=1}^n\frac{I+S_i}{2}\right)=1$. The state $\ket{\psi}$ is then the unique unit norm vector (up to global phase) in that space. Finally, it holds that
\begin{align}\label{eq:stab_dm}
\ket{\psi}\!\bra{\psi}
= \prod_{i=1}^n\frac{I+S_i}{2}=
\frac{1}{2^n}\sum_{S\in\mathcal S} S.
\end{align}

For a subgroup $\mathcal{S}$ of the Pauli group (which may not be a stabilizer group), we write $\textrm{dim}\left(\mathcal{S}\right)$ for the size of a minimum generating set. In the case that $\mathcal{S}$ is abelian and does not contain $-I$, one can verify that $\textrm{dim}\left(\mathcal{S}\right)=\log_2\left(\left|\mathcal{S}\right|\right)$.

\medskip
A particularly important subclass of stabilizer states is given by graph states.

\begin{definition}[Graph states]
Let $G=(V,E)$ be a simple graph on $n$ vertices. The \emph{graph state} $\ket{G}$ is the stabilizer state whose stabilizer group is generated by the algebraically independent and commuting \emph{vertex-stabilizers}
\[
S_v = X_v \prod_{u\in N_v} Z_u,
\qquad v\in V,
\]
where $N_v$ denotes the open neighborhood of $v$. Here $X_v$ means that $X$ acts on the qubit corresponding to $v$, and similarly for $Z_u$.
\end{definition}

Graph states are important since every stabilizer state is, up to local Clifford unitaries, a graph state~\cite{van2004graphical}. As such, to study entanglement in stabilizer states it suffices to understand entanglement in graph states. This 
is convenient because their stabilizers are defined directly from a graph, and many quantum-information quantities become graph-theoretic quantities.

We record a property of graph states. For the sake of convenience, we will denote $P_U:=\prod_{u\in U} P_u$ for a subset $U \subseteq V$ and operators $P_v$ acting on the vertex set $V$. Note that the stabilizers of a graph state can be written as
\begin{align}
S_U = \prod_{u \in U}S_u = \omega \prod_{u\in U} X_u \prod_{v\in \textrm{Odd}(U)}Z_v=\omega X_UZ_{\textrm{Odd}(U)} \label{eq:gen_graph_stab}
\end{align}
for $U\subseteq V$ and $\omega \in \{\pm1\}$, where $\textrm{Odd}(U)$ is the subset of vertices in $V$ that are adjacent to an odd number of vertices in $U$.

We now discuss entanglement across a bipartition. Let $W\subseteq V$ and write $\overline{W} = V\setminus W$. Bipartite entanglement in pure states is captured by the entanglement entropy~\cite{eisert2008area}. For a pure state $\ket{\psi}$, let $\rho_W:=\mathrm{Tr}_{\overline{W}}(\ket{\psi}\bra{\psi})$ be the \emph{marginal} on $W$, i.e., the partial trace over the complement $\overline{W}$ of the projector $\ket{\psi}\bra{\psi}$. The \emph{bipartite entanglement entropy} of $\ket{\psi}$ with respect to the bipartition $W\cup \overline{W}$ is given by the von Neumann entropy of $\rho_W$
\begin{align*}
    -\mathrm{Tr}(\rho_W\log \rho_W).
\end{align*}
Throughout this section we take logarithms in base $2$, so entropy is measured in bits. For graph states the entanglement entropy is also known as the cut-rank.

\begin{definition}[Cut-rank of graph states]\label{def:cutrank_graph_states}
    Let $W\subseteq V$ be a subset of vertices of a graph $G$. The cut-rank $\textrm{cutrk}_G(W)$ is the rank over $\mathbb{F}_2$ of the $W\times \overline{W} $ matrix $M$ with entry $M_{ij}=1$ if $i\in W$ is adjacent to $j \in \overline{W}$ in $G$ and zero otherwise. 
\end{definition}

We now show for completeness that the cut-rank of a subset of vertices of a graph $G$ is equal to the entanglement entropy of the associated bipartition and graph state $\ket{G}$. This is a standard fact (see e.g.~\cite{fattal2004entanglement}). Let $\ket{\psi}$ be a stabilizer state with stabilizer group $\mathcal{S}$. Recall by (\ref{eq:stab_dm}) that $\ket{\psi}\bra{\psi}=\frac{1}{2^n}\sum_{S\in \mathcal{S}}S$. Taking the marginal on $W$ corresponds to taking the partial trace over the complement $\overline{W}$ of $W$. Since taking the partial trace over a Pauli string $\mathrm{Tr}_{\overline{W}}\left(P\right)$ is non-zero if and only if $P$ has support disjoint from $\overline{W}$, the marginal is the subnormalized projector

 \begin{align*}
\rho_W=\textrm{Tr}_{\overline{W}}\left(\ket{\psi}\bra{\psi}\right) =& \frac{1}{2^n}\sum_{S\in \mathcal{S}}\textrm{Tr}_{\overline{W}}\left(S\right)
=\frac{1}{2^{\left|W\right|}}\sum_{S\in \mathcal{S}_W}S
=\frac{1}{2^{\left|W\right|-\textrm{dim}\left(\mathcal{S}_W\right)}}\prod_{i=1}^{\textrm{dim}\left(\mathcal{S}_W\right)}\frac{1+S_i}{2} \ ,
 \end{align*}
where $\mathcal{S}_W = \lbrace S\in \mathcal{S} \mid \textrm{supp}(S)\subseteq W\rbrace $, and $\lbrace{S_i\rbrace}_{i=1}^{\textrm{dim}\left(\mathcal{S}_W\right)}$ is a minimal generating set of $\mathcal{S}_W$. Therefore, the operator $2^{\left|W\right|-\textrm{dim}\left(\mathcal{S}_W\right)}\rho_W$ is a projector onto a subspace of dimension $2^{\left|W\right|-\textrm{dim}\left(\mathcal{S}_W\right)}$, and the entropy of $\rho_W$ is given by
\begin{align}
-\mathrm{Tr}\left(\rho_W \log \rho_W\right) = \left|W\right|-\textrm{dim}\left(\mathcal{S}_W\right)\ ,
\end{align}
since $\rho_W$ has $2^{\left|W\right|-\textrm{dim}\left(\mathcal{S}_W\right)}$ eigenvalues equal to $\frac{1}{2^{{\left|W\right|-\textrm{dim}\left(\mathcal{S}_W\right)}}}$, while the rest are zero.

Given a graph state $\ket{G}$ and a subset $W\subseteq V$, we now show that $\left|W\right| - \dim \left(\mathcal{S}_W\right)$ equals the rank of the matrix $M$ from definition (\ref{def:cutrank_graph_states}). From Eq.~\eqref{eq:gen_graph_stab} and the definition of $\mathcal{S}_W$, we have $S_U\in \mathcal{S}_W$ if and only if $U\cup \textrm{Odd}(U)\subseteq W$. In other words, for every $U\subseteq W$, the stabilizer $S_U \in \mathcal{S}_W$ when $M^T \mathbbm{1}_U =\mathbbm{1}_{\textrm{Odd}(U)\cap \overline{W}}= 0$, where $\mathbbm{1}_U \in \{0,1\}^W$ is the indicator function of $U$. This implies that $\dim\left(\mathcal{S}_W\right)$ equals the nullity of $M$, and thus $\left|W\right| - \dim \left(\mathcal{S}_W\right)$ equals the rank of $M$.

Note that only in the last step did we assume that the underlying state was a graph state. We can thus extend the definition of the cut-rank to arbitrary stabilizer states.

\begin{corollary}[Cut-rank of stabilizer states~\cite{fattal2004entanglement}]\label{corr:cutrk}
    The cut-rank of $W\subseteq V$ for a stabilizer state with stabilizer group $\mathcal{S}$ is equal to $\left|W\right|-\textrm{dim}\left(\mathcal{S}_W\right)$, where $\mathcal{S}_W = \lbrace S\in \mathcal{S} \mid \textrm{supp}(S)\subseteq W\rbrace $.
\end{corollary}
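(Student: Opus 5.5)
The plan is to identify both sides of the claimed equality with the entanglement entropy of the marginal $\rho_W$, and to use local Clifford invariance to move between general stabilizer states and graph states. The computation preceding the statement already shows that for \emph{any} stabilizer state with stabilizer group $\mathcal{S}$,
\begin{align*}
-\mathrm{Tr}(\rho_W\log\rho_W)=|W|-\dim(\mathcal{S}_W).
\end{align*}
Only the final identification of $|W|-\dim(\mathcal{S}_W)$ with the $\mathbb{F}_2$-rank of the adjacency submatrix $M$ used graph structure. So the natural reading is that the cut-rank of a stabilizer state is \emph{defined} as the bipartite entanglement entropy across $W\cup\overline{W}$. The corollary then asserts that this quantity equals $|W|-\dim(\mathcal{S}_W)$, and that it reduces to Definition~\ref{def:cutrank_graph_states} on graph states.

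First I restate the general marginal computation, using Eq.~\eqref{eq:stab_dm} and the fact that $\mathrm{Tr}_{\overline W}(P)=0$ unless $\mathrm{supp}(P)\subseteq W$. This shows that $2^{|W|-\dim(\mathcal{S}_W)}\rho_W$ is a projector of rank $2^{|W|-\dim(\mathcal{S}_W)}$. Hence the entropy equals $|W|-\dim(\mathcal{S}_W)$. The small point to check is that $\mathcal{S}_W$ is abelian and does not contain $-I$, so that $\dim(\mathcal{S}_W)=\log_2|\mathcal{S}_W|$ and the product formula for the projector applies. Second, for graph states I recall the argument already given: $S_U\in\mathcal{S}_W$ iff $U\subseteq W$ and $M^T\mathbbm{1}_U=0$. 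This gives $\dim(\mathcal{S}_W)=\dim\ker M^T=|W|-\mathrm{rank}(M)$, so the general formula agrees with Definition~\ref{def:cutrank_graph_states}.

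Third, for consistency with viewing the cut-rank as a local-equivalence invariant, I would check that the quantity is unchanged under local Cliffords $C=\bigotimes_v C_v$. Conjugation by $C$ maps $\mathcal{S}$ to a stabilizer group $C\mathcal{S}C^\dagger$. Because each $C_v$ maps non-identity single-qubit Paulis to non-identity Paulis, this conjugation preserves supports. It therefore restricts to a group isomorphism $\mathcal{S}_W\to(C\mathcal{S}C^\dagger)_W$, so $\dim(\mathcal{S}_W)$ is invariant. Combined with the result of~\cite{van2004graphical} that every stabilizer state is LC-equivalent to a graph state, the entropy of any stabilizer state equals the graph cut-rank of some LC-equivalent graph, with no dependence on the choice of graph.

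I expect no real obstacle. The only care needed is the phase and sign bookkeeping in the claim that $\mathcal{S}_W$ contains no $-I$ and consists of commuting involutions. Both follow directly from the corresponding properties of $\mathcal{S}$.
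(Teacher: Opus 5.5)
Your proposal is correct and follows the paper's argument. The paper likewise computes $\rho_W=\frac{1}{2^{|W|}}\sum_{S\in\mathcal{S}_W}S$ for an arbitrary stabilizer state to get entropy $|W|-\dim(\mathcal{S}_W)$, checks via $S_U\in\mathcal{S}_W \iff M^T\mathbbm{1}_U=0$ that this equals $\operatorname{rank}(M)$ for graph states, and then takes it as the extended definition of cut-rank. Your extra local-Clifford invariance check, $\mathcal{S}_W\cong(C\mathcal{S}C^\dagger)_W$ by support preservation, is a sound consistency remark that the paper leaves implicit.
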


\subsection{Vertex-minors}

Vertex-minor theory has been studied in the math literature, where it has close parallels with structural graph theory/graph minor theory~\cite{mccarty2021local}. Vertex-minor formalizes the effect of single-qubit Clifford operations and single-qubit Pauli measurements on the underlying graph. We first state the definitions when restricted to graphs (such as to make the connection with prior literature more clear), and then generalize this to the setting of arbitrary stabilizer states.

\begin{definition}[Local complementation/Local equivalence]
    For a vertex $v$ in $G$, a \emph{local complementation} of $G$ at a vertex $v$ replaces the subgraph induced by $N_v$ with its complement. That is, for each distinct pair of neighbors $u,w\in N_v$, remove the edge $uw$ if it exists, and otherwise add the edge $uw$ if $u$ is not adjacent to $w$. We say that two graphs $G$ and $H$ are \emph{locally equivalent}, and denote this by $G\sim H$, if $H$ can be obtained from $G$ by a sequence of local complementations.
\end{definition}

Local equivalence is important for quantum information, since local complementations correspond to local Clifford operations on graph states. Let us make this connection more precise. A single-qubit \emph{Clifford operator} is a unitary operator $U$ such that $U\mathcal{P}_1U^{\dagger}=\mathcal{P}_1$, i.e., it is an operator that maps Pauli operators to Pauli operators under conjugacy. A $n$-qubit \emph{local Clifford unitary} is the tensor product $U=U_1\otimes \ldots \otimes U_n$, where each $U_i$ is a single-qubit Clifford operator acting on the qubit $i$. A result in~\cite{van2004graphical} states that two graphs $G$ and $H$ are locally equivalent if and only if there is a local Clifford unitary $U$ such that $\ket{H}=U\ket{G}$.

\medskip

Vertex-minors extend the above notion by incorporating vertex deletions.

\begin{definition}[Vertex-minors]\label{def:vertex_minor}
A graph $H$ is a \emph{vertex-minor} of a graph $G$ if it can be obtained from $G$ by a sequence of local complementations and vertex deletions.
\end{definition}

Equivalently, $H$ is a vertex-minor of $G$ if there exists a graph $G'$ locally equivalent to $G$ such that $H$ is an induced subgraph of $G'$. We write $G \succ H$ in this case. Vertex-minors are central in quantum information since single-qubit Pauli measurements on a graph state $\ket{G}$ produce (up to single-qubit Cliffords) a graph state corresponding to a vertex-minor of $G$. Thus, the vertex-minor relation captures precisely the graph states that can be obtained from $\ket{G}$ by local Clifford operations and (destructive) single-qubit Pauli measurements~\cite{dahlberg2018transforming}. 

\medskip

We remark that the restriction to graph states in the above is unnecessary. Recall that every stabilizer state is equivalent up to single-qubit Cliffords to a graph state, i.e., for every stabilizer state $\ket{\psi}$, there exists a local Clifford unitary $U$ and a graph state $\ket{G}$ such that $\ket{\psi}=U\ket{G}$. Note that two stabilizer states $\ket{\psi_1}$ and $\ket{\psi_2}$ are single-qubit Clifford equivalent if and only if $\ket{\psi_1}$ and $\ket{\psi_2}$ are single-qubit Clifford equivalent to graph states $\ket{G_1}$ and $\ket{G_2}$ such that $G_1$ and $G_2$ are locally equivalent. Similarly, a stabilizer state $\ket{\phi}$ can be reached through single-qubit Clifford unitaries and single-qubit Pauli measurements from a stabilizer state $\ket{\psi}$, if and only if $\ket{\phi}$ and $\ket{\psi}$ are single-qubit Clifford equivalent to graph states $\ket{H}$ and $\ket{G}$ such that $G\succ H$. It is thus possible to extend the notions of local equivalence and vertex-minors to the setting of stabilizer states, for which we will use the same terminology.

\begin{definition}[Local equivalence of stabilizer states]\label{def:local_equivalence}
Two stabilizer states $\ket{\psi_1}$ and $\ket{\psi_2}$ are locally equivalent if they are equivalent up to single-qubit Clifford unitaries.
\end{definition}

\begin{definition}[Vertex-minors of stabilizer states]\label{def:vertex_minor_stab}
A stabilizer state $\ket{\phi}$ is a \emph{vertex-minor} of a stabilizer state $\ket{\psi}$ if $\ket{\phi}$ can be obtained from $\ket{\psi}$ by a sequence of (destructive) single-qubit Pauli measurements and single-qubit Clifford unitaries.
\end{definition}

\section{Single-qubit measurements and rank/nullity functions}\label{sec:single_qubit}

A number of entanglement measures of stabilizer states can be naturally rephrased in terms of the measurement statistics associated with single-qubit Pauli measurements. In this section we lay out the connection between these statistics and locally commutative subgroups, distance, independent sets up to local complementations/the geometric measure of entanglement, and isotropic matroids. Most importantly for this work, we will define the rank and nullity functions attached to stabilizer states.

\subsection{Example: single-qubit Pauli measurements on a small graph state}\label{sec:K3_example}
Here we provide an example of measurement statistics on a small stabilizer state. This will serve primarily as an introduction to such measurement statistics, which we formalize in the following section. 

\medskip

The running example here will be $\ket{K_3}$, the complete graph state on three qubits. This state is equivalent to a GHZ \cite{GreenbergerHorneZeilinger1989} state up to local Clifford unitaries. This state has stabilizer group
\begin{align*}
\mathcal{S}_{\ket{K_3}}&=\langle XZZ, ZXZ, ZZX\rangle \\&
=\{III, XZZ, ZXZ, ZZX, YYI, YIY, IYY, XXX\} ,
\end{align*}
where we ignored phases. We will sometimes refer to phases to aid in exposition in what follows. Our results do not require the tracking of these phases, however. Explicitly incorporating them into the theory we discuss here would be of interest for future work, especially since the understanding of such phases is of importance for contextuality (for example, see~\cite{abramsky2017complete}).

\medskip

Before continuing, we briefly recall measurement statistics of general Pauli operators on stabilizer states, see e.g.~\cite{gottesman1997stabilizer} for more information. Afterwards we will specialize to the case of (sequences of) single-qubit Pauli measurements. Recall that for a Pauli string $P$, the eigenvalues of $P$ are contained in the set \{+1,-1\} and consequently, the outcomes of measuring $P$ are labeled by the two projectors
\begin{align*}
    \Pi_0=\frac{I+P}{2} \qquad \text{and} \qquad \Pi_{1}=\frac{I-P}{2}.
\end{align*}
Therefore, measuring $P$ projects $\ket{\psi}$ into either the $+1$ or $-1$-eigenspace. We label the different outcomes with bits; that is, the bit $0$ is returned if the state is projected into the $+1$-eigenspace, and the bit $1$ is returned if the state is projected into the $-1$-eigenspace.  Moreover, the probability that $\ket{\psi}$ returns a bit $0$ is given by $\textrm{Tr}(\Pi_0\ket{\psi}\bra{\psi})$, while the probability that $\ket{\psi}$ returns a bit $1$ is given by  $\textrm{Tr}(\Pi_1\ket{\psi}\bra{\psi})$. As a consequence, by (\ref{eq:stab_dm}), we obtain that if $P \in \mathcal{S_{\ket{\psi}}}$ or $-P\in \mathcal{S_{\ket{\psi}}}$, then the measurement is deterministic and returns $0$ or $1$, respectively. Otherwise, if $\pm P\notin S_{\ket{\psi}}$, then the measurement returns one of the bits \{0,1\} uniformly at random, both with probability $1/2$.

\medskip

Let us now restrict to single-qubit Pauli measurements. A Pauli string $P=P_1P_2\cdots P_n$ determines a sequence of \emph{single-qubit Pauli measurements}, i.e., we measure the state $\ket{\psi}$ on all single-qubit operators $P_v$ separately (where we interpret $I_v$ as no measurement performed on qubit $v$). Note that this is not the same thing as measuring the corresponding multi-qubit Pauli string $P$ as described in the previous paragraph. Indeed, a multi-qubit Pauli measurement of $P$ produces only one outcome, namely a single bit in $\{0,1\}$. By contrast, measuring the single-qubit Paulis $P_1,\ldots,P_n$ separately produces one outcome for each non-identity factor. 
\medskip

If we measure $XZZ$ on $\ket{K_3}$ as a sequence of single-qubit measurements, we measure $X$ on the first qubit, $Z$ on the second qubit, and $Z$ on the third qubit. Since none of the single-qubit strings are in the stabilizer, the marginal distribution over each individual measurement is uniform. 
However, their product $XZZ$ \emph{is} in the stabilizer, so measuring $XZZ$ as a multi-qubit Pauli measurement would deterministically return an outcome of $0$. As we will see in Lemma \ref{lemma:meas_stats}, this constraint gives us four equally likely bitstrings for the outcomes of the individual measurements: $000, 011, 101, 110$. We will interpret Pauli strings in this work exclusively as sequences of single-qubit Pauli measurements (unless stated otherwise), and thus \emph{measuring in $P$}/\emph{measuring according to $P$} will mean to measure $P$ as a sequence of single-qubit Pauli measurements.

\medskip

For the sake of completeness, consider the case where $Z_1$ has been applied to $\ket{K_3}$. The resultant state will then have $-XZZ$ as a stabilizer. If now $XZZ$ was measured, the resultant outcomes would be one of $001, 010, 100, 111$, uniformly at random. As such, the choice of $\pm XZZ$ encodes one bit of information, which can be extracted through the single-qubit measurement $P=XZZ$ and the parties exchanging classical information.

\medskip

Note that $n$ bits of information can be encoded by choosing the phases for the stabilizers in any generating set. If multi-qubit Pauli measurements are allowed, all of the $n$ bits could be extracted. Due to the locality restriction however, not all bits can be extracted with a single measurement; this obstruction to locally extracting the globally encoded information is an indicator of entanglement, and it will form an important part of this work. 

\medskip

Continuing with our example, consider the single-qubit measurement of $P=YYY$ on $\ket{K_3}$. Since $YYI$, $YIY$ and $IYY$ are all in the stabilizer group $S_{\ket{K_3}}$, their multi-qubit outcomes are deterministic and correspond to the bit $0$. This implies that the single-qubit outcomes for the first, second and third qubit have the constraint that their pairwise sum should give the outcome $0$, which gives us that the outcome in every qubit is the same. Since none of the single-qubit Pauli string $Y$ is in the stabilizer, we obtain that the only outcomes are $000$ and $111$ and they are measured uniformly at random. Note, in this example, that it is not important whether $P=YYY$ itself is a stabilizer or not. Instead, it matters which substrings of $P$ are stabilizers (see Lemma \ref{lemma:substrings_kernel}). Indeed, measuring $P=XYY$ would yield $000$, $100$, $011$ and $111$ uniformly at random, and measuring $P=IYY$ would yield outcomes $00$ or $11$ on the second and third qubit. Measuring $P=YYY$ as a multi-qubit Pauli measurement would have instead yielded a single bit uniformly at random.

\medskip

Let us conclude with an extreme case, namely a single-qubit Pauli measurement such as $P=XYZ$. In this case, no (non-trivial) substring is in the stabilizer and consequently there are no constraints on the outcomes of the single-qubit measurements. Now all possible eight $3$-bit strings appear uniformly at random. At first sight, such measurements seem useless from a quantum perspective: no information is extracted from such a measurement. As we will show, such measurements correspond to \emph{Eulerian vectors}~\cite{bouchet1993compatible, jackson1991supplementary} which have been used in quantum information to prove the complexity of the vertex-minor problem~\cite{dahlberg2022complexity} and the complexity of counting the number of graphs that are locally equivalent to one another~\cite{dahlberg2020counting}. We note that we show this connection for posterity, and will not need it for our work.

\subsection{Single-qubit Pauli measurements and locally commutative subgroups}
To formalize the discussion from the previous section, we will need the following definitions. 

\begin{definition}[Local commutativity]
Two Pauli strings $P$ and $Q$ \emph{locally commute} if they commute entry-wise. Alternatively, two Pauli strings locally commute if, for each entry $v$, they either agree (i.e.~$P_v=Q_v$) or at least one of them is the identity (i.e.~$P_v=I$ or $Q_v=I$). A locally commutative group is a group whose elements pairwise locally commute.
\end{definition}

As an example, $P=IXZ$ and $Q=XIZ$ locally commute, while $P=IXZ$ and $Q=XYZ$ do not. Note also that locally commutativity is not the same as commutativity. For instance, the two strings $P=XZI$ and $Q=ZXX$ commute, while they do not locally commute.

\begin{definition}[Local groups]\label{def:local_group}
Given a Pauli string $P$, the \emph{local group} $\mathcal{C}_P$ is the group generated by the $n$ Pauli strings that agree with $P$ in one entry and are the identity everywhere else. 
\end{definition}

For example, if $P=XYZ$, then the local group is given by $\mathcal{C}_P = \langle XII, IYI, IIZ\rangle$. Note that $\mathcal{C}_P$ is by construction a locally commutative group. Moreover, in the case that the $n$-qubits are indexed by the vertex set $V$, we can write $\mathcal{C}_P=\{P_U\}_{U\subseteq V}$, where $P_U=\prod_{u\in U}P_u$.

\begin{definition}[Intersections between stabilizer groups and local groups]\label{def:intersections}
Let $\mathcal{S}$ be a stabilizer group and $\mathcal{C}_P$ a local group. We denote by $\mathcal{I}(\mathcal{S}, P):=\mathcal{S}\cap\langle \mathcal{C}_P, i\rangle$ the group obtained by the intersection of the stabilizer group and the group containing all the elements of $\mathcal{C}_P$ and its phases.
\end{definition}

Since $\mathcal{C}_P$ is locally commutative, $\mathcal{I}(\mathcal{S}, P)$ is a locally commutative subgroup of $\mathcal{S}$, and in fact every maximal locally commutative subgroup of a stabilizer group arises in this fashion. We remark that even though $\mathcal{I}(S,P)$ is allowed to have more than one phase of a substring of $P$, the fact that $\mathcal{S}$ is a stabilizer group guarantees that only one phase appears and that it can be easily computed as a functional over $\mathbb{F}_2$.

\begin{proposition}\label{prop:functional}
    Let $\mathcal{S}$ be a stabilizer group and $P$ a Pauli string on $n$ qubits. There exists $a \in \mathbb{F}_2^{\textrm{supp}(P)}$ such that
    \begin{align*}
        \mathcal{I}(S,P)\subseteq \{(-1)^{a\cdot \mathbbm{1}_U}P_U:\: U\subseteq \textrm{supp}(P)\}
    \end{align*}
\end{proposition}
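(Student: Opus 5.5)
The plan is to pass to the $\mathbb{F}_2$ linear structure of $\mathcal{I}(\mathcal{S},P)$ and extend a sign homomorphism to a linear functional. First I fix the canonical representatives $P_U=\prod_{u\in U}P_u$ for $U\subseteq \textrm{supp}(P)$. Because the single-qubit factors $P_u$ act on distinct qubits, they commute and square to $I$. Hence $P_UP_{U'}=P_{U\triangle U'}$ exactly, with no phase, and $U\mapsto P_U$ is a group isomorphism from $(\mathbb{F}_2^{\textrm{supp}(P)},+)$ onto $\mathcal{C}_P$. The group $\langle \mathcal{C}_P,i\rangle$ then consists of the elements $\omega P_U$ with $\omega\in\{\pm1,\pm i\}$.

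Next I constrain the phases that can occur in $\mathcal{I}(\mathcal{S},P)$. Every element $S$ of a stabilizer group satisfies $S^2=I$, since $-I\notin\mathcal{S}$. Also $(\omega P_U)^2=\omega^2 I$, so $\omega=\pm i$ would give $S^2=-I\in\mathcal{S}$, a contradiction. Therefore every element has the form $\pm P_U$. Moreover, if both $P_U$ and $-P_U$ were in $\mathcal{S}$, their product $-I$ would be too. So each $U$ carries at most one sign. Let $L=\{U\subseteq\textrm{supp}(P) : \pm P_U\in\mathcal{S}\}$ and define $\sigma(U)\in\mathbb{F}_2$ by requiring $(-1)^{\sigma(U)}P_U\in\mathcal{S}$.

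Then I show that $L$ is a linear subspace of $\mathbb{F}_2^{\textrm{supp}(P)}$ and that $\sigma$ is linear on $L$. Take $U,U'\in L$. Because $\mathcal{S}$ is a group, the product $(-1)^{\sigma(U)+\sigma(U')}P_UP_{U'}=(-1)^{\sigma(U)+\sigma(U')}P_{U\triangle U'}$ lies in $\mathcal{S}$. This gives $U\triangle U'\in L$, and by the uniqueness of signs, $\sigma(U\triangle U')=\sigma(U)+\sigma(U')$. The empty set lies in $L$ with $\sigma(\emptyset)=0$.

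Finally I extend $\sigma$ to all of $\mathbb{F}_2^{\textrm{supp}(P)}$. I choose a basis of $L$ and complete it to a basis of the full space. I then set the extended functional to agree with $\sigma$ on the basis of $L$ and to be $0$ on the added vectors. Every linear functional on $\mathbb{F}_2^{m}$ has the form $x\mapsto a\cdot x$, so this yields $a$ with $a\cdot\mathbbm{1}_U=\sigma(U)$ for all $U\in L$, which is exactly the claimed inclusion. The step that needs care is the second one: excluding the $\pm i$ phases and double signs, and checking that the products $P_UP_{U'}$ introduce no stray phases. Both reduce to the facts that $S^2=I$ for all $S\in\mathcal{S}$ and that the factors live on disjoint qubits.
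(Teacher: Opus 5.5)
Your proof is correct and follows essentially the same route as the paper. Both arguments show that each $P_U$ occurs in $\mathcal{I}(\mathcal{S},P)$ with at most one sign from $\{\pm 1\}$. Both then note that the occurring $U$ form a subspace of $\mathbb{F}_2^{\textrm{supp}(P)}$ on which the sign is a linear functional, and extend that functional to the whole space by completing a basis. Your explicit exclusion of the $\pm i$ phases via $S^2=I$, and your check that $P_UP_{U'}=P_{U\triangle U'}$ carries no stray phase, make precise two points that the paper's proof passes over quickly.
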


\begin{proof}
Fix a set $U\subseteq V$. Suppose that there are two phases of $P_U$ in the stabilizer $\mathcal{S}$, namely $\omega_1P_U$ and $\omega_2P_U$. Since $\mathcal{S}$ is a group, we have that $\omega_1\omega_2I=\omega_1P_U\omega_2P_U \in \mathcal{S}$. However, by definiton we have that $-I \notin \mathcal{S}$. This implies that $\omega_1=\omega_2\in \{-1,+1\}$. Let $\mathcal{U}$ be the family of subsets $U\subseteq \textrm{supp}(P)$ such that a phase of $P_U$ is in $\mathcal{I}(S,P)$ and let $E=\{\mathbbm{1}_U:\: U\in \mathcal{U}\}\subseteq \mathbb{F}_2^{\textrm{supp}(P)}$ be the corresponding set of indicator vectors. The fact that $\mathcal{S}$ is a group implies that $E$ is a subspace of $\mathbb{F}_2^{\textrm{supp}(P)}$. Moreover,
\begin{align*}
    \mathcal{I}(S,P)=\{(-1)^{\alpha(\mathbbm{1}_U)}P_U:\: U\in\mathcal{U}\}.
\end{align*}
for some functional $\alpha: E\rightarrow \mathbb{F}_2$. Therefore, by completing the bases, one can extend $\alpha$ to a functional over $\mathbb{F}_2^{\textrm{supp}(P)}$. Consequently, there exists $a \in \mathbb{F}_2^{\textrm{supp}(P)}$ such that $\alpha(\mathbbm{1}_U)=a\cdot \mathbbm{1}_U$.
\end{proof}

As a consequence of the last proposition, observe that any $\mathcal{I}\left(\mathcal{S}, P\right)$ is naturally isomorphic to a subspace of $\mathbb{F}_2^{\textrm{supp}\left(P\right)}$, where $\textrm{supp}(P)$ is the support of $P$; each Pauli string can be interpreted as a binary string, where the identity takes the role of $0$, and the unique non-identity Pauli (if it exists) in a given entry takes the role of $1$. We will from now on identify the groups $\mathcal{I}\left(\mathcal{S}, P\right)$ with their associated subspaces in $\mathbb{F}_2^{\textrm{supp}(P)}$ to simplify notation. For example, for the graph state $\ket{K_3}$, $\mathcal{I}\left(\mathcal{S}_{\ket{K_3}}, YYY\right) =  \langle YYI, IYY\rangle$ is isomorphic to $\langle 110, 011\rangle\subseteq \mathbb{F}_{2}^V$, while $\mathcal{I}\left(\mathcal{S}_{\ket{K_3}}, YYI\right) = \langle YYI \rangle $, which is isomorphic to $\langle 11\rangle \subseteq \mathbb{F}_2^{\lbrace{1, 2\rbrace}}$.

\medskip

The groups $\mathcal{I}(\mathcal{S}, P)$ determine the measurement statistics when measuring a stabilizer state with stabilizer group $\mathcal{S}$ in $P$, see also~\cite{de2011linearized}.

\begin{lemma}[Statistics of single-qubit Pauli measurements on stabilizer states]\label{lemma:meas_stats}
Let $\ket{\psi}$ be a stabilizer state with stabilizer group $\mathcal{S}\equiv\mathcal{S}_{\ket{\psi}}$ on a set of qubits $V$. Performing single-qubit Pauli measurements according to $P$ on $\ket{\psi}$ yields a measurement outcome uniformly at random from the elements of a coset $a+\mathcal{I}(\mathcal{S}, P)^\perp \in\mathbb{F}_2^{\textrm{supp}\left(P\right)}/\left(\mathcal{I}(\mathcal{S}, P)^\perp\right)$; which coset depends on the phases of the elements in $\mathcal{I}(\mathcal{S}, P)$. 
\end{lemma}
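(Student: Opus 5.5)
The plan is to compute the outcome distribution directly. We use the density matrix formula \eqref{eq:stab_dm} together with the projectors onto joint eigenspaces of the single-qubit measurements. Write $W=\textrm{supp}(P)$. The single-qubit measurements pairwise commute, so measuring them sequentially is equivalent to a single projective measurement. Its projector for the outcome string $b\in\mathbb{F}_2^W$ is
\begin{align*}
\Pi_b=\prod_{v\in W}\frac{I+(-1)^{b_v}P_v}{2}=\frac{1}{2^{|W|}}\sum_{U\subseteq W}(-1)^{b\cdot \mathbbm{1}_U}P_U .
\end{align*}
The probability of outcome $b$ is therefore
\begin{align*}
\Pr[b]=\mathrm{Tr}\left(\Pi_b\ket{\psi}\!\bra{\psi}\right)=\frac{1}{2^{n+|W|}}\sum_{U\subseteq W}\sum_{S\in\mathcal{S}}(-1)^{b\cdot\mathbbm{1}_U}\,\mathrm{Tr}(P_U S).
\end{align*}

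The key step is to evaluate this trace. Since $\mathrm{Tr}(Q)=0$ for every non-identity Pauli string up to phase, $\mathrm{Tr}(P_US)$ is nonzero only when $S=\omega P_U$ for some phase $\omega$. In that case $S\in\mathcal{I}(\mathcal{S},P)$, and the trace equals $\bar\omega\, 2^n$, up to conjugation conventions; since $S^2=I$, the phase $\omega$ is real. By Proposition~\ref{prop:functional}, there is a vector $a\in\mathbb{F}_2^{W}$ such that every $S\in\mathcal{I}(\mathcal{S},P)$ has the form $(-1)^{a\cdot\mathbbm{1}_U}P_U$, and for each $U$ at most one such $S$ exists. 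Identifying $\mathcal{I}(\mathcal{S},P)$ with its subspace $E\subseteq\mathbb{F}_2^W$, we get
\begin{align*}
\Pr[b]=\frac{1}{2^{|W|}}\sum_{u\in E}(-1)^{(a+b)\cdot u}.
\end{align*}

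The last step is the standard character sum over a subspace: $\sum_{u\in E}(-1)^{c\cdot u}$ equals $|E|$ if $c\in E^\perp$ and $0$ otherwise. Hence $\Pr[b]=|E|/2^{|W|}$ when $b\in a+E^\perp$, and $\Pr[b]=0$ otherwise. Because $|E^\perp|=2^{|W|}/|E|$, these probabilities sum to one, so the distribution is uniform on the coset $a+\mathcal{I}(\mathcal{S},P)^\perp$. Here $a$, and hence the coset, is determined by the phases of the elements of $\mathcal{I}(\mathcal{S},P)$: changing $a$ by an element of $E^\perp$ leaves the coset unchanged.

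The main obstacle is bookkeeping rather than substance. The phase $\omega$ must be real, which follows from $S^2=I$ together with $-I\notin\mathcal{S}$. We also have to make sure that elements of $\langle\mathcal{C}_P,i\rangle$ with phase $\pm i$ never occur in $\mathcal{S}$, and that no $S\in\mathcal{S}$ other than elements of $\mathcal{I}(\mathcal{S},P)$ contributes to the trace. Both points are handled by the uniqueness argument already given in the proof of Proposition~\ref{prop:functional}.
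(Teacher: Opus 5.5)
Your proposal is correct and follows the paper's proof essentially step by step. You expand the product projector $\Pi_b$, apply Born's rule with $\ket{\psi}\!\bra{\psi}=2^{-n}\sum_{S\in\mathcal{S}}S$, use Proposition~\ref{prop:functional} to collapse the double sum to a character sum over $\mathcal{I}(\mathcal{S},P)$, and evaluate that sum according to whether $a+b\in\mathcal{I}(\mathcal{S},P)^\perp$ (one harmless slip: $\mathrm{Tr}(P_U\,\omega P_U)=\omega 2^n$, not $\bar\omega 2^n$, which does not matter since you show $\omega=\pm1$).
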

\begin{proof}
For ease of notation, we set $W=\textrm{supp}\left(P\right)$. Recall that a single-qubit Pauli measurement on $v\in W$ is given by the projector $\Pi_{b(v)}=\frac{I+(-1)^{b(v)} P_v}{2}$, where $P_v\in \lbrace{X, Y, Z\rbrace}$ and $b(v)\in \{0,1\}$ is the bit observed after the measurement. Born's rule~\cite{nielsen2010quantum} states that the probability of observing a bitstring $b\in \lbrace{0, 1\rbrace}^{W}$ is given by $\mathrm{Tr}\left[\Pi_b\ket{\psi}\bra{\psi}\right]$, where $\Pi_b=\prod_{v\in W}\Pi_{b(v)}$. A computation shows that
\begin{align*}
    \Pi_b&=\prod_{v\in W} \Pi_{b(v)}= \prod_{\mathclap{v\in W}}~\frac{1+(-1)^{b(v)}P_v}{2}=\frac{1}{2^{w(P)}}\sum_{\mathclap{U\subseteq W}}~\prod_{u\in U}\left(-1\right)^{b(u)}P_u =\frac{1}{2^{w(P)}}~~~\sum_{\mathclap{U\subseteq W}}\left(-1\right)^{b\cdot \mathbbm{1}_U}P_U \ ,
\end{align*}
where we recall that $w(P)$ is the weight of $P$, we set $P_U = \prod_{u\in U}P_u$, and $b\cdot \mathbbm{1}_U=\sum_{v\in W}b(v)\cdot \mathbbm{1}_U(v)$ is the standard inner product over $\mathbb{F}_2$. Applying Born's rule together with (\ref{eq:stab_dm}), we find
\begin{align*}
\mathrm{Tr}\left[\Pi_b\ket{\psi}\bra{\psi}\right] &= \frac{1}{2^{n+w(P)}}\sum_{S\in \mathcal{S}}~\sum_{\mathclap{U\subseteq W}}~\left(-1\right)^{b\cdot \mathbbm{1}_U}\, \mathrm{Tr}\left[S\, P_U\right].
\end{align*}
By using Proposition \ref{prop:functional} together with the fact that $\mathrm{Tr}\left[S\,P_U\right] = \omega 2^n$ if $S=\omega P_U$ for $\omega\in \{\pm1, \pm i\}$, and \(\mathrm{Tr}\left[S \, P_U\right]=0\) otherwise, we have
\begin{align*}
\mathrm{Tr}\left[\Pi_b\ket{\psi}\bra{\psi}\right]= \frac{1}{2^{w(P)}} ~~~\sum_{\mathclap{\mathbbm{1}_U\in \mathcal{I}(\mathcal{S}, P)}} ~~\left(-1\right)^{b \cdot \mathbbm{1}_U+a \cdot \mathbbm{1}_U}
= \frac{1}{2^{w(P)}} ~~~\sum_{\mathclap{\mathbbm{1}_U\in \mathcal{I}(\mathcal{S}, P)}}~~ \left(-1\right)^{c\cdot \mathbbm{1}_U}\ ,
\end{align*}
for some $c\in \mathbb{F}_2^W$.

\medskip

If $c\in \mathcal{I}\left(\mathcal{S}, P\right)^\perp$, then every term in the sum equals $1$, so the sum is $\left|\mathcal{I}\left(\mathcal{S}, P\right)\right|$. If $c\notin \mathcal{I}\left(\mathcal{S}, P\right)^\perp$, choose $\mathbbm{1}_{U_0}\in \mathcal{I}\left(\mathcal{S}, P\right)$ with $c\cdot \mathbbm{1}_{U_0}=1$. A calculation gives
\begin{align*}
    \sum_{\mathbbm{1}_U\in \mathcal{I}(S,P)}(-1)^{c\cdot \mathbbm{1}_U}=\frac{1}{2}\sum_{\mathbbm{1}_U\in \mathcal{I}(S,P)}(-1)^{c\cdot \mathbbm{1}_U}+(-1)^{c\cdot (\mathbbm{1}_U+\mathbbm{1}_{U_0})}=0.
\end{align*}
Therefore
\begin{align}
\mathrm{Tr}\left[\Pi_b\ket{\psi}\bra{\psi}\right] =
\begin{cases}
|\mathcal{I}\left(\mathcal{S}, P\right)|/2^{w(P)}, & b\in a+\mathcal{I}\left(\mathcal{S}, P\right)^\perp,\\
0, & \text{otherwise},
\end{cases}
\label{eq:rank_calc}
\end{align}
which shows that the outcome is uniformly distributed on the coset $a+\mathcal{I}\left(\mathcal{S}, P\right)^\perp$.
\end{proof}

Note that in the above lemma $\mathcal{I}(\mathcal{S},
P)^\perp$ is the dual of $\mathcal{I}(\mathcal{S}, P)$ when interpreted as a subspace of $\mathbb{F}_2^{\textrm{supp}(P)}$. Furthermore, depending on the choice of phases for the stabilizer generators, any coset in $\mathbb{F}_2^{\textrm{supp}(P)}/\left(\mathcal{I}(\mathcal{S}, P)^\perp\right)$ can appear as the support of the associated probability distribution. 

 \medskip

Lemma \ref{lemma:meas_stats} shows that single-qubit Pauli measurements produce an outcome uniformly distributed over a coset of $\mathcal{I}(\mathcal{S},P)^\perp$. Accordingly, $\dim(\mathcal{I}(\mathcal{S},P)^\perp)$ measures the amount of randomness in the outcome: it counts the number of independent random bits produced by the measurement. Dually, $\dim(\mathcal{I}(\mathcal{S},P))$ measures the number of bits learned when measuring $P$,~i.e.~the number of independent phases of the stabilizers. This leads to the following definitions.

\begin{definition}[Rank and nullity of Pauli measurements]\label{def:rank_nullity_meas}
Let $\ket{\psi}$ be a stabilizer state with stabilizer group $\mathcal{S}$ on $n$ qubits, and recall that $w\left(P\right)$ denotes the number of non-identity elements in $P$. The rank and nullity of a single-qubit Pauli measurement $P$ on $\ket{\psi}$ are defined as

\begin{align*}
r(P)\equiv\, & \dim\left(\mathcal{I}\left(\mathcal{S}, P\right)^{\perp}\right) = w(P) - \dim\left(\mathcal{I}\left(\mathcal{S}, P\right)\right) \ , \textrm{ and}\\
\nu(P)\equiv\, 
&w\left(P\right)-r(P)= \dim\left(\mathcal{I}\left(\mathcal{S}, P\right)\right)
\end{align*}
respectively. To simplify notation we do not specify the dependence of $r$ and $\nu$ on $\ket{\psi}$.
\end{definition}

The rank $r(P)$ of a Pauli string $P$ quantifies the randomness of the outcomes obtained by measuring the single-qubit measurements specified by $P$. The possible outcome strings form a coset of size $2^{r(P)}$, on which the outcome distribution is uniform; equivalently, $r(P)$ is the entropy of the underlying distribution. Conversely, the nullity $\nu(P)= w(P) -r(P)$ quantifies how `non-random' the outcomes are. Here $w(P)$ is the largest entropy that in principle could be observed, since each of the $w(P)$ binary measurements could---in principle---yield perfectly random binary outcomes. 

\medskip
Let us consider the $\ket{K_3}$ example from the previous subsection with the above framework in mind. For $P=XZZ$, $\mathcal{I}(\mathcal{S}, P)=\lbrace{III, XZZ\rbrace}\simeq \lbrace{000, 111\rbrace}$, with associated dual $\lbrace{000, 011, 101, 110\rbrace}$. We see that this corresponds to the coset in $\mathbb{F}_2^V/\left(\mathcal{I}(\mathcal{S}, P)^\perp\right)$ containing the identity, since $XZZ$ is a $+1$ stabilizer. When $XZZ$ is a $-1$ stabilizer, the outcomes correspond to the coset $001+\lbrace{000, 011, 101, 110\rbrace} = \lbrace{001, 010, 100, 111\rbrace}$. We see that $r(P) = 2$ and $\nu(P) = 1$. Similarly, $r(YYY) = 1$ and $\nu(YYY) = 2$, with $\mathcal{I}(\mathcal{S}, YYY)^\perp = \lbrace{000, 111\rbrace}$, $r(XYY) = 2$, with $\mathcal{I}(\mathcal{S}, XYY)^\perp = \lbrace{000, 100, 011, 111\rbrace}$, and $r(XYZ) = 3$ and $\nu(XYZ) = 0$ with $\mathcal{I}(\mathcal{S}, XYZ)^\perp = \lbrace{000, 001, 010, 011, 100, 101, 110, 111\rbrace}$. 

\medskip

We finish this subsection by noting that the above rank and nullity functions determine a stabilizer state up to Pauli unitaries.

\begin{proposition}\label{prop:unique_rank}
Let $\ket{\psi}$ and $\ket{\phi}$ be stabilizer states on the same vertex set
$V$. If $r_{\ket{\psi}}(P)=r_{\ket{\phi}}(P)$
for every Pauli string $P$, then $\ket{\psi}$ and
$\ket{\phi}$ are equivalent up to Pauli unitaries.
\end{proposition}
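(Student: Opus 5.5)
The plan is to show that the rank function determines the stabilizer group up to signs. Then we show that two stabilizer states with the same unsigned stabilizer group differ by a Pauli unitary.

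\emph{Step 1: ranks detect unsigned stabilizers.} Fix a Pauli string $P$ (no phase). By Definition~\ref{def:rank_nullity_meas}, $\nu(P)=\dim \mathcal{I}(\mathcal{S},P)$. By Proposition~\ref{prop:functional}, $\mathcal{I}(\mathcal{S},P)$ is identified with a subspace of $\mathbb{F}_2^{\mathrm{supp}(P)}$. The element $\mathbbm{1}_{\mathrm{supp}(P)}$ lies in this subspace iff $\pm P\in\mathcal{S}$. To detect this from ranks alone, I use the subspaces for all substrings. For $Q\le P$ (that is, $Q$ agrees with $P$ on its support), $\mathcal{I}(\mathcal{S},Q)$ is exactly the set of elements of $\mathcal{I}(\mathcal{S},P)$ supported in $\mathrm{supp}(Q)$. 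Hence, by inclusion--exclusion over the lattice of supports, the number $N(P)$ of elements of $\mathcal{I}(\mathcal{S},P)$ with support exactly $\mathrm{supp}(P)$ satisfies
\begin{align*}
N(P)=\sum_{Q\le P}(-1)^{w(P)-w(Q)}\,2^{\nu(Q)} .
\end{align*}
Each support contributes at most one element, because at most one phase of $P_U$ lies in $\mathcal{S}$. Therefore $N(P)\in\{0,1\}$, and $N(P)=1$ iff $\pm P\in\mathcal{S}$. Since $\nu(Q)=w(Q)-r(Q)$, the quantity $N(P)$ is a function of the rank function alone. It follows that $\ket{\psi}$ and $\ket{\phi}$ have the same stabilizer group modulo signs, $\mathcal{S}_\psi/\{\pm\}=\mathcal{S}_\phi/\{\pm\}$.

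\emph{Step 2: equal unsigned groups imply Pauli equivalence.} Choose generators $S_1,\dots,S_n$ of $\mathcal{S}_\psi$. Then $\mathcal{S}_\phi$ is generated by $(-1)^{c_i}S_i$ for some $c\in\mathbb{F}_2^n$. This uses that $\pm S_i\in\mathcal{S}_\phi$, and that signs multiply consistently because the map sending a group element to its sign is a homomorphism once the unsigned products are fixed.

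Identify Pauli strings with vectors in $\mathbb{F}_2^{2n}$ under the symplectic form. The $S_i$ are linearly independent there, so I can find a Pauli string $T$ with $T S_i T^\dagger=(-1)^{c_i}S_i$ for all $i$: this is a linear system $\langle t,s_i\rangle=c_i$, solvable because the symplectic form is nondegenerate. Then $T\ket{\psi}$ is stabilized by $\mathcal{S}_\phi$. By uniqueness of the stabilized state, $T\ket{\psi}=\ket{\phi}$ up to a global phase.

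\emph{Main obstacle.} I expect the main obstacle to be Step 1, specifically justifying the restriction identity $\mathcal{I}(\mathcal{S},Q)=\{u\in\mathcal{I}(\mathcal{S},P):\mathrm{supp}(u)\subseteq\mathrm{supp}(Q)\}$. This holds because local groups of substrings are subgroups of $\mathcal{C}_P$, but it needs a careful check. A second point to check is that the Möbius inversion counts group elements, not dimensions. A simpler alternative for Step 1 would be to avoid Möbius inversion: use $\pm P\in\mathcal{S}$ iff $\nu(P)>\max_{Q<P,\,w(Q)=w(P)-1}\nu(Q)$ for every such $Q$, but this is false in general, so I would use the counting formula.
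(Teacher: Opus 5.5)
Your proposal is correct and follows essentially the same route as the paper. The paper likewise recovers the unsigned stabilizer group from the count $N(P)=\sum_{W\subseteq \operatorname{supp}(P)}(-1)^{|\operatorname{supp}(P)|-|W|}2^{\nu(P_W)}\in\{0,1\}$ and then fixes the signs with Pauli operators; the only difference is that where the paper just cites that generator phases can be changed freely by Paulis, your Step 2 gives an explicit symplectic argument (solving $\langle t,s_i\rangle=c_i$ via nondegeneracy), which validly fills in that step.
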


\begin{proof}
Let $\overline{\mathcal S}_{\ket{\psi}}$ be the stabilizer group of $\ket{\psi}$ after forgetting phases. Thus $\overline{\mathcal S}_{\ket{\psi}}$
is a subgroup of $\{I,X,Y,Z\}^V$, and for each
$S\in\overline{\mathcal S}_{\ket{\psi}}$ exactly one of $\pm S$ belongs to the stabilizer group $\mathcal S_{\ket{\psi}}$. We will show that the nullity function determines $\overline{\mathcal S}_{\ket{\psi}}$. This is enough, since $r_{\ket{\psi}}(P)=w(P)-\nu_{\ket{\psi}}(P)$ implies that the nullity function determines the rank function for every Pauli string $P$.

\medskip

Let $P$ be a Pauli string and $U=\operatorname{supp}(P)$ its support. We claim that $\nu_{\ket{\psi}}$ determines the membership of $P$ in $\overline{\mathcal S}_{\ket{\psi}}$. Indeed, since $\nu_{\ket{\psi}}(P)=\dim \mathcal I(\mathcal S_{\ket{\psi}},P)$, we have that $2^{\nu_{\ket{\psi}}(P)}$ counts the number of elements in $\mathcal I(\mathcal S_{\ket{\psi}},P)$. Hence, by an Inclusion-Exclusion argument we have that the number of times that $P$ appears in $\overline{\mathcal S}_{\ket{\psi}}$ is given by
\begin{align*}
    N_{\ket{\psi}}(P)=\sum_{W\subseteq U}
(-1)^{|U|-|W|}
2^{\nu_{\ket{\psi}}(P_W)}.
\end{align*}
That is, $N_{\ket{\psi}}(P)=1$ if $P\in \overline{\mathcal S}_{\ket{\psi}}$ and $N_{\ket{\psi}}(P)=0$ if $P\notin \overline{\mathcal S}_{\ket{\psi}}$. Since the quantity $N_{\ket{\psi}}(P)$ only depends on the value of $\nu_{\ket{\psi}}$ on $P$ and on its substrings, the claim follows.

\medskip

In particular, the above claim implies that if $\ket{\psi}$ and $\ket{\phi}$ are stabilizer states such that $r_{\ket{\psi}}=r_{\ket{\phi}}$, then $P\in \overline{\mathcal S}_{\ket{\psi}}$ if and only if $P\in \overline{\mathcal S}_{\ket{\phi}}$. Hence, it holds that $\overline{\mathcal{S}}\equiv\overline{\mathcal S}_{\ket{\psi}}=\overline{\mathcal S}_{\ket{\phi}}$. At this point only the phases for the elements of $\overline{\mathcal S}_{\ket{\psi}}$ remain to be determined. The $2^n$ possible phases of a generating set can be changed freely by applying Pauli operators, see e.g.~\cite{niekamp2012entropic}. These phases in turn fix the phase of each other group element in $\mathcal{S}_{\ket{\psi}}$. The set of  stabilizer states corresponding to all the $2^n$ options for the phases form an equivalence class under the Pauli equivalence relation. Such equivalence classes are also known as a stabilizer basis~\cite{niekamp2012entropic}.

\end{proof}

We note that the above rank function defines a \emph{multimatroid}~\cite{bouchet1997multimatroids, bouchet1998multimatroids, bouchet2001multimatroids, noble2025primer}. Informally, each local group of a stabilizer state determines a \emph{matroid}, and it is the collection of these local groups that determine important properties of the underlying stabilizer state. Stabilizer states are a particular instance of a multimatroid, and they have been studied from this perspective in the math literature under the name of isotropic systems~\cite{bouchet1987isotropic, bouchet1988graphic, bouchet1989connectivity}, isotropic matroids~\cite{brijder2015isotropic, brijder2015isotropic1} and binary tight $3$-matroids~\cite{brijder2015isotropic1, brijder2018orienting, noble2025primer}.

\subsection{Isotropic matroids}
The discussion from the previous subsections suffices in principle to describe measurement statistics of single-qubit Pauli measurements on stabilizer states. However, the formulation in terms of the intersection group $\mathcal{I}\left(\mathcal{S}, P\right)$ is not particularly convenient for calculations. By phrasing the above discussion in terms of \emph{isotropic matroids} \cite{brijder2015isotropic, brijder2015isotropic1} it is possible to express the rank and nullity of measurements in terms of ranks and nullities of certain binary matrices. This framing is convenient, and we will use it in the proofs of lemmas \ref{lemma:perturbations_affect_rank}, \ref{lemma:perturbations_alpha_kappa} and \ref{lemma:dist_gen_hamming_weights}. We deviate from the definition in~\cite{brijder2015isotropic, brijder2015isotropic1}, and phrase everything in terms of Pauli strings, instead of vertex triples or (sub)transversals.

\medskip

Throughout this subsection all matrices, ranks, and nullities are over $\mathbb F_2$.
Let $G$ be a graph on $n$ vertices, and let $A\equiv A(G)$ be its adjacency matrix over $\mathbb F_2$. For $v\in V$, write $e_v$ for the indicator vector of $v$, and write
$A_v=Ae_v$ for the adjacency vector of $v$, i.e., the indicator vector of the neighborhood $N_v$.

\medskip

We first define a map which associates a binary matrix to each Pauli string. Given a
Pauli string $P$, define $\mathbf M_G(P)\in \mathbb F_2^{V\times V}$ by specifying its
$v$-th column as follows:
\begin{align*}
\mathbf M_G(P)_v
=
\begin{cases}
0, & \text{if } P_v=I,\\
e_v, & \text{if } P_v=Z,\\
A_v, & \text{if } P_v=X,\\
e_v+A_v, & \text{if } P_v=Y.
\end{cases}
\end{align*}
Equivalently, $\mathbf M_G(P)$ is obtained by choosing, for each vertex $v$, one of the
three columns associated with $v$ in the matrix
\begin{align*}
\begin{gathered}
\hspace{7mm} Z \hspace{6mm} X \hspace{10mm} Y\\
IAS(G)=\begin{bmatrix}
I_n \mid A(G) \mid I_n+A(G)
\end{bmatrix}.
\end{gathered}
\end{align*}
The three blocks correspond respectively to the choices $Z$, $X$, and $Y$. That is, the $Z$-column at $v$ is $e_v$, the $X$-column at $v$ is $A_v$, and the
$Y$-column at $v$ is $e_v+A_v$. We remark that the map $\mathbf{M}_G(P)$ ignores the phases, in the sense that $\mathbf{M}_G(P)=\mathbf{M}_G(\omega P)$ for every $\omega$.

\begin{definition}[Isotropic matroid~\cite{brijder2015isotropic1}]\label{def:IAS}
Let $G$ be a graph and let $P$ be a Pauli string on $V$. We define $I(G,P)$ to be
the matrix obtained from $\mathbf M_G(P)$ by deleting the zero columns corresponding to
the vertices $v$ for which $P_v=I$. Equivalently, $I(G,P)$ has one column for each
$v\in\operatorname{supp}(P)$, namely
\[
\begin{cases}
e_v, & \text{if } P_v=Z,\\
A_v, & \text{if } P_v=X,\\
e_v+A_v, & \text{if } P_v=Y.
\end{cases}
\]
\end{definition}

The matrices $I(G, P)$ for $P$ complete are called the transverse matroids of $G$, in the terminology of \cite{brijder2015isotropic1}. As two examples, consider $K_3$ and $C_4$, i.e.~the complete graph on $3$ vertices and the cycle graph on $4$ vertices respectively. Then 

\begin{gather*}
    \hspace*{18.5mm}Z \hspace{12mm} X \hspace{12.5mm} Y\nonumber \\
    IAS(K_3) = \begin{bmatrix}
    ~1 & 0 & 0 ~~|~~ 0 & 1 & 1 ~~|~~ 1  & 1 & 1~\\
    ~0 & 1 & 0 ~~|~~ 1 & 0 & 1 ~~|~~ 1  & 1 & 1~\\
    ~0 & 0 & 1 ~~|~~ 1 & 1 & 0 ~~|~~ 1  & 1 & 1~\\
    \end{bmatrix},\\
    \vspace*{5mm}
    \hspace{18.5mm}Z \hspace{15.7mm} X \hspace{15.5mm} Y\nonumber \\
    IAS(C_4) = \begin{bmatrix}
    ~1 & 0 & 0 & 0 ~~|~~ 0 & 1 & 0 & 1 ~~|~~ 1  & 1 & 0 & 1~\\
    ~0 & 1 & 0 & 0 ~~|~~ 1 & 0 & 1 & 0 ~~|~~ 1  & 1 & 1 & 0~\\
    ~0 & 0 & 1 & 0 ~~|~~ 0 & 1 & 0 & 1 ~~|~~ 0  & 1 & 1 & 1~\\
    ~0 & 0 & 0 & 1 ~~|~~ 1 & 0 & 1 & 0 ~~|~~ 1  & 0 & 1 & 1~\\
    \end{bmatrix}, 
    \end{gather*}

and 
\begin{align*}
I(K_3, XYZ) = \begin{bmatrix}
0 & 1& 0\\
1 & 1& 0\\
1 & 1& 1
\end{bmatrix} \ ,~~~~I(C_4, XIXY) = \begin{bmatrix}
~0 & 0& 1~\\
~1 & 1& 0~\\
~0 & 0& 1~\\
~1 & 1& 1~
\end{bmatrix} \ .
\end{align*}

Let us motivate where the form of the matrix $IAS(G)$ comes from. Recall that the graph state $\ket G$
has vertex-stabilizers $S_v=X_vZ_{N_v}$. Since $S_v\ket G=\ket G$, we have that $X_v\ket G=Z_{N_v}\ket G$.
Thus, when acting on $\ket G$, applying $X_v$ is equivalent to applying $Z$ on the
neighborhood of $v$. Similarly, since $Y_v=X_vZ_v$ up to phases, applying $Y_v$ on $\ket{G}$ is
equivalent, up to phases, to applying $Z$ on $N_v\cup\{v\}$. Therefore, each single-qubit Pauli operator acting on $\ket{G}$ determines a subset of vertices on which a $Z$-operator is applied:
\begin{align*}
Z_v \longleftrightarrow \{v\},\qquad
X_v \longleftrightarrow N_v,\qquad
Y_v \longleftrightarrow N_v\cup\{v\}.
\end{align*}
The matrices $\mathbf{M}_G(P)$ and $IAS(G)$ are simply a way of recording these three possibilities for every vertex.

\medskip

Alternatively, one could see the above action on $\ket{G}$ as a quotient group. To formalize that, we define the column-sum map
\begin{align*}
\partial_G(P)
=\sum_{v\in V} \mathbf M_G(P)_v
\in \mathbb F_2^V.
\end{align*}
Equivalently, $\partial_G(P)=\mathbf M_G(P)\mathbbm 1_V$,
where $\mathbbm 1_V$ is the all-one vector indexed by $V$. The next proposition follows immediately from the definition and we omit the proof.

\begin{proposition}\label{prop:partial_homomorphism}
The map $\partial_G$ is a group homomorphism from Pauli strings modulo phases to
$\mathbb F_2^V$. Equivalently, for any two Pauli strings $P$ and $Q$,
\begin{align*}
    \partial_G(PQ)=\partial_G(P)+\partial_G(Q),
\end{align*}
where the product $PQ$ is taken up to phase.
\end{proposition}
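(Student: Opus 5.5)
The plan is to reduce the statement to a column-by-column identity, using that both sides are defined column-wise. First I fix what "Pauli strings modulo phases" means. I identify each Pauli string, up to phase, with its entrywise type vector in $\{I,X,Y,Z\}^V$. Under this identification the product is entrywise multiplication of single-qubit Paulis, with phases discarded. Concretely, single-qubit Paulis modulo phase form the Klein four-group $\{I,X,Y,Z\}\cong\mathbb{F}_2^2$, and an $n$-qubit string modulo phase is an element of $(\mathbb{F}_2^2)^V$. Since
\[
\partial_G(P)=\sum_{v\in V}\mathbf M_G(P)_v ,
\]
and the $v$-th column depends only on $P_v$, the whole claim reduces to a single-vertex statement. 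I will show that for each fixed $v$ the map $\phi_v:\{I,X,Y,Z\}\to\mathbb{F}_2^V$, given by $I\mapsto 0$, $Z\mapsto e_v$, $X\mapsto A_v$, $Y\mapsto e_v+A_v$, is a homomorphism from the Klein four-group to $\mathbb{F}_2^V$.

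For the single-vertex step I use the identification $I\leftrightarrow(0,0)$, $Z\leftrightarrow(1,0)$, $X\leftrightarrow(0,1)$, $Y\leftrightarrow(1,1)$. The multiplication rules modulo phase are $XZ\propto Y$, $XY\propto Z$, $YZ\propto X$, and each element squares to $I$; these are exactly coordinatewise addition in $\mathbb{F}_2^2$. Under this identification, $\phi_v(s,t)=s\,e_v+t\,A_v$, which is visibly $\mathbb{F}_2$-linear. In particular,
\[
\phi_v(X)+\phi_v(Z)=A_v+e_v=\phi_v(Y).
\]
This identity is precisely the reason the $Y$-block of $IAS(G)$ is $I_n+A(G)$.

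To conclude, I sum over $v$:
\[
\partial_G(PQ)=\sum_v\phi_v\bigl((PQ)_v\bigr)=\sum_v\bigl(\phi_v(P_v)+\phi_v(Q_v)\bigr)=\partial_G(P)+\partial_G(Q).
\]
Here $(PQ)_v$ is $P_vQ_v$ up to phase, and $\partial_G$ ignores phases because $\mathbf M_G(\omega P)=\mathbf M_G(P)$, as already remarked. This also shows that $\partial_G$ is well defined on the quotient by phases. Taking $P=Q=I$ gives $\partial_G(I)=0$, which is consistent with the rest.

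There is no real obstacle here. The only point needing care is the bookkeeping step: checking that the product of Pauli strings modulo phase really is the entrywise Klein-group product, so that the column-wise decomposition is legitimate.
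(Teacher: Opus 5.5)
Your proof is correct, and it is essentially the paper's route: the paper omits the argument as following immediately from the definition, and your reduction to the single-vertex map $(s,t)\mapsto s\,e_v+t\,A_v$ on the Klein four-group $\{I,X,Y,Z\}\cong\mathbb{F}_2^2$ is exactly that direct check. The identity $\phi_v(X)+\phi_v(Z)=\phi_v(Y)$, together with $\mathbf M_G(\omega P)=\mathbf M_G(P)$, is all that is needed.
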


\medskip

Consider the quotient $\mathcal{P}_n/ \langle \mathcal{S}_{\ket{G}},i\rangle$. The quotient group implies an equivalence relation $\sim$ such that $P\sim Q$ if and only if there exists a $S\in \langle\mathcal{S},i\rangle$ such that $P=QS$. In particular, the motivation of $IAS(G)$ made above shows that 
\begin{align*}
    X_v\sim Z_{N_v} \quad \text{and} \quad Y_v\sim Z_{N_v\cup\{v\}}
\end{align*}
Hence, by Proposition \ref{prop:partial_homomorphism}, for every Pauli string $P \in \mathcal{P}_n$ it holds that
\begin{align}\label{eq:PasZoperators}
    P\sim Z_{\partial_G(P)}:=\prod_{v:\partial_G(P)(v)=1}Z_v.
\end{align}
The main observation is that the elements of the stabilizer $\mathcal{S}_{\ket{G}}$ are exactly (up to phases) the Pauli strings whose columns sum to zero.

\begin{lemma}[Kernel of the column-sum map]\label{lemma:kernel_column_sum}
Let $G$ be a graph, $P$ a Pauli string and let $\mathcal S_{\ket{G}}$ be the stabilizer group of the graph
state $\ket G$. Then $P$ is an element of $\mathcal{S}_{\ket{G}}$ up to phases if and only if $\partial_G(P)=0$.
\end{lemma}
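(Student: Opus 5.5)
We prove the two implications separately, working throughout modulo phases. The main tool is Proposition~\ref{prop:partial_homomorphism} together with the explicit form of the stabilizers in Eq.~\eqref{eq:gen_graph_stab}.

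\emph{Forward direction.} Suppose $P \in \mathcal{S}_{\ket{G}}$ up to phases. By Eq.~\eqref{eq:gen_graph_stab} we can write $P = \omega X_U Z_{\textrm{Odd}(U)}$ for some $U\subseteq V$. Equivalently, $P$ is a product of vertex-stabilizers $S_u = X_u Z_{N_u}$ with $u\in U$. Since $\partial_G$ is a homomorphism, it suffices to check that $\partial_G(S_u)=0$ for every $u$. The factor $X_u$ contributes the column $A_u$. The factor $Z_{N_u}$ contributes $\sum_{w\in N_u} e_w = A_u$. These two contributions cancel over $\mathbb{F}_2$, so $\partial_G(S_u)=0$. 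Summing over $u\in U$ gives $\partial_G(P)=0$.

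\emph{Reverse direction.} Suppose $\partial_G(P)=0$. Let $U=\{v : P_v\in\{X,Y\}\}$ and set $Q = P\cdot S_U$. At each $v\in U$, multiplying by $S_U$ replaces $X_v$ by $I$ and $Y_v$ by $Z$, up to phase. The remaining factors of $S_U$ are all $Z$'s, so $Q$ is, up to phase, a pure $Z$-string $Z_T$ for some $T\subseteq V$.

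By the homomorphism property and the forward direction,
\begin{align*}
\partial_G(Q) = \partial_G(P) + \partial_G(S_U) = 0.
\end{align*}
On the other hand, $\partial_G(Z_T) = \mathbbm{1}_T$, because each $Z_v$ contributes the column $e_v$. Hence $T=\emptyset$ and $Q\propto I$. Therefore $P \propto S_U \in \mathcal{S}_{\ket{G}}$.

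\emph{Main obstacle.} The only step needing care is the bookkeeping of phases, for instance that $Y_v$ equals $X_vZ_v$ only up to a factor $\pm i$. Since both $\partial_G$ and the statement are insensitive to phases, I will carry out the whole argument in the quotient of the Pauli group by $\langle i\rangle$. In that quotient Pauli strings form the abelian group $\mathbb{F}_2^{2n}$, and the cancellations above are immediate.
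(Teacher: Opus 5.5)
Your proof is correct and is essentially the paper's argument. The forward direction is the same homomorphism argument with $\partial_G(S_u)=0$, which you verify explicitly where the paper only asserts it. For the converse, your reduction $P S_U\propto Z_T$ with $\mathbbm{1}_T=\partial_G(P)$ is an explicit derivation of the relation $P\sim Z_{\partial_G(P)}$ in $\mathcal{P}_n/\langle\mathcal{S}_{\ket{G}},i\rangle$ that the paper invokes. One small imprecision in wording: at $v\in U$ it is the factor $X_U$ of $S_U\propto X_U Z_{\mathrm{Odd}(U)}$, not $S_U$ itself, that turns $X_v$ into $I$ and $Y_v$ into $Z$; your conclusion that $Q$ is a pure $Z$-string is unaffected.
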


\begin{proof}
    Let $P\in \langle\mathcal{S}_{\ket{G}},i\rangle$. Then, it holds that $P=\omega S_U$ for some $U\subseteq V$ and $\omega\in \{\pm 1, \pm i\}$, where 
    \begin{align*}
        S_U:=\prod_{u\in U}S_u=\prod_{u\in U} X_uZ_{N_u}
    \end{align*}
    is the product of the generators of $\mathcal{S}_{\ket{G}}$ for $u\in U$. Since $\partial_G(S_u)=0$, we obtain by Proposition \ref{prop:partial_homomorphism} that
    \begin{align*}
    \partial_G(P)=\sum_{u\in U} \partial_G(S_u)=0.
    \end{align*}
    Now, conversely, suppose that $\partial_G(P)=0$. Then by (\ref{eq:PasZoperators}), we have that $P\sim Z_{\partial_G(P)}=I$
    in $\mathcal{P}_n/\langle\mathcal{S}_{\ket{G}},i\rangle $. This implies that $P\in \langle\mathcal{S}_{\ket{G}},i\rangle$, which concludes the proof.
\end{proof}

We now show how the $I(G, P)$ are related to the nullity function of $\ket{G}$ evaluated at $P$. Recall that if $P$ is a Pauli string and
$U\subseteq \operatorname{supp}(P)$, we write $P_U=\prod_{u\in U}P_u$ for the substring of $P$ indexed by $U$. That is, $P_U$ agrees with $P$ on $U$ and is the
identity outside $U$. Since $I(G,P)$ is obtained by keeping exactly the non-zero columns of $\mathbf M_G(P)$,
we have
\begin{align}\label{eq:partial_IGP}
\partial_G(P_U)=I(G,P)\mathbbm 1_U,
\end{align}
for $U\subseteq\operatorname{supp}(P)$.

\begin{lemma}[Substrings and the kernel of $I(G,P)$]\label{lemma:substrings_kernel}
Let $\ket{G}$ be a graph state with stabilizer group $\mathcal S_G$, and let $P$ be a
Pauli string. Then the subsets $U\subseteq\operatorname{supp}(P)$ for which $P_U$ is a
stabilizer up to phase are exactly the elements of $\ker I(G,P)$. In particular,
\[
\nu(P)=\dim \mathcal I(\mathcal S_G,P)=\operatorname{nullity}(I(G,P)).
\]
\end{lemma}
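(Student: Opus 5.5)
The plan is to combine Lemma~\ref{lemma:kernel_column_sum} with the identity \eqref{eq:partial_IGP}. Fix a Pauli string $P$ and a subset $U\subseteq\operatorname{supp}(P)$. By Lemma~\ref{lemma:kernel_column_sum}, the substring $P_U$ lies in $\mathcal S_{\ket G}$ up to phase if and only if $\partial_G(P_U)=0$. By \eqref{eq:partial_IGP}, $\partial_G(P_U)=I(G,P)\mathbbm 1_U$. Hence $P_U$ is a stabilizer up to phase exactly when $\mathbbm 1_U\in\ker I(G,P)$. This gives the first claim, once subsets $U$ are identified with their indicator vectors in $\mathbb F_2^{\operatorname{supp}(P)}$.

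For the nullity statement, we recall that after Proposition~\ref{prop:functional} we identified $\mathcal I(\mathcal S_G,P)$ with the subspace $\{\mathbbm 1_U : U\subseteq\operatorname{supp}(P),\ \omega P_U\in\mathcal S_G \text{ for some phase } \omega\}$ of $\mathbb F_2^{\operatorname{supp}(P)}$. Proposition~\ref{prop:functional} shows that at most one phase of each $P_U$ can occur, so this map $\mathcal I(\mathcal S_G,P)\to\mathbb F_2^{\operatorname{supp}(P)}$ is injective. It is also a homomorphism, because for locally commuting substrings $P_UP_{U'}=P_{U\triangle U'}$ up to phase.

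By the first part, the image of this map is exactly $\ker I(G,P)$. Therefore $\dim\mathcal I(\mathcal S_G,P)=\dim\ker I(G,P)=\operatorname{nullity}(I(G,P))$. Combining this with Definition~\ref{def:rank_nullity_meas}, which gives $\nu(P)=\dim\mathcal I(\mathcal S_G,P)$, proves the claim.

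The only point needing care is that $\mathcal I(\mathcal S_G,P)$ is defined as $\mathcal S_G\cap\langle\mathcal C_P,i\rangle$, so its elements carry phases. We need to confirm that an element $\omega P_U$ with $\omega\in\{\pm1,\pm i\}$ belongs to $\mathcal S_G$ precisely when $P_U$ is "a stabilizer up to phase" in the sense of Lemma~\ref{lemma:kernel_column_sum}. That lemma is stated in terms of $\langle\mathcal S_G,i\rangle$, and $P_U\in\langle\mathcal S_G,i\rangle$ means exactly that some phase multiple of $P_U$ lies in $\mathcal S_G$. So the two notions agree, and the remaining steps are bookkeeping.
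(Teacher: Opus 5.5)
Your proposal is correct and follows the paper's proof: Lemma~\ref{lemma:kernel_column_sum} together with \eqref{eq:partial_IGP} identifies the stabilizer substrings $P_U$ with $\ker I(G,P)$. Uniqueness of the phase of each $P_U$ in $\mathcal S_G$ (Proposition~\ref{prop:functional}) then gives $\dim\mathcal I(\mathcal S_G,P)=\operatorname{nullity}(I(G,P))$, and your explicit injectivity and homomorphism checks (in fact $P_UP_{U'}=P_{U\triangle U'}$ holds exactly) just spell out the bookkeeping the paper leaves implicit.
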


\begin{proof}
By Lemma~\ref{lemma:kernel_column_sum}, the substring $P_U$ is a stabilizer up to phase
if and only if $\partial_G(P_U)=0$.
By (\ref{eq:partial_IGP}), this is equivalent to
\begin{align*}
    I(G,P)\mathbbm 1_U=0.
\end{align*}
Thus the stabilizer substrings of $P$ are indexed precisely by the kernel of $I(G,P)$.

\medskip

On the other hand, $\mathcal I(\mathcal S_G,P)$ is the subgroup of the stabilizer group
consisting of stabilizers whose non-identity entries are substrings of $P$, with phases
included. Since a stabilizer group contains at most one phase for each Pauli string (i.e.~$P\in \mathcal{S}\implies -P\neq \mathcal{S}$, see Proposition \ref{prop:functional}), the
phases do not change the dimension of this subgroup and the result follows.
\end{proof}

We note that it is in principle possible to extend the above notion of isotropic matroids to general stabilizer states by simply permuting columns. We will not require this for this paper, however.

\section{Entanglement measures}\label{sec:ent_measures}
In the previous section we discussed the general framework for calculating the statistics of single-qubit Pauli measurements on stabilizer states, where in particular the rank and nullity of measurements dictated how much information can be extracted. As we show now, two entanglement measures are naturally captured in this framework: the distance and the locally accessible information. Here we motivate both measures, and show how they can be naturally phrased in terms of the ranks/nullities of Pauli measurements on the stabilizer state.

\subsection{Distance}\label{sec:distance}
The distance of stabilizer states has been studied in many contexts, and is one quantifier of entanglement~\cite{cattaneo2015mindeg, raissi2022general}. 

\begin{definition}
The distance $d$ of a stabilizer state $\ket{\psi}$ is the weight of the smallest non-identity stabilizer in the stabilizer group of $\ket{\psi}$.
\end{definition}

Specialized to graph states, the distance is the smallest degree up to local complementations, plus an additive factor of one~\cite{ cattaneo2015mindeg, javelle2012minimum}. The connection can be seen as follows. For a graph state $\ket{G}$, by~\eqref{eq:gen_graph_stab}, every stabilizer is of the form $S_U=\omega X_{U}Z_{\textrm{Odd}(U)}$ for $\omega\in \{\pm 1, \pm i\}$ and $U\subseteq V$. Hence, the size of the support of $S_U$ is just $w(S_U)=|U\cup \textrm{Odd}(U)|$. In particular, by taking $U=\{v\}$ to be a single vertex, we obtain that $w(S_v)=\deg_G(v)+1$. Since for every graph $H$ locally equivalent to $G$, there exists a local Clifford unitary $U$ such that $\ket{H}=U\ket{G}$ and $U$ preserves the support of Pauli strings, we obtain that the distance $d$ of the stabilizer graph state $\ket{G}$ satisfies
\begin{align*}
d\leq \min_{H\sim G} \delta(H)+1.
\end{align*}
The work of~\cite{ cattaneo2015mindeg, javelle2012minimum} shows that the inequality above is actually an equality.

The next result shows that the distance of a stabilizer state can be expressed also in terms of the nullity function.

\begin{lemma}\label{corollary:dist_nullity}
The distance of a stabilizer state is the smallest weight $w(P)$ of a Pauli string $P$ for which $\nu(P) =1$.
\end{lemma}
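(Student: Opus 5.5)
We will show two inequalities between the distance $d$ and the quantity $m := \min\{ w(P) : \nu(P)=1\}$, using only Definition~\ref{def:rank_nullity_meas}, i.e.\ $\nu(P)=\dim\mathcal{I}(\mathcal{S},P)$, where $\mathcal{I}(\mathcal{S},P)$ consists of the stabilizers (with phases) whose non-identity entries agree with $P$. We read the statement as: the smallest weight of a Pauli string with $\nu(P)\ge 1$, which we will show is attained with $\nu(P)=1$ exactly.

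\emph{Step 1 ($m \le d$).} Let $S$ be a non-identity stabilizer of minimal weight $d$, and let $P$ be $S$ with its phase stripped, so $w(P)=d$. Then $\pm S\in\mathcal{I}(\mathcal{S},P)$, giving $\nu(P)\ge 1$. Suppose $\nu(P)\ge 2$. Then $\mathcal{I}(\mathcal{S},P)$, viewed as a subspace of $\mathbb{F}_2^{\mathrm{supp}(P)}$ (Proposition~\ref{prop:functional}), contains some $\mathbbm{1}_U$ with $\emptyset\ne U\subsetneq \mathrm{supp}(P)$. This holds because a subspace of dimension at least $2$ contains a nonzero vector other than the all-ones vector. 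Hence $P_U$ is, up to phase, a non-identity stabilizer of weight $|U|<d$, contradicting minimality. So $\nu(P)=1$ and $m\le d$.

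\emph{Step 2 ($d\le m$).} Let $P$ satisfy $\nu(P)=1$ with $w(P)=m$. Then $\mathcal{I}(\mathcal{S},P)$ contains a non-identity element $\omega P_U$ with $U\subseteq\mathrm{supp}(P)$ nonempty. This is a non-identity stabilizer of weight $|U|\le w(P)$, so $d\le m$.

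The only step with any content is the minimality argument in Step~1, which ensures that the minimizer has nullity exactly $1$ rather than merely positive nullity. The rest is unpacking the identification of $\mathcal{I}(\mathcal{S},P)$ with a subspace of $\mathbb{F}_2^{\mathrm{supp}(P)}$, together with the fact (Proposition~\ref{prop:functional}) that phases do not affect its dimension.
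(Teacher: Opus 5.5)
Your proof is correct and follows the paper's argument essentially step for step. In both, a minimal-weight stabilizer gives a string of nullity exactly $1$ because a proper nonempty stabilizer substring would contradict minimality, and conversely any $P$ with $\nu(P)=1$ contains a nontrivial stabilizer substring, so $d\le w(P)$. Your remark that a subspace of dimension at least $2$ must contain a nonzero vector other than the all-ones vector makes explicit a step the paper asserts without comment.
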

\begin{proof}
Let $\mathcal S\equiv\mathcal{S}_{\ket{\psi}}$ be the stabilizer group of $\ket{\psi}$, and let $d=\min\{w(S): S\in \mathcal S,\; S\neq I\}$
be its distance. Recall from Definition~\ref{def:rank_nullity_meas} that $\nu(P)=\dim(\mathcal I(\mathcal S,P))$,
where $\mathcal I(\mathcal S,P)$ consists of the stabilizers which, up to phase, are
substrings of $P$. Our goal is to show that $d=\min\{w(P):\:\nu(P)=1\}$.

We first show that there exists a Pauli string $P$ of weight $d$ such that
$\nu(P)=1$. Let $S\in\mathcal S\setminus\{I\}$ be a stabilizer of weight $d$, and let
$P$ be the phase-free Pauli string underlying $S$. Then $S\in\mathcal I(\mathcal S,P)$,
and therefore $\nu(P)\geq 1$. We claim that in fact $\nu(P)=1$. Suppose, for contradiction, that $\nu(P)\geq 2$.
Then there exists a non-identity substring $P'$ of $P$ which is distinct from $P$ such that $S'=\omega P' \in \mathcal{I}(\mathcal{S},P)$ for some $\omega\in \{\pm1,\pm i\}$. Since $P'$ is a substring of $P$ and is
not equal to $P$ up to phase, its support is a proper subset of $\operatorname{supp}(P)$. Thus $0<w(S')<w(S)=d$,
contradicting the definition of $d$. Hence $\nu(P)=1$, and we obtain that $\min\{w(P):\nu(P)=1\}\leq d$.

Conversely, let $P$ be any Pauli string such that $\nu(P)=1$. Then
$\mathcal I(\mathcal S,P)$ is non-trivial, and hence contains a non-identity stabilizer
$S$ which is, up to phase, a substring of $P$. Therefore $d\leq w(S)\leq w(P)$. Since this holds for every $P$ with $\nu(P)=1$, we obtain that $d\leq \min\{w(P):\nu(P)=1\}$.
Combining the two inequalities gives the desired equality.
\end{proof}

\subsection{Locally accessible information $\alpha_{\rm loc}$}
In the example of the $\ket{K_3}$ graph state in section \ref{sec:K3_example}, it was shown that for the Pauli string $P=YYY$ the nullity satisfies $\nu(P)=2$. Alternatively, performing a measurement $P=YYY$ revealed two bits of information. As can be checked, this is the largest number of bits that can be extracted from single-qubit Pauli measurements. Note that if arbitrary multi-qubit Pauli measurements were allowed three bits of information could have been extracted, since in general $n$ bits of information can be extracted under multi-qubit measurements. 

\begin{definition}\label{corr:alpha_char}
Let $\mathcal{S}$ be the stabilizer group of a state $\ket{\psi}$. The \emph{locally accessible information} is defined as
\begin{align*}    
\alpha_{\rm loc}\left(\ket{\psi}\right) =\max_{P} \mathrm{dim}\left(\mathcal{I}(\mathcal{S},P)\right) = \max_{P}\nu(P), 
\end{align*}
where the maximization in the above can be restricted to complete Pauli strings.
For a stabilizer state with stabilizer group $\mathcal{S}$, the locally accessible information $\alpha_{\rm loc}$ is the largest number of bits that can be extracted using single-qubit Pauli measurements.
\end{definition}

The notion of locally accessible information has been studied before in \cite{markham2007entanglement}. The larger $\alpha_{\rm loc}$ is, the more local the information in the state is. As such, $n-\alpha_{\rm loc}$ can be interpreted as an entanglement measure. We will make this notion more explicit in the next section.

\subsubsection{Connection with other quantifiers of entanglement and graph-theoretic interpretation}\label{sec:connection_with_other_ent_measures}

The locally accessible information $\alpha_{\rm loc}$ governs the largest number of bits that can be extracted using single-qubit Pauli measurements. Alternatively, it was shown in~\cite{brijder2015isotropic1} that $\alpha_{\rm loc}\left(\ket{\psi}\right)$ is the largest independent set in any graph state locally equivalent to $\ket{\psi}$. We provide a different proof of this result using the results developed in the preceding sections.

\begin{proposition}[Locally accessible information and local independence number]\label{prop:alpha_loc_independent}
For a stabilizer state $\ket{\psi}$, the locally accessible information $\alpha_{\rm loc}(\ket{\psi})$ equals the size of the largest independent set maximized over all graph states $\ket{G}$ locally equivalent to $\ket{\psi}$. 
\end{proposition}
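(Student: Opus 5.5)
We prove the two inequalities separately, using the isotropic-matroid description of the nullity (Lemma~\ref{lemma:substrings_kernel}) together with the local-Clifford invariance of supports. First we reduce to graph states. A local Clifford unitary $U$ maps each local group $\mathcal{C}_P$ bijectively onto a local group $\mathcal{C}_{P'}$, where $P'_v$ is the image of $P_v$ under $U_v$ up to phase. It also maps the stabilizer group of $\ket{\psi}$ onto that of $U\ket{\psi}$ and preserves supports. Hence $\dim\mathcal{I}(\mathcal{S},P)=\dim\mathcal{I}(U\mathcal{S}U^\dagger,P')$, so $\alpha_{\rm loc}$ is invariant under local equivalence. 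The right-hand side of the statement is invariant by definition. We may therefore assume $\ket{\psi}=\ket{G}$ for some graph $G$, and we must show that $\max_P \nu(P)=\max_{H\sim G}\alpha(H)$, where $\alpha(H)$ denotes the independence number.

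\emph{Lower bound.} Let $H\sim G$ and let $T$ be an independent set of $H$. By invariance we may work with $\ket{H}$ and take $P=X_T Z_{V\setminus T}$. For $v\in T$ we have $N_v\subseteq V\setminus T$. Hence the stabilizer $S_v=X_vZ_{N_v}$ is, up to phase, a substring of $P$. These $|T|$ stabilizers are independent because their $X$-parts are distinct. Therefore $\nu(P)\ge |T|$. Equivalently, in the matrix $I(H,P)$ each $X$-column $A_v$ with $v\in T$ lies in the span of the $Z$-columns $\{e_u\}_{u\notin T}$, so the nullity is at least $|T|$.

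\emph{Upper bound.} This is the main step. Let $P$ be complete with $\nu(P)=k$, so $\mathcal{I}(\mathcal{S},P)$ is a locally commutative subgroup of dimension $k$. We apply a local Clifford $U$ that sends $P_v\mapsto Z_v$ at every vertex. The image state $U\ket{\psi}$ is a stabilizer state whose stabilizer group contains a $k$-dimensional subgroup consisting purely of $Z$-type strings. Since the stabilizer group has dimension $n$ and its $Z$-only part has dimension $k$, the $X$-components of the stabilizers span a space of dimension $n-k$ in $\mathbb{F}_2^V$. Choose a set $B$ of $n-k$ coordinates on which these $X$-parts are independent, i.e.\ an information set for the $X$-part.

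Next apply Hadamards on $V\setminus B$, which exchange $Z\leftrightarrow X$ there. After this, the $X$-parts of the stabilizer group span all of $\mathbb{F}_2^V$. By the standard van den Nest-type construction, the state is then a graph state $\ket{H}$ up to Paulis: we Gaussian-eliminate to generators $X_vZ_{N_v}$, and symmetry of $H$ follows from commutation. This $H$ is locally equivalent to $G$.

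It remains to check that the $k$ vertices of $V\setminus B$ are pairwise nonadjacent in $H$. The pure-$Z$ subgroup has dimension $k$ and its elements vanish on the information set $B$ in the appropriate sense. After the Hadamards it becomes a group of stabilizers whose $X$-support lies in $V\setminus B$ and whose $Z$-support lies in $B$. Its generators are therefore exactly $S_v=X_vZ_{N_v}$ for $v\in V\setminus B$ with $N_v\subseteq B$, so $V\setminus B$ is independent. The delicate point, which I expect to be the main obstacle, is making this linear-algebra bookkeeping precise. Concretely, one must show that the $Z$-only subgroup projects isomorphically onto the coordinates outside $B$, so that its transformed generators have the unit form $X_v Z_{(\cdot)}$. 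I would verify this via the symplectic complement: the $Z$-only stabilizers must commute with all stabilizers. Combining the two bounds gives equality.
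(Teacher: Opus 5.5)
Your proposal is correct. The reduction to graph states and the lower bound match the paper's, but your upper bound takes a genuinely different route.

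The paper argues operationally. It picks $k$ positions $W$ on which a basis of $\mathcal{I}(\mathcal{S},P)$ is invertible and measures the other qubits according to $P$. The outcomes on $W$ are then deterministic, so the post-measurement state is stabilized by $\pm P_w$ for each $w\in W$. It then invokes the fact that Pauli measurements yield vertex-minors, i.e.\ induced subgraphs of some $H\sim G$, in which $W$ must consist of isolated vertices.

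You instead build $H$ explicitly by symplectic linear algebra, and the step you flag as delicate closes with one observation. Write $L\subseteq\mathbb{F}_2^{2n}$ for the stabilizer group modulo phases in the rotated frame. Maximal isotropy of $L$ gives that its pure-$Z$ part is exactly $\{(0|z): z\in\pi_X(L)^\perp\}$. Any $z\in\pi_X(L)^\perp$ supported on $B$ must vanish, because $\pi_X(L)$ restricts onto all of $\mathbb{F}_2^B$. This yields two facts at once. First, the $X$-parts span $\mathbb{F}_2^V$ after the Hadamards. Second, $z\mapsto z_{V\setminus B}$ is an isomorphism from the pure-$Z$ part onto $\mathbb{F}_2^{V\setminus B}$. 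Hence the unique generator with $X$-part $e_v$, for $v\notin B$, comes from the pure-$Z$ part and has $Z$-support in $B$.

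One small correction: Gaussian elimination gives generators $X_vZ_{c_v}$ with $C$ symmetric but possibly $c_v(v)=1$. Reaching a graph state then needs phase gates, not only Paulis. These are local Cliffords, are needed only at vertices of $B$ (since $c_v\subseteq B$ for $v\notin B$), and leave off-diagonal adjacencies unchanged, so your conclusion stands.

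Your argument is self-contained and constructive: it is a refined van den Nest normal form and needs no measurement/vertex-minor correspondence. The paper's is shorter and shows the operational content, namely that measuring $n-k$ qubits disentangles the remaining $k$. The two proofs in fact select the same kind of set. Since $\pi_X(L)=\mathcal{I}(\mathcal{S},P)^\perp$ in your frame, the complement of your information set $B$ is an information set of $\mathcal{I}(\mathcal{S},P)$, just like the paper's $W$.
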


\begin{proof}
We first reduce the problem to graph states. Indeed, note that if $\ket{\psi}\sim\ket{\phi}$, i.e., $\ket{\psi}$ is locally equivalent to $\ket{\phi}$, then $\alpha_{\rm loc}(\ket{\psi})=\alpha_{\rm loc}(\ket{\phi})$. This follows from the fact that local Clifford unitaries only permute the single-qubit Pauli operators $\{X,Y,Z\}$ on each qubit, therefore the quantity $\alpha_{\rm loc}$ is invariant under local Clifford equivalence. Let $\alpha_{\rm LC}(G):=\max_{H\sim G}\alpha(H)$ be the largest independent set over all graph states locally equivalent to $\ket{G}$. Hence, we just need to show that $\alpha_{\rm loc}(\ket{G})=\alpha_{\rm LC}(G)$ for every graph state $\ket{G}$.

\medskip

First we show that $\alpha_{\rm loc}(\ket G)\geq \alpha_{\rm LC}(G)$. Note by the above remark, that it suffices to prove that $\alpha_{\rm loc}(\ket H)\geq \alpha(H)$ since $\alpha_{\rm loc}(\ket H)=\alpha_{\rm loc}(\ket G)$ for every $\ket{H}\sim \ket{G}$. Let $W\subseteq V(H)$ be an independent set. We claim that the vertex-stabilizers $\{S_w^H\}_{w\in W}\subseteq \mathcal{S}_{\ket{H}}$ generate a locally commutative subgroup of the stabilizer group $\mathcal{S}_{\ket{H}}$ of $\ket H$. Indeed, if
$w,w'\in W$ are distinct, then $w$ is not adjacent to $w'$. Thus, by~\eqref{eq:gen_graph_stab}, the stabilizer $S_w^H$ has an $X$ on
the qubit $w$, while $S_{w'}^H$ has the identity on the qubit $w$; and similarly with
$w$ and $w'$ reversed. On any vertex outside $W$, the two stabilizers have either
identity or $Z$. Hence, $S_w^H$ and $S_{w'}^H$ commute locally and consequently $\{S_w^H\}_{w\in W}$ is locally commutative.

\medskip

Let $P$ be the Pauli string given by $P_v=X$ if $v\in W$ and $P_v=Z$ for $V\setminus W$. Then, the subgroup generated by $\{S_w^H\}_{w\in W}$ is a subgroup of $\mathcal{I}(\mathcal{S}_{\ket{H}},P)$. Since the stabilizers
$\{S_w^H:w\in W\}$ are algebraically independent, we get
\begin{align*}
\nu_{\ket H}(P)=\dim\mathcal I(\mathcal S_{\ket H},P)\geq |W|.
\end{align*}
Maximizing over all independent sets in $H$ gives us that $\alpha_{\rm loc}(\ket H)\geq \alpha(H)$ and consequently that $\alpha_{\rm loc}(\ket{G})\geq \alpha_{\rm LC}(G)$.

\medskip

We now prove the reverse inequality. Let $P$ be a Pauli string such that $\nu_{\ket G}(P)=\dim\mathcal{I}(\mathcal{S}_{\ket{G}},P)=k$. Write $U=\operatorname{supp}(P)$ and $m=|U|=w(P)$. As discussed earlier, by Proposition \ref{prop:functional}, one can associate $\mathcal{I}(\mathcal{S}_{\ket{G}}, P)$ with a subspace of $\mathbb{F}_2^U$. Let $B$ be the $k\times m$ matrix whose rows form a basis of $\mathcal{I}(\mathcal{S}_{\ket{G}}, P)$. By Lemma~\ref{lemma:meas_stats},
the outcomes obtained by measuring $\ket G$ according to $P$ are uniformly distributed
on a coset of $\mathcal I(\mathcal S_{\ket G},P)^\perp
\subseteq \mathbb F_2^U$. Therefore, this coset has dimension $m-k$ and can be written as
\begin{align*}
    \{x\in\mathbb F_2^U:Bx=a\}
\end{align*}
for some $a\in \mathbb{F}_2^k$ (the choice of $a$ depends on the phases of the stabilizers). Since
$\operatorname{rank}(B)=k$, there is a set $W\subseteq U$ of size $k$ such that the $k\times k$
submatrix $B_W$ (obtained by restricting to the columns indexed by $W$) is invertible. Then, for every choice of
$x_{U\setminus W}\in\mathbb F_2^{U\setminus W}$, there is a unique choice of $x_W\in\mathbb F_2^W$ satisfying
\begin{align}\label{eq:measuring_deterministic}
B_Wx_W=a+B_{U\setminus W}x_{U\setminus W}.
\end{align}

\medskip

We now proceed by performing single-qubit Pauli measurements on $\ket{G}$ using the Pauli string $P$ on the qubits $U\setminus W$. Let $\ket{\phi}$ be the state obtained after measuring the qubits in $U\setminus W$. Equation~\eqref{eq:measuring_deterministic} guarantees that $x_W$ is completely determined given the outcome $x_{U\setminus W} \in \mathbb{F}_2^{U\setminus W}$ obtained in the measurement. Therefore, the measurement on $\ket{\phi}$ performed in the qubits of $W$ are deterministic. In the language of Lemma~\ref{lemma:meas_stats} this is the same as saying that $\mathcal{I}(\mathcal{S}_{\ket{\phi}},P_W)=\mathbb{F}_2^W$. In particular, this implies that $\pm P_w \in \mathcal{S}_{\ket{\phi}}$ for every $w\in W$.

\medskip

By Definitions~\ref{def:vertex_minor}--\ref{def:vertex_minor_stab}, the
post-measurement state $\ket{\phi}$ is locally equivalent to a graph state $\ket{F}$,
where $F$ is a vertex-minor of $G$. Since local Clifford unitaries preserve the
support of Pauli strings, the graph state $\ket F$ also has a single-qubit
stabilizer supported on each $w\in W$. We claim that this implies that $W$ is an independent set in $F$. Indeed, by \eqref{eq:gen_graph_stab}, every
stabilizer of $\ket F$ has the form $S^F_U=\omega X_UZ_{\operatorname{Odd}_F(U)}$ for some $\omega\in \{\pm1, \pm i\}$. If such a stabilizer is supported only on $w \in W$, then necessarily $U=\{w\}$, and
so the support of $S^F_U$ is $\{w\}\cup N_F(w)$. Hence $N_F(w)=\emptyset$. Thus the vertices in $W$ are isolated in $F$, and in particular form an
independent set. 

Since $F$ is a vertex-minor of $G$, there is a graph
$H$ locally equivalent to $G$ such that $F$ is an induced subgraph of $H$. Therefore $H$ contains an independent set of size $|W|=k$, and hence
\begin{align*}
\alpha_{\rm loc}(\ket{G})=\nu_{\ket{G}}(P)=k\leq \max_{H\sim G}\alpha(H).
\end{align*}
This concludes the proof of the proposition.
\end{proof}

Recall from Definition \ref{def:vertex_minor_stab} that $\ket{\phi}$ is a vertex-minor of $\ket{\psi}$ if and only if $\ket{\psi}$ and $\ket{\phi}$ are locally equivalent to graph states $\ket{G}$ and $\ket{H}$ respectively, such that $H$ is an induced subgraph of $G$. A consequence of Proposition~\ref{prop:alpha_loc_independent} is that the largest unentangled vertex-minor of $\ket{\psi}$ is the largest independent set of a graph state $\ket{G}$ locally equivalent to $\ket{\psi}$, which is of size $\alpha_{\rm loc}(\ket{\psi})$. Conversely, $n-\alpha_{\rm loc}\left(\ket{\psi}\right)$ is equal to the smallest number of measurements needed to disentangle the state, a quantity known as the Pauli persistency~\cite{briegel2001persistent, hein2006entanglement, tzitrin2018local, cabello2009entanglement}. Furthermore, independent sets up to local complementations have also appeared in the context of state distribution~\cite{prielinger2025piecemaker, goodenough2026exact}. More specifically, the simplicial complex consisting of subsets $X\subseteq V$ that can be made to be independent up to local complementations dictates how difficult it is to distribute the state.

\medskip

We finally note that $\alpha_{\rm loc}$ bounds the geometric measure of entanglement~\cite{weinbrenner2025quantifying, weinbrenner2026complete}, which is defined as 
\begin{align*}
E(\ket{\psi})\equiv -\log\Big(\max_{\ket{\phi}\in \mathrm{PROD}} |\braket{\phi}{\psi}|^2\Big).
\end{align*}
Here $\mathrm{PROD}\subseteq \mathbb{C}^{2^n}$ is the subset of pure product states, i.e.~every state of the form $\ket{\psi} = \ket{\psi_1}\otimes \cdots \otimes \ket{\psi_n}$. 
This notion captures `how close' a state is to a product state. To see the relation between the geometric measure of entanglement and the locally accessible information, start by fixing a stabilizer state $\ket{\psi}$ with stabilizer group $\mathcal{S}$. Let $P$ be a Pauli string of full weight, i.e., $w(P)=n$, and write $k=\nu(P)=\dim \mathcal I(\mathcal S,P)$. For each
$b\in \mathbb F_2^n$, consider the projector
\begin{align*}
\Pi_b^P=\prod_{v\in V}\frac{I+(-1)^{b(v)}P_v}{2}.
\end{align*}
Let $\mathcal{C}_{P,b}$ be the locally commutative group generated by $\{(-1)^{b(v)}P_v\}_{v\in V}$. Since $w(P)=n$, we have $P_v\neq I$ for every $v\in V$. Hence, as a consequence of the discussion from Section~\ref{sec:formalism}, there is a unique (up to phases) stabilizer state $\ket{\phi_{P,b}}$ whose stabilizer group is $\mathcal{C}_{P,b}$. In particular, by~\eqref{eq:stab_dm}, we have
\begin{align*}
\Pi_b^P=\ket{\phi_{P,b}}\bra{\phi_{P,b}}.
\end{align*}
Moreover, since each factor of the form $\frac{I+(-1)^{b(v)}P_v}{2}$ acts only on the qubit indexed by $v$, the stabilizer state $\ket{\phi_{P,b}}$ is a product state.

\medskip

By Lemma~\ref{lemma:meas_stats}, the outcomes of the single-qubit Pauli measurement specified by $P$ on $\ket{\psi}$ are uniformly distributed on a coset $a+\mathcal I(\mathcal S,P)^\perp\subseteq \mathbb F_2^n$ for some $a\in \mathbb{F}_2^{n}$. In particular, for any $b\in a+\mathcal I(\mathcal S,P)^\perp$, we have
\begin{align*}
\operatorname{Tr}\left(\Pi_b^P\ket{\psi}\bra{\psi}\right)
=
\frac{|\mathcal I(\mathcal S,P)|}{2^n}
=
\frac{2^k}{2^n}
=
2^{-(n-k)}.
\end{align*}
Therefore, since $\ket{\phi_{P,b}}\in \textrm{PROD}$, it holds that
\begin{align*}
\max_{\ket{\phi}\in \mathrm{PROD}}
|\braket{\phi}{\psi}|^2
\geq
|\braket{\phi_{P,b}}{\psi}|^2
= \operatorname{Tr}\left(\Pi_b^P\ket{\psi}\bra{\psi}\right)=
2^{-(n-k)}.
\end{align*}
Taking logarithms and maximizing over all complete Pauli strings $P$ gives
\begin{align*}
E(\ket{\psi})\leq n-\alpha_{\rm loc}(\ket{\psi}).
\end{align*}

\subsection{Generalized Hamming weights}\label{def:hamming_weights}
For the proof of lemma \ref{lemma:dist_gen_hamming_weights}, it will be convenient to extend the notion of distance defined in Subsection~\ref{sec:distance}. This lemma in turn is key in our proof that the distance can at most grow logarithmically when forbidding a vertex-minor. We will use generalized Hamming weights, a well-known concept from classical coding theory~\cite{wei1991generalized}.

\begin{definition}
Let $\ket{\psi}$ be a stabilizer state, and let $\ell\in \mathbb{N}$ be such that $1\leq \ell \leq \alpha_{\rm loc}(\ket{\psi})$. The generalized Hamming weights $d_\ell$ of $\ket{\psi}$ are the minimum weights of the Pauli strings with nullity $\ell$, i.e.~

\begin{align}
d_\ell \equiv \min_{\mathclap{\substack{P \textrm{~s.t. }\\\nu(P)=\ell}}} ~w(P)\ .
\end{align}
\end{definition}

Clearly the first generalized Hamming weight $d_1$ equals the distance $d$, and the largest $\ell$ is given by $\alpha_{\rm loc}(\ket{\psi})$, since by definition there exist no Pauli strings with larger nullity than $\alpha_{\rm loc}$. The generalized Hamming weights thus contain more information than the distance and locally accessible information, and might thus be of independent interest to the quantum community

\section{Single-qubit Pauli measurements on circle stabilizer states}\label{sec:single_qubit_on_circle}
In section \ref{sec:ent_measures} we showed how the ranks and nullities of all sequences of single-qubit Pauli measurements revealed important information about the underlying state. We now show that so-called circle stabilizer states are precisely the states whose rank functions can, by definition, be represented by a rank function associated with $4$-regular multigraphs. This in turn allows us to express (bounds on) the distance $d$ and the locally accessible information $\alpha_{\rm loc}$ of circle stabilizer states in terms of graph-theoretic properties of $4$-regular multigraphs. We then show that the structure of $4$-regular multigraphs necessarily limits the behavior of the associated rank function, which in turn imposes strong constraints on the asymptotic behavior of the distance and locally accessible information.

\subsection{Rank functions of $4$-regular multigraphs/circle graph states}

A \emph{half-edge} is one of the two ends of an edge. If an edge joins two distinct vertices $u$ and $v$, then one of its half-edges is incident with $u$ and the other is incident with $v$. If the edge is a loop at a vertex $v$, then both of its
half-edges are incident with $v$. A \emph{$4$-regular multigraph} is a multigraph, possibly containing parallel
edges and loops, in which exactly four half-edges are incident with every vertex. Equivalently, every vertex has degree $4$, where a loop contributes $2$ to the degree of its incident vertex.

\medskip

Note that for each vertex $v\in V$ of a $4$-regular multigraph $F$, there are three possible pairings of the four incident half-edges to $v$. We label the three possible choices arbitrarily by $X_v, Y_v, Z_v$, and will say that $F$ is \emph{coded} if such a choice has been made for each vertex. We follow the definition of detachments of coded $4$-regular multigraphs given in the work of Bouchet~\cite{bouchet2001multimatroids}.

\begin{definition}[Detachments of coded $4$-regular multigraphs]\label{def:detachment}
Let $F$ be a coded $4$-regular multigraph on the vertex set $V$ and $P$ a Pauli string on the same set of vertices with support $U=\operatorname{supp}(P)\subseteq V$. The detachment of $F$ with respect to a splitter $P$ is the graph $F||P$ obtained after replacing each vertex $v\in U=\operatorname{supp}(P)$ by two vertices $v'$ and $v''$, such that $v'$ is incident to the two half-edges of one pairing identified by $P_v$, and $v''$ is incident to the two remaining half-edges of the pairing.
\end{definition}

We depict the three possible detachments at a vertex in Fig.~\ref{fig:detachments}.

\begin{figure}[h!]
    \centering
    \begin{tikzpicture}[
    scale=1,
    every node/.style={font=\small},
    vtx/.style={circle, fill=black, inner sep=1.9pt},
    hed/.style={line width=0.9pt}
]

\begin{scope}[xshift=4.4cm, yshift=2.8cm]
\filldraw[fill=gray!20, draw=black] (0,0) circle (1.3cm);
    \coordinate (N) at (0,1.2);
    \coordinate (E) at (1.2,0);
    \coordinate (S) at (0,-1.2);
    \coordinate (W) at (-1.2,0);

    \node[vtx] (v) at (0,0) {};
    \draw[hed] (v)--(N);
    \draw[hed] (v)--(E);
    \draw[hed] (v)--(S);
    \draw[hed] (v)--(W);
    \node[scale=1.3] at (0.3,0.3) {$v$};
\end{scope}

\begin{scope}[xshift=0cm, yshift=-1cm]
\filldraw[fill=gray!20, draw=black] (0,0) circle (1.3cm);
    \coordinate (N) at (0,1.2);
    \coordinate (E) at (1.2,0);
    \coordinate (S) at (0,-1.2);
    \coordinate (W) at (-1.2,0);

    \node[vtx] (x1) at (0.24,0.24) {};
    \node[vtx] (x2) at (-0.24,-0.24) {};

    \draw[hed] plot[smooth, tension=0.9] coordinates {(N) (x1) (E)};
    \draw[hed] plot[smooth, tension=0.9] coordinates {(S) (x2) (W)};

    \node[scale=1.5] at (0,1.75) {$X_v$};
\end{scope}

\begin{scope}[xshift=4.4cm, yshift=-1.3cm]
\filldraw[fill=gray!20, draw=black] (0,0) circle (1.3cm);
    \coordinate (N) at (0,1.2);
    \coordinate (E) at (1.2,0);
    \coordinate (S) at (0,-1.2);
    \coordinate (W) at (-1.2,0);

    \node[vtx] (y1) at (-0.30,0.26) {};

    \node[vtx] (y2) at (0.30,-0.26) {};

    \draw[hed] plot[smooth, tension=0.9] coordinates {(N) (y1) (S)};
    \filldraw[fill=gray!20, draw=none] (-0.25,-0.2) circle (0.1cm);
    
    \draw[hed] plot[smooth, tension=0.9] coordinates {(W) (y2) (E)};

    \node[scale=1.5] at (0,1.75) {$Y_v$};
\end{scope}

\begin{scope}[xshift=8.8cm, yshift=-1cm]
\filldraw[fill=gray!20, draw=black] (0,0) circle (1.3cm);
    \coordinate (N) at (0,1.2);
    \coordinate (E) at (1.2,0);
    \coordinate (S) at (0,-1.2);
    \coordinate (W) at (-1.2,0);

    \node[vtx] (z1) at (-0.24,0.24) {};
    \node[vtx] (z2) at (0.24,-0.24) {};

    \draw[hed] plot[smooth, tension=0.9] coordinates {(N) (z1) (W)};
    \draw[hed] plot[smooth, tension=0.9] coordinates {(E) (z2) (S)};

    \node[scale=1.5] at (0,1.75) {$Z_v$};
\end{scope}

\end{tikzpicture}
\caption{The three detachments at a vertex $v$. The under/overcrossing in the $Y_v$ detachment is only present for visual clarity. We have suppressed the labels of $v'$ and $v''$.
}
\label{fig:detachments}
\end{figure}

One can now associate a rank function to splitters on coded $4$-regular multigraphs.

\begin{definition}[Rank and nullities of splitters on coded $4$-regular multigraphs]\label{def:rank_null_4reg}
The rank function of a splitter $P$ associated with a coded $4$-regular multigraph $F$ on vertex set $V$ is defined as
\begin{align}
r_F(P) = w(P) - c(F||P)+c(F) \ ,
\end{align}
where $c(G)$ is the number of connected components of the multigraph $G$. Similarly, the nullity of $P$ associated with $F$ is defined as
\begin{align}\label{eq:nullity_F}
    \nu_F(P) \equiv w(P)-r_F(P)= c(F||P)-c(F)\ .
\end{align}
\end{definition}

We remark that if $F$ is not connected, then there exists another coded $4$-regular multigraph $F'$ on the same vertex set that is connected, such that $r_F = r_{F'}$ (see Proposition 3.7 of \cite{bouchet2001multimatroids}). As such, we will assume that $F$ is connected, i.e., $c(F)=1$, in the remainder of this work. Furthermore, we will also implicitly assume that every $4$-regular multigraph $F$ has already a fixed labeling of each pairing for all $v\in V$, i.e., that it is coded.

\medskip

Surprisingly, Bouchet showed that such rank functions of coded $4$-regular multigraphs are in one-to-one correspondence with the rank function of stabilizer states locally equivalent to so-called \emph{circle} graph states~\cite{bouchet2001multimatroids, bouchet1988graphic} (see Proposition 33 from~\cite{brijder2022characterization}). To be more precise, we define a circle graph as follows. A \emph{chord diagram} is a finite collection of chords of a circle, where a chord is a line segment joining two points of the circle. Given a chord diagram $\mathcal D$, its \emph{intersection graph} is the graph with
vertex set $\mathcal D$ in which two distinct chords are adjacent if and only if they cross an odd number of times in the interior of the circle. A graph $G$ is a \emph{circle graph} if it is the intersection graph of some chord diagram (see Figure~\ref{fig:chord_diagram}). Alternatively, a result of~\cite{bouchet1994circle} shows that circle graphs are precisely the graphs that do not contain any of the graphs in Figure~\ref{fig:forbidden_VM_circle} as a vertex-minor.

\begin{figure}[h!]
    \centering
    \begin{tikzpicture}[
    scale=0.8,
    vertex/.style={circle, fill=black, inner sep=2.2pt},
    endpoint/.style={circle, fill=black, inner sep=1.3pt},
    every node/.style={font=\large}
]

\begin{scope}[xshift=0cm]
    \def\r{2.2}
\filldraw[fill=gray!20, draw=black] (0,0) circle (2.9cm);
    \draw[line width=0.8pt] (0,0) circle (\r);

    \coordinate (p1)  at ( 90:\r);
    \coordinate (p2)  at ( 54:\r);
    \coordinate (p3)  at ( 18:\r);
    \coordinate (p4)  at (-18:\r);
    \coordinate (p5)  at (-54:\r);
    \coordinate (p6)  at (-90:\r);
    \coordinate (p7)  at (-126:\r);
    \coordinate (p8)  at (-162:\r);
    \coordinate (p9)  at ( 162:\r);
    \coordinate (p10) at ( 126:\r);

    \foreach \p in {p1,p2,p3,p4,p5,p6,p7,p8,p9,p10}
        \node[endpoint] at (\p) {};

    \draw[line width=0.9pt] (p1) -- (p4);
    \draw[line width=0.9pt] (p2) -- (p6);
    \draw[line width=0.9pt] (p3) -- (p8);
    \draw[line width=0.9pt] (p5) -- (p9);
    \draw[line width=0.9pt] (p7) -- (p10);

    \node at ( 90:{\r+0.35}) {$1$};
    \node at (-18:{\r+0.35}) {$1$};

    \node at ( 54:{\r+0.35}) {$2$};
    \node at (-90:{\r+0.35}) {$2$};

    \node at ( 18:{\r+0.35}) {$3$};
    \node at (-162:{\r+0.35}) {$3$};

    \node at (-54:{\r+0.35}) {$4$};
    \node at (162:{\r+0.35}) {$4$};

    \node at (-126:{\r+0.35}) {$5$};
    \node at (126:{\r+0.35}) {$5$};

    \node at (0,-3.5) {Chord diagram};
\end{scope}

\draw[->, line width=0.9pt] (3.7,0) -- (4.8,0);

\begin{scope}[xshift=7.3cm]
\filldraw[fill=gray!20, draw=black] (0,0) circle (2.1cm);

    \node[vertex,label=above:$1$] (v1) at (-1.0, 1.0) {};
    \node[vertex,label=below:$2$] (v2) at (-1.0,-0.9) {};
    \node[vertex,label=above:$3$] (v3) at ( 0.5, 1.2) {};
    \node[vertex,label=right:$4$] (v4) at ( 1.35,0.05) {};
    \node[vertex,label=below:$5$] (v5) at ( 0.6,-1.1) {};

    \draw (v1)--(v2);
    \draw (v1)--(v3);
    \draw (v2)--(v3);
    \draw (v2)--(v4);
    \draw (v3)--(v4);
    \draw (v3)--(v5);
    \draw (v4)--(v5);

    \node at (0.2,-3.5) {Associated circle graph};
\end{scope}

\end{tikzpicture}
\caption{A chord diagram and its associated circle graph.
}
\label{fig:chord_diagram}
\end{figure}

\begin{figure}[h!]
    \centering
    \begin{tikzpicture}[
    scale=1.15,
    vertex/.style={circle, fill=black, inner sep=2pt},
    labelstyle/.style={font=\small}
]

\begin{scope}[xshift=0cm]
\filldraw[fill=gray!20, draw=black] (0,0) circle (1.5cm);
    \node[vertex] (w5c) at (0,0) {};
    \node[vertex] (w51) at (90:1.25) {};
    \node[vertex] (w52) at (162:1.25) {};
    \node[vertex] (w53) at (234:1.25) {};
    \node[vertex] (w54) at (306:1.25) {};
    \node[vertex] (w55) at (18:1.25) {};

    \draw (w51)--(w52)--(w53)--(w54)--(w55)--(w51);
    \draw (w5c)--(w51);
    \draw (w5c)--(w52);
    \draw (w5c)--(w53);
    \draw (w5c)--(w54);
    \draw (w5c)--(w55);

\end{scope}

\begin{scope}[xshift=4.2cm]
\filldraw[fill=gray!20, draw=black] (0,0) circle (1.5cm);
    \node[vertex] (bwc) at (0,0) {};
    \node[vertex] (bw1) at (90:1.25) {};
    \node[vertex] (bw2) at (150:1.25) {};
    \node[vertex] (bw3) at (30:1.25) {};
    \node[vertex] (bw4) at (-30:1.25) {};
    \node[vertex] (bw5) at (-90:1.25) {};
    \node[vertex] (bw6) at (-150:1.25) {};

    \draw (bw2)--(bw1)--(bw3)--(bw4)--(bw5)--(bw6)--(bw2);
    \draw (bwc)--(bw2);
    \draw (bwc)--(bw3);
    \draw (bwc)--(bw5);

\end{scope}

\begin{scope}[xshift=8.7cm]
\filldraw[fill=gray!20, draw=black] (0,0) circle (1.5cm);
    \node[vertex] (w7c) at (0,0) {};
    \node[vertex] (w71) at (90:1.25) {};
    \node[vertex] (w72) at (141.4286:1.25) {};
    \node[vertex] (w73) at (192.8571:1.25) {};
    \node[vertex] (w74) at (244.2857:1.25) {};
    \node[vertex] (w75) at (295.7143:1.25) {};
    \node[vertex] (w76) at (347.1429:1.25) {};
    \node[vertex] (w77) at (38.5714:1.25) {};

    \draw (w71)--(w72)--(w73)--(w74)--(w75)--(w76)--(w77)--(w71);
    \draw (w7c)--(w71);
    \draw (w7c)--(w72);
    \draw (w7c)--(w73);
    \draw (w7c)--(w74);
    \draw (w7c)--(w75);
    \draw (w7c)--(w76);
    \draw (w7c)--(w77);

\end{scope}

\end{tikzpicture}
\caption{The three forbidden vertex-minors for circle stabilizer states.
}
\label{fig:forbidden_VM_circle}
\end{figure}

Circle graphs are related to $4$-regular multigraphs in the following way. Given a chord
diagram with chords labelled by a set $\mathcal{D}$ of size $n$, read the labels of the chord endpoints in cyclic order around the circle. This gives a cyclic word $w_1w_2\cdots w_{2n}$ in which every element of $\mathcal{D}$ appears exactly twice. Now construct a multigraph $F$ with vertex set $\mathcal{D}$ by adding an
edge joining $w_i$ to $w_{i+1}$, for each $i \in [2n]$, with indices taken cyclically. Then $F$ is $4$-regular, since every vertex appears twice in the word, and the cyclic word
defines an Eulerian tour\footnote{Or Eulerian system of circuits more generally, i.e.~an Eulerian tour for each connected component~\cite{brijder2022characterization}.} $C$ of $F$. Conversely,
given an Eulerian tour $C$ of a connected $4$-regular multigraph $F$, the cyclic order in which $C$ visits the vertices gives a cyclic word in which every vertex appears twice; drawing a chord between the two occurrences of each vertex produces a
chord diagram. Therefore, there is a correspondence between chord diagrams $\mathcal{D}$ and $4$-regular graphs $F$ with an Eulerian tour $C$.

\medskip

We are now able to properly define circle stabilizer states.

\begin{definition}[Circle stabilizer states]
    A stabilizer state $\ket{\psi}$ is a \emph{circle stabilizer state} if it is locally equivalent to a graph state $\ket{G}$ where $G$ is a circle graph.
\end{definition}

The main result of this subsection that will be useful for us in the subsequent subsections, is the following translation of a characterization of circle stabilizer states due to Bouchet~\cite{bouchet2001multimatroids, bouchet1988graphic}.

\begin{proposition}\label{prop:circle_characterization}
Let $\ket{\psi}$ be a stabilizer state, and let $r_{\ket{\psi}}$ be its rank function. Then $\ket{\psi}$ is a \emph{circle stabilizer state} if and only if there exists a coded $4$-regular multigraph $F$ whose rank function $r_F$ satisfies $r_{\ket{\psi}} = r_F$.
\end{proposition}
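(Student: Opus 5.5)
The plan is to prove both directions by relating an Eulerian tour of $F$ to the graph state of the associated chord diagram. Then I would use the fact that both rank functions behave in the same controlled way under local Clifford operations.

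\textbf{Forward direction.} Let $\ket{\psi}$ be locally equivalent to $\ket{G}$ with $G$ the intersection graph of a chord diagram $\mathcal D$. As described above, $\mathcal D$ yields a connected $4$-regular multigraph $F$ together with an Eulerian tour $C$.

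At each vertex $v$, the tour $C$ singles out one pairing of the four half-edges: the pairing that follows $C$ through $v$. I would label this pairing $X_v$. Of the other two pairings, one detaches $C$ into two closed walks and the other reconnects it into a single tour traversing one segment in reverse. I would label the former $Z_v$ and the latter $Y_v$, following the figure conventions.

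The core computation is to show that $\nu_F(P)=c(F||P)-1$ equals $\operatorname{nullity}(I(G,P))$ for every splitter $P$; by Lemma~\ref{lemma:substrings_kernel} this is $\nu_{\ket G}(P)$. For this I would use the classical Cohn--Lempel/Traldi circuit-partition formula, proved by induction on $w(P)$.
\begin{itemize}
\item For $w(P)=0$ both sides are $0$.
\item Detaching a single vertex with $Z_v$ splits $C$ into two walks, so $c$ goes up by one; correspondingly the column $e_v$ is independent of nothing, giving nullity $0$. I would calibrate this step carefully against the $K_3$ example of Fig.~\ref{fig:detachment_example}, where $r(ZZZ)=1$ and $c=3$.
\item The inductive step: detaching $v$ with $X_v$, $Y_v$ or $Z_v$ corresponds to a graph operation on the chord diagram (deleting the chord, or deleting it after a local complementation at $v$). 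This corresponds to the vertex-minor obtained by measuring qubit $v$ in the matching basis. The rank change on the graph side is then tracked by the measurement rules of Section~\ref{sec:single_qubit}.
\end{itemize}

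Once $r_F=r_{\ket G}$ holds, I extend it to $\ket{\psi}$. A local Clifford permutes $\{X_v,Y_v,Z_v\}$ at each qubit, so it simply relabels the coding of $F$. Recoding $F$ accordingly gives $r_{F'}=r_{\ket\psi}$.

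\textbf{Converse.} Suppose $r_{\ket\psi}=r_F$ for some coded $F$, which we may assume connected. Choose any Eulerian tour $C$ of $F$; it exists since $F$ is $4$-regular. This tour gives a circle graph $G_C$ and, by the forward argument, a coding with $r_F=r_{\ket{G_C}}$ up to a relabelling of pairings at each vertex. That relabelling is realised by a local Clifford $U$, so $r_{\ket\psi}=r_{U\ket{G_C}}$. Proposition~\ref{prop:unique_rank} then makes $\ket\psi$ Pauli-equivalent, and hence locally equivalent, to $U\ket{G_C}$. So $\ket\psi$ is a circle stabilizer state.

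\textbf{Main obstacle.} The hard step is the identity $c(F||P)-1=\operatorname{nullity}(I(G,P))$ for mixed $P$. The natural choice is to take $P$ with only $X$ and $Z$ entries and induct with the interlacement matrix. If the induction gets messy, I would instead fall back on citing Bouchet's theorem (Proposition 33 of~\cite{brijder2022characterization}) and only verify that his terminology of isotropic systems and tight $3$-matroids matches our $r$ via Lemma~\ref{lemma:substrings_kernel}.
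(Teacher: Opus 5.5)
Your architecture is the paper's. The converse (Euler tour, recoding realised by a local Clifford, Proposition~\ref{prop:unique_rank}) is exactly what the paper does. Your fallback of citing Bouchet (Proposition 33 of~\cite{brijder2022characterization}) is all the paper offers for the forward identity.

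The direct proof you sketch fails as set up, because your coding has $X_v$ and $Z_v$ interchanged relative to the convention behind $I(G,P)$, whose columns are $e_v, A_v, e_v+A_v$ for $Z,X,Y$. The claimed identity is already false for $n=1$. Let $F$ be one vertex with two loops and $C$ the tour $vv$, so $G=K_1$ and $\ket{G}=\ket{+}$.
\begin{itemize}
\item The transition of $C$ (your $X_v$) keeps $F||P$ connected, so $\nu_F=0\neq 1=\operatorname{nullity}[A_v]$.
\item The splitting transition (your $Z_v$) gives two components, so $\nu_F=1\neq 0=\operatorname{nullity}[e_v]$.
\end{itemize}
Your second bullet already contains this contradiction: ``$c$ goes up by one'' but ``nullity $0$''. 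It also has a further error. For a non-complete splitter, $c(F||P)$ counts components, not circuits. Splitting $C=vW_1vW_2$ at $v$ raises $c$ only if $W_1,W_2$ share no vertex, i.e.\ only if chord $v$ crosses nothing, i.e.\ $A_v=0$.

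The correct coding is:
\begin{itemize}
\item $Z_v$ is the transition of $C$; detaching it deletes the chord, matching a $Z$-measurement giving $G-v$.
\item $X_v$ is the orientation-consistent transition that splits $C$.
\item $Y_v$ is the orientation-reversing transition.
\end{itemize}
With this coding, take $P$ complete and eliminate the unit columns $e_v$ at the $Z$-entries. Then $\operatorname{nullity}(I(G,P))$ becomes the nullity of $A(G)$ restricted to the non-$Z$ vertices, with diagonal $1$ at the $Y$-vertices. That is exactly Traldi's circuit-nullity formula~\cite{traldi2011binary}. Your labels actually give $r_F=r_{H^{\otimes n}\ket{G}}$. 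That would still suffice for the proposition, but it is not the identity you set out to prove.

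Even with the coding fixed, the inductive step is not yet a proof.
\begin{itemize}
\item You list two chord operations: deletion, and deletion after local complementation at $v$. These handle $Z_v$ and $Y_v$ only.
\item The splitting transition $X_v$ at a non-isolated $v$ leaves an Euler system with two circuits. Restoring a single tour needs a pivot $G\wedge vw$, mirroring the $X$-measurement rule. An isolated $v$ is the only case where $c$ increases.
\item In the $Y$ and $X$ cases the post-measurement state is only local-Clifford equivalent to the vertex-minor's graph state. You must check that these corrections (e.g.\ $X\leftrightarrow Y$ on $N(v)$ after measuring $Y_v$) coincide with the recoding of $F$ induced by reversing $W_1$. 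That reversal swaps the two non-$C$ transitions exactly at vertices interlaced with $v$. This bookkeeping is the real content of Bouchet's theorem, and ``tracked by the measurement rules'' does not cover it.
\item Restricting to $P$ with only $X,Z$ entries, as your last paragraph suggests, would not by itself give equality on strings containing $Y$.
\end{itemize}
A cleaner route is to prove the identity only for complete $P$, via Traldi's formula. Then extend to all $P$: on both sides, at most one of the three choices at an undetached vertex raises the nullity.
\begin{itemize}
\item On the state side, two such stabilizers would agree off $v$ and anticommute at $v$.
\item On the graph side, each component of $F||P$ minus $v$ receives an even number of $v$'s half-edges, so at most one pairing separates them.
\end{itemize}
Hence on both sides $\nu(P)=\min$ of $\nu$ over the completions of $P$.
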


Note that the above can be used as an alternative definition of circle stabilizer states, which does not rely on a graph-theoretic characterization.
We are not going to provide a formal proof of Proposition~\ref{prop:circle_characterization} here. However, in the following, we attempt to give a brief sketch by translating the results in the literature in our notation. Suppose that $\ket{\psi}$ is locally equivalent to $\ket{G}$, where $G$ is a circle graph. As discussed above, there exists a $4$-regular multigraph $F$ and an Eulerian tour $C$ corresponding to $G$. A result of Bouchet states that $r_{\ket{G}}=r_{F}$ for some choice of coding of $F$  (see Proposition 33 in~\cite{brijder2022characterization}). We remark that the result in~\cite{brijder2022characterization} is written in the language of multimatroids. In particular, the isotropic $3$-matroid $\mathcal{Z}_3(G)$ has the rank defined in Definition~\ref{def:rank_nullity_meas} (and more specifically its nullity from Lemma~\ref{lemma:substrings_kernel}), while the Eulerian $3$-matroid $Q(F)$ has the rank stated in Definition~\ref{def:rank_null_4reg}. We note that~\cite{brijder2022characterization} restricts to full-weight Pauli strings, but this is sufficient to reconstruct the ranks of all Pauli strings. The converse proceeds as follows. Suppose that there exists a coded $4$-regular multigraph $F$ such that $r_{\ket{\psi}}=r_F$. Then, by the correspondence above and the result in~\cite{brijder2022characterization}, there exists a circle graph $G$ and a recoding $F'$ of $F$ such that $r_{\ket{G}}=r_{F'}$. Since a coding is just a relabeling of the Pauli's, there exists a stabilizer state $\ket{\phi}$ locally equivalent to $\ket{G}$ whose rank function $r_{\ket{\phi}}$ is isomorphic to $r_F$. Here isomorphic means that the functions are equivalent up to a local relabeling of the Pauli's, which can always be achieved through single-qubit Clifford unitaries.
Hence, by Proposition~\ref{prop:unique_rank}, it follows that $\ket{\psi}$ is locally equivalent to $\ket{G}$.

\medskip

We give an example of a $4$-regular multigraph associated to the $\ket{K_3}$ state in Figure~\ref{fig:detachment_example_appendix}, where we also show the detachments according to the splitters $YYY, XXX, XYZ$. Note that these have $3$, $2$ and $1$ connected components, respectively; this matches the expected nullities of $\nu(YYY)=2$, $\nu(XXX)=1$ and $\nu(XYZ) = 0$ found in Section \ref{sec:single_qubit}. Note that this uses a different coding from the example in Figure~\ref{fig:detachment_example}.

\begin{figure}[h!]
    \centering

\begin{tikzpicture}[
    scale=0.8,
    every node/.style={font=\small},
    vtx/.style={circle, fill=black, inner sep=1.9pt},
    hed/.style={line width=0.9pt, line cap=round, line join=round},
    arr/.style={-{Latex[length=2.7mm,width=2mm]}, line width=0.9pt}
]


\begin{scope}[xshift=4.4cm, yshift=3.2cm]
    \filldraw[fill=gray!20, draw=black] (0,0) circle (1.55cm);

    \coordinate (T) at (0,0.95);
    \coordinate (L) at (-0.95,-0.62);
    \coordinate (R) at (0.95,-0.62);

    \draw[hed] (T) .. controls (-0.70,0.75) and (-1.20,0.00) .. (L);
    \draw[hed] (T) .. controls (-0.18,0.42) and (-0.48,-0.20) .. (L);

    \draw[hed] (T) .. controls (0.70,0.75) and (1.20,0.00) .. (R);
    \draw[hed] (T) .. controls (0.18,0.42) and (0.48,-0.20) .. (R);

    \draw[hed] (L) .. controls (-0.42,-0.22) and (0.42,-0.22) .. (R);
    \draw[hed] (L) .. controls (-0.45,-0.98) and (0.45,-0.98) .. (R);

    \node[vtx] at (T) {};
    \node[vtx] at (L) {};
    \node[vtx] at (R) {};
\end{scope}





\begin{scope}[xshift=0cm, yshift=-0.4cm]
    \filldraw[fill=gray!20, draw=black] (0,0) circle (1.90cm);

    %
    \coordinate (topL) at (-0.18,0.84);
    \coordinate (topR) at ( 0.18,0.84);

    \foreach \ang in {0,120,240}{
        \begin{scope}[rotate=\ang]

            \coordinate (A) at (-0.18,0.84);
            \coordinate (B) at (-0.817,-0.264);

            \draw[hed]
                (A)
                .. controls (-0.72,0.82) and (-1.12,0.16) ..
                (B);

            \draw[hed]
                (A)
                .. controls (-0.31,0.46) and (-0.48,0.02) ..
                (B);

            \node[vtx] at (A) {};
            \node[vtx] at (B) {};

        \end{scope}
    }

    \node[scale=1.2] at ( 90:1.52) {$Y$};
    \node[scale=1.2] at (210:1.52) {$Y$};
    \node[scale=1.2] at (330:1.52) {$Y$};
\end{scope}


\begin{scope}[xshift=4.4cm, yshift=-0.7cm]
    \filldraw[fill=gray!20, draw=black] (0,0) circle (1.90cm);

    \coordinate (To) at (0,0.92);
    \coordinate (Lo) at (-0.80,-0.46);
    \coordinate (Ro) at (0.80,-0.46);

    \coordinate (Ti) at (0,0.42);
    \coordinate (Li) at (-0.36,-0.21);
    \coordinate (Ri) at (0.36,-0.21);

    \draw[hed] (To) .. controls (-0.55,0.74) and (-0.92,0.10) .. (Lo);
    \draw[hed] (Lo) .. controls (-0.25,-0.84) and (0.25,-0.84) .. (Ro);
    \draw[hed] (Ro) .. controls (0.92,0.10) and (0.55,0.74) .. (To);

    \draw[hed] (Ti) .. controls (-0.22,0.30) and (-0.44,-0.01) .. (Li);
    \draw[hed] (Li) .. controls (-0.10,-0.36) and (0.10,-0.36) .. (Ri);
    \draw[hed] (Ri) .. controls (0.44,-0.01) and (0.22,0.30) .. (Ti);

    \node[vtx] at (To) {};
    \node[vtx] at (Lo) {};
    \node[vtx] at (Ro) {};
    \node[vtx] at (Ti) {};
    \node[vtx] at (Li) {};
    \node[vtx] at (Ri) {};

    \node[scale=1.2] at ( 90:1.49) {$X$};
    \node[scale=1.2] at (210:1.49) {$X$};
    \node[scale=1.2] at (330:1.49) {$X$};
\end{scope}



\begin{scope}[xshift=8.8cm, yshift=-0.4cm]
    \filldraw[fill=gray!20, draw=black] (0,0) circle (1.90cm);

    \begin{scope}[xscale=-1]

        \coordinate (Touter) at (0,0.92);
        \coordinate (Tinner) at (0,0.42);

        \coordinate (Ltop) at (-1.02,-0.37);
        \coordinate (Lbot) at (-0.82,-0.71);

        \coordinate (Rtop) at ( 1.02,-0.37);
        \coordinate (Rbot) at ( 0.82,-0.71);


        \draw[hed]
            (Touter)
            .. controls (-0.72,0.84) and (-1.17,0.23) ..
            (Ltop);

        \draw[hed]
            (Ltop)
            .. controls (-0.58,-0.02) and (-0.18,0.30) ..
            (Tinner);

        \draw[hed]
            (Tinner)
            .. controls (0.18,0.30) and (0.58,-0.02) ..
            (Rbot);

        \draw[hed]
            (Rbot)
            .. controls (0.26,-0.92) and (-0.26,-0.92) ..
            (Lbot);

        \draw[
            hed,
            preaction={draw=gray!20, line width=2.8pt}
        ]
            (Lbot)
            .. controls (0.02,-0.18) and (0.62,-0.06) ..
            (Rtop);

        \draw[hed]
            (Rtop)
            .. controls (1.17,0.23) and (0.72,0.84) ..
            (Touter);

        \node[vtx] at (Touter) {};
        \node[vtx] at (Tinner) {};
        \node[vtx] at (Ltop)   {};
        \node[vtx] at (Lbot)   {};
        \node[vtx] at (Rtop)   {};
        \node[vtx] at (Rbot)   {};

    \end{scope}

    \node[scale=1.2] at ( 90:1.55) {$X$};
    \node[scale=1.2] at (210:1.55) {$Z$};
    \node[scale=1.2] at (330:1.55) {$Y$};
\end{scope}
\end{tikzpicture}
\caption{A $4$-regular multigraph corresponding to the $\ket{K_3}$ state in the top, and three possible detachments at the bottom. The bottom three multigraphs have $3$, $2$ and $1$ connected components, matching the nullities $\nu(YYY)=2$, $\nu(XXX)=1$ and $\nu(XYZ)=0$, respectively.
}
\label{fig:detachment_example_appendix}
\end{figure}

We note that this connection between circle graphs and $4$-regular graphs has been exploited before in the quantum information community~\cite{dahlberg2020counting, dahlberg2018transforming, dahlberg2022complexity, harrison2025fermionic}. In particular in \cite{dahlberg2020counting, dahlberg2018transforming, dahlberg2022complexity}, the authors studied Eulerian tours on $4$-regular multigraphs, which provide information on vertex-minors and the number of graphs locally equivalent to a given graph. An Eulerian tour of a graph $F$ is a closed walk in $F$ that traverses every edge of $F$ exactly once. Such an Eulerian tour on $F$ corresponds to a complete Pauli string $P$, such that $c(F||P) = 1$; Definition \ref{def:rank_null_4reg} then tells us that these correspond to complete Pauli strings with zero nullity, i.e.~single-qubit Pauli measurements which yield completely random outcomes. Furthermore, a complete Pauli string $P$ with $\nu(P)=0$ corresponds to a generating set of \emph{destabilizers} consisting only of single-qubit Pauli operators; see~\cite{aaronson2004improved}. Our work can be seen as an extension of the work from \cite{dahlberg2020counting, dahlberg2018transforming, dahlberg2022complexity}; in our work, we capture more information by not restricting to those complete Pauli strings with zero nullity.

\subsection{Entanglement measures of circle stabilizer states through $4$-regular multigraphs}
The previous subsection expressed the rank function of a $4$-regular multigraph $F$ in terms of the number of connected components of the detachment $F||P$. Unfortunately, the quantity $c\left(F||P\right)$ can be hard to understand. We show in this subsection that the rank function can be lower bounded, and sometimes even be explicitly expressed, in terms of \emph{circuit-partitions} of $F$. This allows us to obtain bounds on the two considered entanglement measures in terms of properties of \emph{circuit-partitions} of $F$. We start by recalling the definition of a circuit and a cycle in multigraphs.

\begin{definition}
A \emph{circuit} in a multigraph $F$ is a closed trail, i.e., a closed walk that does not repeat any edge. A \emph{cycle} is a circuit that does not repeat any vertex other than the initial vertex at the end. A \emph{circuit partition} of $F$ is a collection of pairwise edge-disjoint circuits whose union are the edges of $F$. A \emph{cycle partition} of $F$ is a circuit partition all of whose circuits are cycles.  
\end{definition}

We remind the reader that in a multigraph one can have circuits of length one (loops) and circuits of length two (parallel edges). Note that to specify a circuit partition, it suffices to choose a pairing of the $4$ half-edges at each vertex, i.e., a transition for each vertex. As such, circuit partitions on coded $4$-regular multigraphs are in bijection with complete Pauli strings, where the choice of $P_v$ at each vertex indicates the pairing chosen (see, e.g., Figure~\ref{fig:detachments}). This observation leads to the following definition.

\begin{definition}\label{def:number_circuits}
Let $F$ be a coded $4$-regular multigraph. The number of circuits in a circuit partition described by  a complete Pauli string $P$ is denoted by $\operatorname{cp}(F, P)$.
\end{definition}

We remark that for $P$ complete it holds that $c(F||P)=\textrm{cp}(F,P)$, where we define $\textrm{cp}(F, P)$ to be the number of connected components of the detachment graph $F||P$. That is,  $c(F||P)$ is exactly the number of circuits in the circuit partition of $F$ described by $P$. This is a consequence of the construction described in Definition~\ref{def:detachment}, where every circuit in $F$ becomes a connected component in $F||P$.

\subsubsection{Distance of circle stabilizer states in terms of $4$-regular multigraphs}

We now analyze the distance of circle stabilizer states in terms of $4$-regular multigraphs. Recall that the girth of a multigraph $F$ is the length of the shortest cycle in $F$ and is denoted by $g(F)$. The next result shows that the distance of a circle stabilizer state is upper-bounded by the girth of any $4$-regular multigraph associated with it.

\begin{lemma}\label{lemma:girth_distance}
Let $\ket{\psi}$ be a circle stabilizer state and let $F$ be a coded $4$-regular multigraph associated with it, i.e., a coded $4$-regular multigraph such that $r_{\ket{\psi}}=r_F$. Then it holds that $g(F)\geq d({\ket{\psi}})$.
\end{lemma}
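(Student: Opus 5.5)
By Lemma~\ref{corollary:dist_nullity}, the distance equals the smallest weight of a Pauli string $P$ with $\nu(P)=1$. More precisely, the second half of that proof shows that any $P$ with $\nu(P)\geq 1$ already satisfies $d(\ket{\psi})\leq w(P)$, because $\mathcal I(\mathcal S,P)$ then contains a non-identity stabilizer supported inside $\operatorname{supp}(P)$. Since $r_{\ket{\psi}}=r_F$, we also have $\nu_{\ket{\psi}}=\nu_F$, and by \eqref{eq:nullity_F} together with $c(F)=1$ this gives $\nu_F(P)=c(F||P)-1$. It therefore suffices to exhibit a splitter $P$ with $w(P)=g(F)$ whose detachment $F||P$ is disconnected.

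Let $C$ be a shortest cycle of $F$, of length $g=g(F)$, with vertex set $U$, so $|U|=g$. This covers a loop ($g=1$) and a pair of parallel edges ($g=2$). At each $v\in U$, the cycle $C$ uses exactly two of the four half-edges at $v$. For a loop, these are the two ends of the loop; otherwise they are one half-edge from each of the two cycle edges at $v$. Among the three pairings at $v$, exactly one pairs these two half-edges together. Define $P_v$ to be the Pauli label of that pairing for $v\in U$, and $P_v=I$ otherwise. Then $w(P)=g$.

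In $F||P$, each $v\in U$ is split into $v'$, incident to the two half-edges of $C$ at $v$, and $v''$, incident to the other two half-edges. The edges of $C$ together with the vertices $\{v'\}_{v\in U}$ form a closed subgraph. Every $v'$ has degree $2$, coming entirely from edges of $C$. Since $C$ is a cycle, no edge of $C$ meets any other vertex, so this subgraph is a union of connected components of $F||P$. The complementary set of edges is nonempty: $F$ has $2|V|\geq 2g>g$ edges and $C$ uses only $g$ of them. Those remaining edges must live in other components, because they attach only to the $v''$ or to unsplit vertices. Hence $c(F||P)\geq 2$, so $\nu(P)\geq 1$ and $d(\ket{\psi})\leq w(P)=g(F)$.

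The main point needing care is that $C$ being a cycle, not merely a circuit, is what guarantees each $v\in U$ is visited once. This makes the pairing at $v$ well defined and gives the $v'$ copies degree exactly $2$ from $C$. The other point to check is the degenerate girth cases $g=1,2$, which the argument above handles uniformly. If $F$ has no cycle at all the statement is vacuous, but a $4$-regular multigraph always contains one.
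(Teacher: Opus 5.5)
Your proof is correct and follows essentially the same route as the paper. Both take a shortest cycle and choose, at each of its vertices, the transition that pairs the cycle's two half-edges. The cycle then splits off as its own component of $F||P$, so $\nu(P)\geq 1$, which yields $d(\ket{\psi})\leq w(P)=g(F)$. Your edge count $2|V|>g$, which guarantees a second component exists, is a small detail the paper leaves implicit.
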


\begin{proof}
Let $F$ be a coded $4$-regular multigraph associated with $\ket{\psi}$ such that $r_{\ket{\psi}}=r_F$ and $\nu_{\ket{\psi}}=\nu_F$. Let $\gamma$ be a shortest cycle of $F$, and let $U=V(\gamma)$ be its set of vertices. We construct a Pauli string $P$ with support $U$ as follows. For each vertex $v\in U$, the
cycle $\gamma$ uses exactly two of the four half-edges incident with $v$. Since $F$ is coded, for each vertex $v\in V$, there exists a bijection of each pairing of half edges with $\{X,Y,Z\}$. Choose $P_v\in\{X,Y,Z\}$ to be the transition such that one pairing consists of precisely
the two half-edges used by the cycles $\gamma$. For $v\notin U$, set $P_v=I$.

\medskip

Consider the detachment $F||P$. By construction, at every vertex $v\in U$ the two half-edges of $\gamma$ are placed on one of the two new vertices created from $v$, while the two remaining half-edges are placed on the other new vertex. Therefore, the edges of $\gamma$ form a connected component of $F||P$ which is separated from the rest of the graph. In
particular, the detachment increases the number of connected components. This implies that $\nu_{\ket{\psi}}(P)=\nu_F(P)=c(F||P)-c(F)\geq 1$. Thus, the local commutative subgroup $\mathcal I(\mathcal S_{\ket{\psi}},P)$ contains a non-identity stabilizer and we obtain that
\begin{align*}
    d(\ket{\psi})\leq w(P)=|U|\leq g(F).
\end{align*}
This concludes the proof of the lemma.
\end{proof}

The above result also follows from Theorems 5 and 6 in \cite{brijder2015isotropic}. We note that the distance can be strictly smaller than the girth (see Figure~7 in~\cite{brijder2015isotropic}).

\subsubsection{Locally accessible information of circle stabilizer states in terms of $4$-regular multigraphs}

Recall from Definition \ref{corr:alpha_char} that the locally accessible information is defined in terms of a maximization of the nullity function over complete Pauli strings. However, as previously discussed (see Definition \ref{def:number_circuits}), for complete Pauli strings, the correspondence between the rank function on $4$-regular multigraphs and circuits is particularly clean.

\begin{lemma}\label{lemma:nullity_complete_4reg}
Let $\ket{\psi}$ be any circle stabilizer state with an associated $4$-regular connected coded multigraph $F$ on vertices $V$. Let $P$ be a complete Pauli string. Then
\begin{align}
\nu_{\ket{\psi}}(P) = \operatorname{cp}(F, P)-1\ .
\end{align}
\end{lemma}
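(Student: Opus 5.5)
This follows by chaining three facts already established in the paper. First, since $F$ is associated with $\ket{\psi}$, Proposition~\ref{prop:circle_characterization} gives $r_{\ket{\psi}} = r_F$. Because both nullities are defined as $w(P)$ minus the corresponding rank (Definition~\ref{def:rank_nullity_meas} and Definition~\ref{def:rank_null_4reg}), this yields
\begin{align*}
\nu_{\ket{\psi}}(P) = \nu_F(P).
\end{align*}
Second, by \eqref{eq:nullity_F} together with the standing assumption that $F$ is connected, we have
\begin{align*}
\nu_F(P) = c(F||P) - c(F) = c(F||P) - 1.
\end{align*}

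The only step requiring an argument is the identity $c(F||P) = \operatorname{cp}(F,P)$ for complete $P$, which was remarked after Definition~\ref{def:number_circuits}. I would justify it as follows. Since $P$ is complete, every vertex $v$ is split into $v'$ and $v''$, and each of these new vertices is incident to exactly two half-edges. Hence $F||P$ is a $2$-regular multigraph (loops are allowed). Every connected component of a finite $2$-regular multigraph is a single closed trail, possibly a loop or a pair of parallel edges.

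Next, I would check that these closed trails are exactly the circuits of the circuit partition determined by $P$. Traversing a component of $F||P$ corresponds to walking in $F$ and, at each vertex $v$, leaving along the half-edge that is paired with the entering one by the transition $P_v$. This is precisely the rule that defines the circuits of the partition. Since the edge set of $F||P$ equals that of $F$, the components are in bijection with the circuits, so $c(F||P) = \operatorname{cp}(F,P)$.

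Combining the three equalities gives $\nu_{\ket{\psi}}(P) = \operatorname{cp}(F,P) - 1$. The only potential obstacle is the bookkeeping in the second step for loops and parallel edges. For instance, a loop at $v$ whose two half-edges are paired by $P_v$ becomes a loop at $v'$, which forms its own component and is simultaneously a length-one circuit; this is consistent with the correspondence.
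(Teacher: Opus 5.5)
Your proof is correct and follows the same chain as the paper, $\nu_{\ket{\psi}}(P)=\nu_F(P)=c(F||P)-c(F)=\operatorname{cp}(F,P)-1$, with $c(F)=1$ by connectivity. The paper only asserts $c(F||P)=\operatorname{cp}(F,P)$ for complete $P$, whereas you justify it through the $2$-regularity of $F||P$, which is a welcome addition; note also that $r_{\ket{\psi}}=r_F$ is what ``associated'' means here rather than a consequence of Proposition~\ref{prop:circle_characterization}, though this does not affect the argument.
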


\begin{proof}
The equality follows from the fact that, for a Pauli string of full support, the number of components of $F||P$ is equal to the number of circuits $\textrm{cp}(F,P)$ associated with the transitions of $P$. Thus, by~\eqref{eq:nullity_F}, we have $\nu_{\ket{\psi}}(P)=\nu_{F}(P)=c(F||P)-c(F)=\operatorname{cp}(F,P)-1$.
\end{proof}

Note that the above equality can be seen as a circle stabilizer state version of the extended Cohn-Lempel equality (see Theorem 4 in \cite{traldi2011binary}), which relates the size of a circuit-partition $P$ of $F$ and the nullity of $I(G, P)$. As a consequence, the above lemma leads to the following characterization of the locally accessible information for circle stabilizer states.

\begin{corollary}\label{corr:alpha_circuit_partition}
Let $\ket{\psi}$ be any circle stabilizer state with an associated coded $4$-regular multigraph $F$ that is connected. Then the locally accessible information is the largest number of circuits in a circuit partition of $F$, minus one. That is
\begin{align*}
\alpha_{\rm loc}\left(\ket{\psi}\right) = \max_{\mathclap{\substack{P \textrm{ complete}}}}~\operatorname{cp}(F, P)-1\ .
\end{align*}
Moreover, the maximum circuit partition is a cycle partition, i.e., every circuit is a cycle.
\end{corollary}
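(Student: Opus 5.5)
The first claim follows directly from Lemma~\ref{lemma:nullity_complete_4reg} together with the definition of $\alpha_{\rm loc}$. The second claim needs a short uncrossing argument. By Definition~\ref{corr:alpha_char}, $\alpha_{\rm loc}(\ket{\psi})=\max_P \nu_{\ket{\psi}}(P)$, and the maximum may be restricted to complete Pauli strings. For complete $P$, Lemma~\ref{lemma:nullity_complete_4reg} gives $\nu_{\ket{\psi}}(P)=\operatorname{cp}(F,P)-1$. Since circuit partitions of the coded multigraph $F$ are in bijection with complete Pauli strings (a choice of transition at every vertex), maximizing over $P$ is the same as maximizing the number of circuits over all circuit partitions. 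This gives the displayed equality.

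For the cycle statement I would argue by contradiction via a local switch. Take a complete $P$ achieving the maximum, and suppose some circuit $\gamma$ in the partition is not a cycle. Then $\gamma$ visits some vertex $v$ twice. Because $v$ has degree $4$, $\gamma$ uses all four half-edges at $v$, and the transition $P_v$ pairs them into two passes of $\gamma$ through $v$. Cutting $\gamma$ at $v$ splits it into two closed sub-trails $\gamma_1,\gamma_2$, each starting and ending at $v$. The transition at $v$ that pairs the two ends of $\gamma_1$ together, and the two ends of $\gamma_2$ together, is one of the other two codes $Q_v\neq P_v$. Let $P'$ agree with $P$ except that $P'_v=Q_v$. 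All other transitions are unchanged, so every circuit other than $\gamma$ survives, and $\gamma$ is replaced by the two circuits $\gamma_1,\gamma_2$. Hence $\operatorname{cp}(F,P')=\operatorname{cp}(F,P)+1$, contradicting maximality. A loop at $v$ is itself a closed trail; if it is one of the pieces, it becomes a circuit of length one, which counts as a cycle.

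The only delicate step is confirming that a circuit through a degree-$4$ vertex twice really decomposes into two closed trails at $v$, and that exactly one of the two alternative transitions separates them. The other alternative reconnects $\gamma_1$ and $\gamma_2$ into a single circuit, traversing one of them in reverse. I would check this by writing $\gamma$ as the cyclic sequence $v\,a_1\cdots a_k\,v\,b_1\cdots b_m$, so that the four half-edges at $v$ are those to $a_1,a_k,b_1,b_m$. The current pairing is $\{a_k,b_1\},\{b_m,a_1\}$. The splitting pairing is $\{a_1,a_k\},\{b_1,b_m\}$. Iterating the switch terminates, since the number of circuits is bounded by the number of edges, $2n$. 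Therefore any maximizing partition, in fact, already consists of cycles.
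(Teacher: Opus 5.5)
Your proposal is correct and follows the paper's proof. The equality comes from restricting $\alpha_{\rm loc}$ to complete Pauli strings and applying Lemma~\ref{lemma:nullity_complete_4reg}, and the cycle statement comes from switching the transition at a twice-visited vertex, which splits a non-cycle circuit into two; your half-edge bookkeeping (taking $\{a_1,a_k\},\{b_1,b_m\}$ as the splitting pairing) just fills in the paper's one-line justification.
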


\begin{proof}
By the characterization of $\alpha_{\rm loc}$ in terms of the nullity function,
we have
\begin{align*}
\alpha_{\rm loc}(\ket{\psi})
=
\max_{\mathclap{\substack{P \textrm{ complete}}}}
~\nu_{\ket{\psi}}(P).
\end{align*}
Since $F$ is a $4$-regular multigraph such that $\nu_{\ket{\psi}}=\nu_F$ and by Lemma~\ref{lemma:nullity_complete_4reg}, we have
\begin{align*}
    \alpha_{\rm loc}(\ket{\psi})=\max_{\mathclap{\substack{P \textrm{ complete}}}}\operatorname{cp}(F,P)-1.
\end{align*}
The moreover part follows from the fact that it is always possible to turn a non-cycle circuit into two smaller circuits by changing a transition.
\end{proof}

\section{Maximum entanglement in circle stabilizer states}\label{sec:max_ent_in_circ}

In this section we study extremal properties of the entanglement measure introduced in Definition~\ref{sec:ent_measures} for circle stabilizer states. An important result used throughout this section is the following generalization of the classical Moore bound found in~\cite{alon2002moore}.

\begin{theorem}[Average degree Moore bound]\label{thm:moore}
Let $G$ be a graph on $n$ vertices of average degree $r$ and girth $g$. Then it holds that $g\leq 2\log_{r-1} n+2$.
\end{theorem}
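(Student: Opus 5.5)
The plan is to reproduce the non-backtracking walk argument of Alon, Hoory and Linial~\cite{alon2002moore}. We take $r>2$, since otherwise the bound is vacuous or meaningless.

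\textbf{Step 1: reduce to minimum degree at least $2$.} Repeatedly delete vertices of degree $0$ or $1$. Each deletion removes at most one edge together with one vertex. Because $r\ge 2$, this does not decrease the average degree, does not increase $n$, and does not change the girth, since no cycle passes through a vertex of degree at most $1$. So we may assume $\delta(G)\ge 2$; the average degree is then still at least $r$, and it suffices to prove the bound with the new $n$.

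\textbf{Step 2: distinctness of endpoints.} Let $k=\lfloor (g-1)/2\rfloor$, so that $2k<g$ and $k\ge (g-2)/2$. Fix a directed edge $(u,v)$ and consider all non-backtracking walks that start with $u\to v$ and then take $k$ further steps.
\begin{itemize}
\item Two distinct such walks with the same endpoint would, after removing their common prefix, close up into a cycle of length at most $2k<g$, which is impossible.
\item The same argument shows the endpoints are distinct from each other, so for every directed edge $e$ the number $N_k(e)$ of such walks satisfies $N_k(e)\le n$.
\end{itemize}

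\textbf{Step 3: lower-bound $\max_e N_k(e)$ (the main obstacle).} I would use an entropy argument.
\begin{itemize}
\item Run the non-backtracking random walk started at a uniformly random directed edge. At each step it moves from head $v$ to one of the $d(v)-1$ allowed continuations uniformly.
\item The uniform distribution on the $nr$ directed edges is stationary for this walk.
\item The walk of length $k+1$ therefore has entropy $\log(nr)+k\,\mathbb{E}[\log(d(v)-1)]$, where $v$ is the head of a uniform directed edge. This is the degree-biased average $\frac{1}{nr}\sum_v d(v)\log(d(v)-1)$.
\item The function $x\log(x-1)$ is convex on $x\ge 2$, and $\delta\ge 2$ guarantees every degree lies in this range. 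Jensen's inequality then gives $\frac1n\sum_v d(v)\log(d(v)-1)\ge r\log(r-1)$, hence $\mathbb{E}[\log(d(v)-1)]\ge \log(r-1)$.
\end{itemize}
Since entropy is at most the logarithm of the support size, we get $\log\big(\sum_e N_k(e)\big)\ge \log(nr)+k\log(r-1)$. There are $nr$ directed edges, so some $e$ has $N_k(e)\ge (r-1)^k$.

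\textbf{Step 4: conclude.} Combining Steps 2 and 3 gives $(r-1)^k\le n$, i.e.\ $k\le\log_{r-1}n$. Using $k\ge (g-2)/2$ yields $g\le 2\log_{r-1}n+2$.

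The delicate points are the stationarity of the uniform directed-edge distribution and the convexity step. Convexity requires all degrees to be at least $2$, which is exactly why Step 1 is needed.
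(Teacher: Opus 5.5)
Your proposal is correct and is essentially the Alon--Hoory--Linial non-backtracking random-walk proof, which the paper cites without reproducing. The ingredients match: pruning to minimum degree $2$, the uniform stationary measure on directed edges, distinct endpoints of walks of length $k+1$ with $2k<g$, and convexity of $x\log(x-1)$ on $[2,\infty)$; your entropy bound plays the role of their averaging step.

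One small correction to your opening remark: for $1<r<2$ the inequality is not vacuous but false. For example, a triangle plus an isolated vertex has $r=3/2$ and $g=3$, while $2\log_{1/2}4+2=-2$. So $r>2$ is an implicit hypothesis of the statement rather than a convenience. This is harmless here, since the paper only applies the bound with $r\ge 3$.
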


We now use Theorem~\ref{thm:moore} to provide a general upper bound for the distance of a circle stabilizer state.

\begin{theorem}\label{thm:distance_circle_graph}
The distance of a circle stabilizer state $\ket{\psi}$ on $n$ qubits satisfies $d(\ket{\psi}) = O\left(\log n\right)$.
\end{theorem}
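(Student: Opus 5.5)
The plan is to combine Proposition~\ref{prop:circle_characterization}, Lemma~\ref{lemma:girth_distance} and Theorem~\ref{thm:moore}. Given a circle stabilizer state $\ket{\psi}$ on $n$ qubits, Proposition~\ref{prop:circle_characterization} provides a coded $4$-regular multigraph $F$ on $n$ vertices with $r_F=r_{\ket{\psi}}$; as remarked after Definition~\ref{def:rank_null_4reg}, we may take $F$ connected. Lemma~\ref{lemma:girth_distance} then gives $d(\ket{\psi})\le g(F)$. It therefore suffices to show $g(F)=O(\log n)$ for every $4$-regular multigraph on $n$ vertices.

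If $F$ has a loop or a pair of parallel edges, then $g(F)\le 2$, and there is nothing to prove. One point needs checking here: the proof of Lemma~\ref{lemma:girth_distance} must also work for such degenerate cycles.
\begin{itemize}
\item For a loop at $v$, the pairing at $v$ grouping the loop's two half-edges splits the loop off as its own component. This gives $\nu(P)\ge 1$ with $w(P)=1$.
\item For parallel edges between $u$ and $v$, pairing the two parallel half-edges at each endpoint detaches a $2$-cycle. This gives $w(P)=2$.
\end{itemize}
So the bound $d\le g(F)$ holds with girth interpreted in the multigraph sense, which the lemma already uses.

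Otherwise $F$ is a simple $4$-regular graph. Its average degree is $r=4$, so Theorem~\ref{thm:moore} yields
\[
g(F)\le 2\log_3 n+2 .
\]
Hence $d(\ket{\psi})\le 2\log_3 n+2=O(\log n)$ in all cases.

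I do not expect any step to be a real obstacle. The only subtlety is making sure Theorem~\ref{thm:moore} is applied to a simple graph, or that multigraph girth is handled separately as above. One could alternatively apply the Moore bound to the underlying simple graph, but that graph need not be regular, and its average degree could drop below $3$, where $\log_{r-1}$ degenerates. The case split above avoids this issue entirely.
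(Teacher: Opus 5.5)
Your proposal is correct and follows the paper's proof essentially verbatim: Proposition~\ref{prop:circle_characterization} gives $F$, Lemma~\ref{lemma:girth_distance} gives $d(\ket{\psi})\le g(F)$, the multigraph case gives $g(F)\le 2$, and Theorem~\ref{thm:moore} with $r=4$ handles the simple case. Your extra check that the construction in Lemma~\ref{lemma:girth_distance} also covers loops and $2$-cycles is correct. The paper leaves that detail implicit.
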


\begin{proof}
Since $\ket{\psi}$ is a circle stabilizer state, by Proposition~\ref{prop:circle_characterization} there exists a coded $4$-regular multigraph $F$ on the same vertex set such that $r_{\ket{\psi}}=r_F$. By Lemma~\ref{lemma:girth_distance}, the distance of $\ket{\psi}$ is bounded above by the girth of $F$, that is, $d(\ket{\psi})\leq g(F)$. If $F$ contains parallel edges or self loops, i.e., it is a proper multigraph, then $g(F)\leq 2$ and we are done. Otherwise, by applying Theorem~\ref{thm:moore} with $r=4$ for the graph $F$, we obtain that $d(\ket{\psi})\leq g(F)=O(\log n)$.
\end{proof}

The above proof can be extended to any generalized Hamming weight $d_\ell$ for fixed $\ell$. We first note that, for a fixed $\ell$, the distance $d_\ell$ is well-defined for sufficiently large $n$. Indeed, every stabilizer state $\ket{\psi}$ is locally equivalent to a graph state $\ket{G}$. By Ramsey's theorem~\cite{graham1991ramsey}, if $G$ is sufficiently large, then $G$ contains either an independent set of
size $\ell$ or a clique of size $\ell+1$. In the first case, the graph $G$ itself has an independent set of size $\ell$. In the second case, after performing a local complementation at one vertex of the clique, the remaining $\ell$ vertices of the clique form an independent set. Hence, in both cases, some graph locally equivalent to $G$ contains an independent set of size $\ell$. By Proposition~\ref{prop:alpha_loc_independent}, this implies
$\alpha_{\rm loc}(\ket{\psi})\geq \ell$ and consequently $d_\ell(\ket{\psi})$ is well-defined. We remark that a similar argument was used in~\cite{ascoli2026almost} to define the so-called vertex-minor Ramsey number (see~\cite{ascoli2026almost, bae2026vertex}). We also note the following corollary of Theorem~\ref{thm:moore}. 

\begin{corollary}\label{lem:many_cycles}
Let $F$ be a $4$-regular multigraph on $n$ vertices. Then $F$ contains at least $n/(20\log n)$ edge-disjoint cycles, each of length at most $10\log n$ for sufficiently large $n$.
\end{corollary}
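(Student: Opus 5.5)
We will prove Corollary~\ref{lem:many_cycles} by a greedy extraction argument: repeatedly find a short cycle via Theorem~\ref{thm:moore}, delete its edges, and continue until too few edges remain. Start with $F_0=F$, which has $n$ vertices and $2n$ edges. Suppose we have already removed $j$ edge-disjoint cycles, each of length at most $10\log n$, obtaining $F_j$. Then $F_j$ has at least $2n-10j\log n$ edges. As long as $j< n/(20\log n)$, this is more than $2n-n/2=3n/2$ edges, so the average degree of $F_j$ (on the same $n$ vertices) is more than $3$.

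\textbf{Extracting the next cycle.} If $F_j$ contains a loop or a pair of parallel edges, we take that as a cycle of length at most $2$. Otherwise $F_j$ is simple with average degree $r>3$, so $r-1>2$. Theorem~\ref{thm:moore} then gives a cycle of length at most
\[
2\log_{r-1}n+2\le 2\log_2 n+2\le 10\log n
\]
for large $n$, with $\log$ in base $2$ as throughout. This bound holds with room to spare even for base-$e$ logarithms, since $2\log_2 n\approx 2.9\ln n$. We remove that cycle's edges and set $j\to j+1$. Iterating yields at least $\lceil n/(20\log n)\rceil$ edge-disjoint cycles, each of length at most $10\log n$.

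\textbf{Main obstacle.} The only technical point is that Theorem~\ref{thm:moore} is stated for graphs of given average degree, while deletions make $F_j$ irregular and can create isolated vertices. Isolated vertices are harmless here, because the Alon--Hoory--Linial bound applies to arbitrary graphs and we compute the average degree over all $n$ vertices, which only makes the bound $2\log_{r-1}n+2$ weaker. The other check is that the Moore bound is applied to a simple graph. This is handled by the multigraph case split above, since loops and parallel edges directly give cycles of length $1$ or $2$.
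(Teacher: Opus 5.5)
Your proof is correct and is essentially the paper's argument. The paper fixes a maximal family of edge-disjoint cycles of length at most $10\log n$ and derives a contradiction if it has fewer than $n/(20\log n)$ members, using the same steps you use: more than $3n/2$ edges remain so the average degree exceeds $3$, the loop/parallel-edge case split, and Theorem~\ref{thm:moore}. Your greedy iteration is simply the constructive version of that maximality argument.
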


\begin{proof}
Set $L=10\log n$ to be the length of the cycles targeted throughout the proof and let $\mathcal{C}$ be a maximal collection of edge-disjoint cycles in $F$, each
of length at most $L$. We will show that $|\mathcal C|\geq n/2L=n/20\log n$. Suppose to the contrary that $|\mathcal C|<n/2L$. Let $\widetilde{F}$ be the multigraph obtained from $F$ by deleting the edges of all cycles in $\mathcal{C}$. Since $F$ is $4$-regular, it has $2n$ edges. Hence, the total number of edges in $\widetilde{F}$ is at least $2n-L\cdot |\mathcal{C}|>3n/2$. Thus the average degree of $\widetilde{F}$ is
at least $3$.

\medskip

We claim that $\widetilde{F}$ contains a cycle of length at most $L$. If $\widetilde{F}$ contains a loop or two parallel edges, then $\widetilde{F}$ contains a cycle of
length at most $2$, which is less than $L$ for all sufficiently large $n$. Otherwise, we may asume that $\widetilde{F}$ is simple and apply Proposition~\ref{thm:moore}. Since $\widetilde{F}$ has average degree larger
than $3$, we obtain
\begin{align*}
g(\widetilde{F})\leq 2\log n+2<10\log n=L.
\end{align*}
This finishes the proof of the claim. Since the new cycle is edge-disjoint from all cycles in $\mathcal{C}$, this contradicts the maximality of $\mathcal{C}$. Thus, $|\mathcal{C}|\geq n/2L$, which concludes the proof of the corollary.
\end{proof}

We now extend the argument from Theorem~\ref{thm:distance_circle_graph} to $d_\ell$ for fixed $\ell$.

\begin{theorem}\label{thm:gen_hamming_weight}
For any fixed $\ell \in \mathbb{N}_{>0}$, the generalized Hamming weight $d_\ell$ of a circle stabilizer state $\ket{\psi}$ satisfies $d_\ell(\ket{\psi}) = O\left(\log n\right)$.    
\end{theorem}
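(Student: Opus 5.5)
Fix $\ell$ and let $F$ be a connected coded $4$-regular multigraph with $r_{\ket{\psi}}=r_F$, as provided by Proposition~\ref{prop:circle_characterization}. By definition of $d_\ell$, it suffices to exhibit a Pauli string $P$ with $w(P)=O(\ell\log n)$ and $\nu_F(P)\geq \ell$. This suffices because nullity grows by at most one when a single non-identity entry is removed, since the rank is nondecreasing under adding entries. Removing entries one at a time therefore passes through every value between $0$ and $\nu(P)$, and so yields a substring $P'$ of weight at most $w(P)$ with $\nu(P')=\ell$ exactly. To see the monotonicity in the graph language: detaching one further vertex changes $c(F||P)$ by at most one, because splitting a vertex into two creates at most one new component. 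So $\nu_F$ changes by at most $1$ per added vertex, and passing from $\nu\geq \ell$ to $\nu=\ell$ by deleting entries is legitimate.

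The construction uses Corollary~\ref{lem:many_cycles}. For $n$ large, $F$ contains at least $n/(20\log n)\geq \ell$ edge-disjoint cycles $\gamma_1,\dots,\gamma_m$, each of length at most $10\log n$. I select $\ell$ of them that are moreover \emph{vertex}-disjoint. Each cycle has at most $10\log n$ vertices, and each vertex lies on at most two edge-disjoint cycles, since it has degree $4$ and a cycle through it uses two half-edges. Hence each cycle meets at most $10\log n$ other cycles of the family, and a greedy choice yields at least $m/(10\log n+1)=\Omega(n/\log^2 n)\geq \ell$ pairwise vertex-disjoint short cycles for large $n$.

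Given vertex-disjoint cycles $\gamma_1,\dots,\gamma_\ell$, I define $P$ exactly as in Lemma~\ref{lemma:girth_distance}. For each $v\in U=\bigcup_i V(\gamma_i)$, let $P_v$ be the transition pairing the two half-edges of the unique cycle through $v$, and set $P_v=I$ elsewhere. In $F||P$, each $\gamma_i$ becomes its own connected component, because at every vertex of $\gamma_i$ its two half-edges sit on a detached copy together with nothing else. This gives $\ell$ components, and if $E(F)\setminus\bigcup_i E(\gamma_i)$ is nonempty, at least one more component carries the remaining edges. The remaining edge set is indeed nonempty, since $F$ has $2n$ edges while the cycles use $O(\ell\log n)$ of them. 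Hence $c(F||P)\geq \ell+1$, so $\nu_F(P)\geq \ell$, while $w(P)\leq 10\ell\log n$. Since $\ell$ is fixed, $d_\ell=O(\log n)$.

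The main obstacle is the disjointness step: it fails if two chosen cycles share a vertex, because $P_v$ can isolate only one pair at $v$. The greedy degree count above should handle this. As a fallback, I would use edge-disjoint cycles directly, picking at a shared vertex the transition matching both cycles' pairings, which is possible since two edge-disjoint cycles through $v$ use complementary pairs. With this choice the component count still works by the same argument.
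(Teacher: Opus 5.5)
Your proof is correct and follows the paper's argument: take $\ell$ short edge-disjoint cycles from Corollary~\ref{lem:many_cycles}, use the splitter that isolates each of them, and count the components of $F||P$. The vertex-disjoint detour is valid but unnecessary, since two edge-disjoint cycles through a degree-$4$ vertex use complementary pairs, so one transition isolates both; this is your fallback and exactly what the paper does. Your monotonicity argument (adding an entry raises the nullity by at most one, so removing entries one at a time cannot skip the value $\ell$) usefully supplies the step from $\nu(P)\geq\ell$ to a substring of nullity exactly $\ell$, which the paper only asserts.
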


\begin{proof}
Since $\ket{\psi}$ is a circle stabilizer state, by Proposition~\ref{prop:circle_characterization} there exists a connected coded $4$-regular multigraph $F$ on the same vertex set such that $r_{\ket{\psi}}=r_F$ and $\nu_{\ket{\psi}}=\nu_F$. Moreover, by Corollary~\ref{lem:many_cycles}, there exists a collection $\mathcal{C}=\{C_1,\ldots,C_\ell\}$ of $\ell$
edge-disjoint cycles in $F$ such that $|C_i|\leq 10\log n$ for every $1\leq i\leq \ell$. Let $W$ be the union of the
vertices of these cycles. We now define a Pauli string $P$ supported on $W$. Note that
\begin{align*}
    |W|\leq\sum_{i=1}^\ell |C_i|\leq 10\ell \log n=O(\log n).
\end{align*}
For each vertex $v\in W$, choose the transition $P_v$ which pairs the two half-edges of each selected cycle passing through $v$. This is well-defined:
since the cycles are edge-disjoint and $F$ is $4$-regular, at most two selected cycles pass through a given vertex, and if two do, they use complementary pairs of half-edges. For the remaining vertices, i.e., $v\notin W$, set $P_v=I$.

\medskip

With this choice of $P$, each selected cycle $C_i$ becomes a connected component of the detachment $F||P$. Hence, because there are edges of $F$ in none of the cycles, the number of connected components satisfies $c(F||P)\geq \ell+1$. Since $F$ is connected, this gives by~\eqref{eq:nullity_F} that
\begin{align*}
\nu_{\ket{\psi}}(P)=\nu_F(P)=c(F||P)-1\geq \ell.
\end{align*}
In particular, this implies that $\dim(\mathcal{I}(\mathcal{S}_{\ket{\psi}},P))\geq \ell$. Thus, there exists a substring $Q$ of $P$ such that $\nu_{\ket{\psi}}(Q)$ and $w(Q)\leq w(P)=|W|=O(\log n)$. Therefore, by Definition~\ref{def:hamming_weights}, we have $d_\ell(\ket{\psi})\leq w(Q)=O(\log n)$. This concludes the proof of the theorem.
\end{proof}

Corollary~\ref{lem:many_cycles} can be also used to give a general lower bound to $\alpha_{\rm loc}$.

\begin{theorem}\label{thm:alpha_circle_graph}
A circle stabilizer state $\ket{\psi}$ satisfies $\alpha_{\rm loc}(\ket{\psi}) = \Omega\left(\frac{n}{\log n }\right)$.
\end{theorem}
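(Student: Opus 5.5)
The plan is to combine Corollary~\ref{corr:alpha_circuit_partition} with Corollary~\ref{lem:many_cycles}. By Proposition~\ref{prop:circle_characterization}, there is a connected coded $4$-regular multigraph $F$ on the same vertex set $V$ with $\nu_{\ket{\psi}}=\nu_F$. By Corollary~\ref{lem:many_cycles}, for sufficiently large $n$ the multigraph $F$ contains a collection $\mathcal{C}=\{C_1,\ldots,C_m\}$ of pairwise edge-disjoint cycles with $m\geq n/(20\log n)$. It therefore suffices to produce a complete Pauli string $P$ with $\operatorname{cp}(F,P)\geq m+1$, or even $\operatorname{cp}(F,P)\geq m$. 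Corollary~\ref{corr:alpha_circuit_partition} then gives
\[
\alpha_{\rm loc}(\ket{\psi})\geq m-1=\Omega(n/\log n).
\]
For small $n$ the statement is vacuous, since constants can be absorbed.

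To build $P$, we follow the transition choice in the proof of Theorem~\ref{thm:gen_hamming_weight}. Let $W$ be the set of vertices lying on some cycle of $\mathcal{C}$. For $v\in W$, let $P_v$ be the pairing that pairs the two half-edges used by a selected cycle through $v$. As noted there, at most two selected cycles pass through $v$, and if two do they use complementary pairs, so $P_v$ is well defined. A loop at $v$ counts as a cycle using both of its half-edges.

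For $v\notin W$, instead of setting $P_v=I$, we choose $P_v$ arbitrarily among $X_v,Y_v,Z_v$, so that $P$ is complete. The resulting circuit partition of $F$ contains each $C_i$ as one of its circuits. Indeed, at every vertex of $C_i$ the transition pairs exactly the two half-edges of $C_i$, so the closed trail determined by the transitions starting along $C_i$ is precisely $C_i$. Thus $\operatorname{cp}(F,P)\geq m$, and moreover $\operatorname{cp}(F,P)\geq m+1$ whenever the cycles do not cover all edges.

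The only step needing care is this verification that each selected cycle survives as an entire circuit under a \emph{complete} string, rather than the partial string used in Theorem~\ref{thm:gen_hamming_weight}. The concern is that choices at vertices outside $W$ might merge it with other edges. They cannot: a vertex outside $W$ is not on any $C_i$, and transitions at vertices of $C_i$ are fixed to follow $C_i$. So the circuits through the $C_i$ are determined solely by the transitions at vertices of $W$, and these match $C_i$. Everything else is bookkeeping with the constants from Corollary~\ref{lem:many_cycles}, giving $\alpha_{\rm loc}\geq n/(20\log n)-1$.
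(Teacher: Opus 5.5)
Your proposal is correct and follows essentially the paper's argument: take the edge-disjoint short cycles from Corollary~\ref{lem:many_cycles}, extend them to a circuit partition given by a complete Pauli string, and apply Corollary~\ref{corr:alpha_circuit_partition} to get $\alpha_{\rm loc}\geq |\mathcal{C}|-1$. The only difference is the extension step: the paper decomposes the remaining even-degree graph into cycles via Veblen's theorem, while you fix the transitions at vertices of $W$ and choose arbitrary transitions elsewhere, which is equally valid and makes the correspondence with a complete Pauli string explicit.
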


\begin{proof}
Let $F$ be the associated coded $4$-regular multigraph with the circle stabilizer state $\ket{\psi}$ given by Proposition~\ref{prop:circle_characterization}. Hence, by Corollary~\ref{corr:alpha_circuit_partition}, the locally accessible information of $\ket{\psi}$ can be lower bounded by the size of the largest cycle partition of $F$. To bound such a partition, let $\mathcal{C}$ be the collection of edge-disjoint cycles obtained by Corollary~\ref{lem:many_cycles} and remove it from the graph $F$ to obtain a graph $H$. The graph $H$ still has all vertices of even degree, and thus by Veblen's theorem it has a decomposition into edge-disjoint cycles. We thus obtain a cycle partition of $F$ of size at least $|\mathcal{C}|=n/(20\log n)$. Hence, we have $\alpha_{\textrm{loc}}(\ket{\psi})\geq |\mathcal{C}|-1=\Omega(n/\log n)$.
\end{proof}

\section{Geelen's conjecture and entanglement in vertex-minor-closed families}

In this section we explain how Geelen's weak structural conjecture for vertex-minors can be used to  `lift' the bounds on $\alpha_{\rm loc}$ and the distance $d$ proved for circle stabilizer states to \emph{strongly rank-connected} arbitrary proper vertex-minor-closed families. The conjecture says, roughly, that the highly rank-connected members of any proper vertex-minor-closed graph class are circle graphs up to a bounded-rank perturbation. We first recall the graph-theoretic notions appearing in the
statement and introduce a more general---but equivalent---version of the conjecture in terms of stabilizer states. We then show that bounded-rank perturbations have only a controlled effect on the rank and nullity functions associated with single-qubit Pauli measurements. Combining this with our bounds for circle stabilizer states allows us to lift the result in Section~\ref{sec:max_ent_in_circ} to sufficiently rank-connected states in proper vertex-minor-closed families.

\subsection{Geelen's weak structural conjecture on vertex-minors}\label{sec:geelens_conj}

We will first define the notion of proper vertex-minor-closed families introduced in~\cite{mccarty2021local}. Recall that a graph $H$ is a vertex-minor of $G$ if $H$ is an induced subgraph of a graph $G'$ locally equivalent to $G$.

\begin{definition}[Proper vertex-minor-closed families of graphs]\label{def:PVMC_graphs}
A family of graphs $\mathcal{F}$ is a \emph{proper vertex-minor-closed} (PVMC) family if it is closed under taking vertex-minors and it is not the family of all graphs.
\end{definition}

Alternatively, a PVMC family $\mathcal{F}$ is a family obtained by forbidding a non-empty family as vertex-minors of $\mathcal{F}$. As discussed in Section~\ref{sec:prelim}, the concept of vertex-minors can be extended to stabilizer states (see Definition~\ref{def:vertex_minor_stab}). This naturally leads to the following definition.

\begin{definition}[Proper vertex-minor-closed families of stabilizer states]\label{def:PVMC}
A family of stabilizer states $\mathcal{F}$ is a \emph{proper vertex-minor-closed family} if it is closed under vertex-minors and it is not the family of all stabilizer states.
\end{definition}

From a quantum perspective, PVMC families correspond to classes of stabilizer states that are closed under local Clifford operations and single-qubit Pauli measurements, yet exclude certain states. Examples are given by stabilizer states with bounded \emph{rank-width} (which are states not useful for measurement-based quantum computation~\cite{van2007classical}), or those stabilizer states that can be created with a fixed number of photonic emitters~\footnote{This follows from the characterization from~\cite{li2022photonic}, where the smallest number of photonic emitters corresponds to the value of the smallest height function. This can be seen to match with the definition of the linear rank-width~\cite{oum2017rank}.}. PVMC families have also been studied from the context of measurement-based quantum computation (MBQC); the McCarty-Geelen conjecture states that MBQC on any PVMC family of states can be classically efficiently simulated~\cite{mccarty2021local,harrison2025fermionic}.

\medskip

Before stating Geelen's weak structure conjecture, we introduce the notion of rank perturbations and rank-connectivity~\cite{mccarty2021local, oum2023rank}.

\begin{definition}
    A \emph{rank-$p$ perturbation} of a graph $G$ is a graph whose adjacency matrix can be obtained by adding over $\mathbb{F}_2$ a symmetric matrix of rank at most $p$ to the adjacency matrix of $G$ and then changing all diagonal entries to~$0$.
\end{definition}

Two graphs have similar rank functions if they are related by a low-rank perturbation (which we make formal in lemma \ref{lemma:perturbations_affect_rank}). As such, their corresponding graph states have similar behavior under single-qubit Pauli measurements, and thus have similar entanglement. We note that it is also possible to extend the notion of rank-$p$ perturbations to stabilizer states directly (i.e.~by not referencing graphs), by using the notion of lifts and projections~\cite{geelen2008some}. We will not require this, however. Recall that the cut-rank function $\operatorname{cutrk}$ of a graph $G$ was introduced in Definition~\ref{def:cutrank_graph_states}.

\begin{definition}
    A graph $G$ on vertex set $V$ is \emph{$k$-rank-connected} if $\left|V\right|\geq 2k$ and $\operatorname{cutrk}(X) \geq \min(\left|X\right|, \left|V\setminus X\right|, k)$ for all $X\subseteq V$. The \emph{rank-connectivity} of a graph $G$ is the largest $k\in\mathbb{N}_0$ such that $G$ is $k$-rank-connected.
\end{definition}

In other words, rank-connectivity $k$ implies that any sufficiently balanced bipartition of $G$ has cut-rank at least $k$. We remark that $G$ is $1$-rank-connected if and only if $G$ is connected (and has more than one vertex). Note that the above definition generalizes to arbitrary stabilizer states through the extension of the cut-rank function in Corollary~\ref{corr:cutrk}. We are now ready to state Geelen's weak vertex-minor structure conjecture~\cite{mccarty2021local}.

\begin{conjecture}[Geelen's weak vertex-minor structure conjecture]\label{conj:geelen}
For every PVMC family $\mathcal{F}$ of graphs, there exist $k, p\in \mathbb{N}$ such that each graph in $\mathcal{F}$ that is $k$-rank-connected is a rank-$p$ perturbation of a circle graph.
\end{conjecture}

Note that the hypothesis of $\mathcal{F}$ being proper is necessary, since the family of all graphs never satisfies the statement of the conjecture. Since every concept used in the statement of Conjecture~\ref{conj:geelen} has an analogue for stabilizer states, we can rephrase the above conjecture as follows.

\begin{conjecture}[Geelen's weak vertex-minor structure conjecture for stabilizer states]\label{conj:geelen2}
For every PVMC family $\mathcal{F}$ of stabilizer states, there exist $k, p\in \mathbb{N}$ such that each state in $\mathcal{F}$ that is $k$-rank-connected is locally equivalent to a rank-$p$ perturbation of a circle graph state.
\end{conjecture}

Informally, the conjecture states that highly connected stabilizer states in any proper vertex-minor-closed family are essentially circle graph states, apart from a bounded-size rank perturbation.

\subsection{Entanglement under rank-$p$ perturbations}\label{sec:ent_under_rank_perturbations}

In this subsection, we show that rank-$p$ perturbations affect the entanglement in graph states in a controlled manner. More specifically, we first show that rank-$p$ perturbations can alter $\alpha_{\rm loc}$ by at most $p$. To do so we use the following lemma, which shows that rank-$p$ perturbations change the rank function by at most $p$ after a local relabellings of Pauli strings/single-qubit Clifford rotations.

\begin{lemma}\label{lemma:perturbations_affect_rank}
Let $G$ and $H$ be graphs that are rank-$p$ perturbations of one another, and let $r_{\ket{G}}, r_{\ket{H}}$ be their respective rank functions of their graph states. Then there exists a bijection $\phi:\mathcal{P}_n\rightarrow \mathcal{P}_n$ such that for all Pauli strings $P$, it holds that $\left|r_{\ket{G}}\left(P\right)-r_{\ket{H}}\left(\phi(P)\right)\right| \leq p$. Moreover, the bijection is a local relabelling of the Pauli operators, such that $\phi$ preserves the support of the Pauli strings, i.e., $\operatorname{supp}(P)=\operatorname{supp}(\phi(P))$.
\end{lemma}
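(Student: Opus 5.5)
The plan is to work entirely with the isotropic-matroid matrices of Definition~\ref{def:IAS} and reduce the statement to rank subadditivity over $\mathbb{F}_2$. By Lemma~\ref{lemma:substrings_kernel}, $\nu_{\ket{G}}(P)=\operatorname{nullity}(I(G,P))$. Since $I(G,P)$ has exactly $w(P)$ columns, this gives $r_{\ket{G}}(P)=w(P)-\nu_{\ket{G}}(P)=\operatorname{rank}(I(G,P))$, and likewise for $H$. It therefore suffices to build a support-preserving relabelling $\phi$ with $|\operatorname{rank}(I(G,P))-\operatorname{rank}(I(H,\phi(P)))|\le p$.

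Write $A_H=A_G+B+D$, where $B$ is symmetric of rank at most $p$ and $D$ is the diagonal matrix with $D_{vv}=B_{vv}$. This $D$ is exactly the correction that resets the diagonal to zero, since $A_G$ has zero diagonal. Then for each vertex $v$ the adjacency vector satisfies $(A_H)_v=(A_G)_v+B_v+B_{vv}e_v$, where $B_v$ is the $v$-th column of $B$. The key observation is that the troublesome term $B_{vv}e_v$ only toggles between the $X$- and $Y$-columns at $v$. Define $\phi$ locally at each qubit: if $B_{vv}=0$, fix $I,X,Y,Z$; if $B_{vv}=1$, swap $X\leftrightarrow Y$ and fix $I,Z$. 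This $\phi$ is clearly a bijection on Pauli strings (modulo phases) and preserves supports. It is also realized by a single-qubit Clifford, though we do not need that.

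Next we check column by column that $I(H,\phi(P))=I(G,P)+E_P$, where each column of $E_P$ is either $0$ or a column $B_v$ of $B$.
\begin{itemize}
\item If $P_v=Z$, both columns equal $e_v$, so the difference is $0$.
\item If $B_{vv}=0$ and $P_v\in\{X,Y\}$, the columns differ by $B_v$.
\item If $B_{vv}=1$ and $P_v=X$, then $\phi(P)_v=Y$. The $H$-column is $e_v+(A_H)_v=(A_G)_v+B_v$, so it differs from the $G$-column $(A_G)_v$ by $B_v$.
\item The case $B_{vv}=1$, $P_v=Y$ is symmetric: the $H$-column $(A_H)_v=(A_G)_v+B_v+e_v$ differs from $e_v+(A_G)_v$ by $B_v$.
\end{itemize}
Hence the column space of $E_P$ lies in the column space of $B$, so $\operatorname{rank}(E_P)\le p$. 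Subadditivity of rank, $\operatorname{rank}(M+E)\le\operatorname{rank}(M)+\operatorname{rank}(E)$, applied in both directions then yields $|r_{\ket{G}}(P)-r_{\ket{H}}(\phi(P))|\le p$.

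The only delicate point is the diagonal bookkeeping: naively the zeroing of diagonal entries adds extra $e_v$ terms, which could raise the rank of the perturbation beyond $p$. The $X\leftrightarrow Y$ swap in $\phi$ is designed precisely to absorb these terms, so this is the step to verify carefully. The symmetric roles of $G$ and $H$ are also consistent: the inverse perturbation uses the same $B$, and $\phi$ is an involution.
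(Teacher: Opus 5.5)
Your proposal is correct and matches the paper's proof: it uses the same decomposition $A_H=A_G+B+D$, the same $X\leftrightarrow Y$ swap at vertices with $B_{vv}=1$, the identity $I(H,\phi(P))=I(G,P)+B\,N(P)$, and rank subadditivity. Your explicit column-by-column check supplies the details that the paper leaves as ``a case analysis.''
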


\begin{proof}
Let $A_G$ and $A_H$ denote the adjacency matrices of $G$ and $H$, respectively. Since $H$ is a rank-$p$ perturbation of $G$, there exists a symmetric matrix $M$ over $\mathbb F_2$ such that $\operatorname{rank}(M)\leq p$ and $A_H$ is obtained from $A_G+M$ by changing all diagonal entries to zero. That is, 
\begin{align*}
    A_H=A_G+M+D,
\end{align*}where $D$ is the diagonal matrix whose $v$-th diagonal entry is $M_{vv}$.

\medskip

Set $W=\{v\in V:\:M_{vv}=1\}$. We define the bijection $\phi:\mathcal P_n\longrightarrow \mathcal P_n$ coordinate wise. At every vertex $v\notin W$, the Pauli operator is left
unchanged, while at every vertex $v\in W$ we exchange $X$ and $Y$ and leave $I$ and $Z$ unchanged. In other words,
\begin{align*}
\phi(P)_v=
\begin{cases}
P_v, & v\notin W,\\
Y, & v\in W \text{ and } P_v=X,\\
X, & v\in W \text{ and } P_v=Y,\\
P_v, & v\in W \text{ and } P_v\in\{I,Z\}.
\end{cases}
\end{align*}
The map $\phi$ is clearly a bijection. Moreover, it preserves the support and weight of every Pauli string.

\medskip

For a Pauli string $P \in \mathcal{P}_n$, let $I(G,P)$ and $I(H,\phi(P))$ be the matrices introduced in Definition~\ref{def:IAS}. Since $I(G,P)$ has exactly $w(P)$ columns, it follows by Lemma~\ref{lemma:substrings_kernel} that
\begin{align}\label{eq:rank_IGP}
r_{\ket G}(P)=w(P)-\nu_{\ket G}(P)=w(P)-\operatorname{nullity} I(G,P)=\operatorname{rank} I(G,P).
\end{align}
Hence, to compare the rank function of $\ket{G}$ and $\ket{H}$ we just need to compare the matrices $I(G,P)$ and $I(H,\phi(P))$.

\medskip

For any Pauli string $P$, let $N(P)$ be the $n\times \left|\textrm{supp}(P)\right|$ matrix with $v$'th column $N(P)_{v} = e_v$ if $P_v \in \lbrace{X, Y\rbrace}$ and zero else. The columns of the matrix $M\cdot N(P)$ are $Me_v$ if $P_v \in \lbrace{X, Y\rbrace}$ and zero else. A case analysis shows that $I(H,\phi(P))=I(G,P)+M\cdot N(P)$. Since the rank of $M\cdot N(P)$ cannot be larger than the rank of $M$ and $|\operatorname{rank}(X+Y)-\operatorname{rank}(X)|\leq \operatorname{rank}(Y)$ for any matrices $X$ and $Y$, we obtain by~\eqref{eq:rank_IGP} that
\begin{align*}
    \left|r_{\ket{G}}(P)-r_{\ket{H}}(\phi(P))\right|=\left|\operatorname{rank}I(G,P)-\operatorname{rank}I(H,\phi(P))\right|\leq \operatorname{rank}\left(M\cdot N(P)\right)\leq \operatorname{rank}{M}\leq p.
\end{align*}
This concludes the proof of the lemma.
\end{proof}

By running over all possible Pauli strings and by the fact that $\phi$ is a bijection, we immediately obtain the following corollary for $\alpha_{\rm loc}$.

\begin{corollary}\label{lemma:perturbations_alpha_kappa}
Let $G$ and $H$ be graphs that are rank-$p$ perturbations of one another. Then $\left|\alpha_\textrm{loc}(\ket{G})-\alpha_\textrm{loc}(\ket{H})\right| \leq p$. 
\end{corollary}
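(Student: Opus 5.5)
The plan is to prove the two inequalities $\alpha_{\rm loc}(\ket{G})\leq \alpha_{\rm loc}(\ket{H})+p$ and $\alpha_{\rm loc}(\ket{H})\leq \alpha_{\rm loc}(\ket{G})+p$. Since being a rank-$p$ perturbation is symmetric (it is stated as ``of one another''), it suffices to prove the first one and then swap the roles of $G$ and $H$. The main tool is Lemma~\ref{lemma:perturbations_affect_rank}, which gives a support-preserving local relabelling $\phi:\mathcal{P}_n\to\mathcal{P}_n$ with $|r_{\ket G}(P)-r_{\ket H}(\phi(P))|\leq p$ for every Pauli string $P$.

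First I will convert the rank bound into a nullity bound. The map $\phi$ preserves supports, so $w(\phi(P))=w(P)$. Hence
\begin{align*}
|\nu_{\ket G}(P)-\nu_{\ket H}(\phi(P))| = |(w(P)-r_{\ket G}(P))-(w(P)-r_{\ket H}(\phi(P)))|\leq p .
\end{align*}

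Next I will use Definition~\ref{corr:alpha_char}, where $\alpha_{\rm loc}$ is the maximum of $\nu$ over all Pauli strings, equivalently over complete ones. Let $P^\ast$ attain $\alpha_{\rm loc}(\ket G)$. Then
\begin{align*}
\alpha_{\rm loc}(\ket G)=\nu_{\ket G}(P^\ast)\leq \nu_{\ket H}(\phi(P^\ast))+p\leq \alpha_{\rm loc}(\ket H)+p .
\end{align*}
Because $\phi$ is support-preserving, $\phi(P^\ast)$ is complete whenever $P^\ast$ is, so restricting the maximization to complete strings causes no problem.

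Applying the same argument with $G$ and $H$ exchanged gives the other inequality, and together they yield the claim. There is no real obstacle here. The only points to check are that the perturbation relation is symmetric and that $\phi$ preserves weights. Symmetry holds because $A_G=A_H+M+D$ with the same $M$, by the construction in the proof of Lemma~\ref{lemma:perturbations_affect_rank}. Alternatively, symmetry can be avoided by using $\phi^{-1}$, which is also support-preserving, directly in the reverse direction.
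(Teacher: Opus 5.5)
Your proposal is correct and follows the same route as the paper, which obtains the corollary directly from Lemma~\ref{lemma:perturbations_affect_rank} by running over all Pauli strings and using that $\phi$ is a bijection; this is exactly your weight-preservation-plus-maximization argument, and your $\phi^{-1}$ remark is where bijectivity enters. Your checks on support preservation and on the symmetry of the perturbation relation simply make explicit what the paper leaves implicit.
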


Another consequence of Lemma~\ref{lemma:perturbations_affect_rank} is the following result showing that rank-$p$ perturbations change the generalized Hamming weights $d_\ell$ in a controlled manner

\begin{lemma}\label{lemma:dist_gen_hamming_weights}
Let $\ket{G}$ be an $n$-qubit graph state with $\alpha\equiv\alpha_{\rm loc}(\ket{G})$, and generalized Hamming weights $d_1, d_2, \ldots, d_{\alpha}$. Let $H$ be a rank-$p$ perturbation of $G$ with $p<\alpha$. Then $d(\ket{H})\leq d_{p+1}$.
\end{lemma}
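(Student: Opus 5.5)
Take a Pauli string $P$ for $\ket{G}$ attaining $d_{p+1}$, so $w(P)=d_{p+1}$ and $\nu_{\ket G}(P)=p+1$. Push it through the support-preserving bijection $\phi$ of Lemma~\ref{lemma:perturbations_affect_rank}, and show that $\phi(P)$ still has positive nullity for $\ket{H}$. Lemma~\ref{corollary:dist_nullity} then produces a non-identity stabilizer of $\ket H$ supported inside $\operatorname{supp}(P)$, which gives the bound.

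The steps are as follows.

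\emph{Step 1.} Since $p<\alpha=\alpha_{\rm loc}(\ket G)$, we have $p+1\le\alpha$, so $d_{p+1}$ is defined. Fix $P$ with $\nu_{\ket G}(P)=p+1$ and $w(P)=d_{p+1}$.

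\emph{Step 2.} Lemma~\ref{lemma:perturbations_affect_rank} gives $r_{\ket H}(\phi(P))\le r_{\ket G}(P)+p$. Because $w(\phi(P))=w(P)$, this yields
\[
\nu_{\ket H}(\phi(P)) = w(P)-r_{\ket H}(\phi(P)) \ge \nu_{\ket G}(P)-p = 1 .
\]

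\emph{Step 3.} By definition, $\nu_{\ket H}(\phi(P))=\dim\mathcal I(\mathcal S_{\ket H},\phi(P))\ge 1$. Hence the intersection contains a non-identity stabilizer $S$, and $S$ is, up to phase, a substring of $\phi(P)$. Therefore
\[
d(\ket H)\le w(S)\le w(\phi(P))=w(P)=d_{p+1}.
\]
This is exactly the converse direction in the proof of Lemma~\ref{corollary:dist_nullity}.

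The only delicate point is Step 2. We must use the lower-bound direction of the rank comparison, and it must be applied to the relabelled string $\phi(P)$ rather than to $P$ itself. This is legitimate because $\phi$ preserves weight, so the nullity transfers with the same additive loss $p$. Everything else is bookkeeping. The lemma is only meaningful when $p+1\le\alpha$, and that is precisely what the hypothesis $p<\alpha$ guarantees.
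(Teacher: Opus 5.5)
Your proposal is correct and follows the paper's proof essentially step for step. Both pick $P$ with $w(P)=d_{p+1}$ and $\nu_{\ket{G}}(P)=p+1$, transfer it via the support-preserving bijection $\phi$ of Lemma~\ref{lemma:perturbations_affect_rank} to get $\nu_{\ket{H}}(\phi(P))\geq 1$, and then extract a non-identity stabilizer of $\ket{H}$ supported in $\operatorname{supp}(P)$, giving $d(\ket{H})\leq d_{p+1}$.
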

\begin{proof}
By definition of the generalized Hamming weights, there exists a Pauli string $P$ of weight $w(P)=d_{p+1}$ and nullity $\nu_{\ket{G}}\left(P\right)=p+1$. Let $\phi:\mathcal P_n\to\mathcal P_n$ be the support-preserving bijection given by Lemma~\ref{lemma:perturbations_affect_rank}. Since $\phi$ preserves support, it also preserves weight, and hence $w(\phi(P))=w(P)$. Moreover, Lemma~\ref{lemma:perturbations_affect_rank} also gives us that
\begin{align*}
\nu_{\ket{H}}(\phi(P))=w(\phi(P))-r_{\ket{H}}(\phi(P))\geq w(P)-(r_{\ket{G}}(P)+p)=\nu_{\ket{G}}(P)-p=1.
\end{align*}
Therefore, the locally commutative group $\mathcal I\bigl(\mathcal S_{\ket{H}},\phi(P)\bigr)$
contains a non-identity stabilizer $S$ with $\operatorname{supp}(S)\subseteq \operatorname{supp}(\phi(P))=\operatorname{supp}(P)$. 
Consequently, by the definition of distance, it follows that
\begin{align*}
    d(\ket{H})\leq w(S)\leq w(P)=d_{p+1}
\end{align*}
as desired.
\end{proof}

\subsection{Lifting to PVMC families}

Using the previous two lemmas, we are ready to lift the constraints on the asymptotic entanglement in circle stabilizer states (see theorems~\ref{thm:distance_circle_graph} and \ref{thm:alpha_circle_graph}) to arbitrary sufficiently connected stabilizer states in PVMC families.

\begin{theorem}\label{theorem:vm_closed_has_small_d}
Geelen's weak vertex-minor structure conjecture implies the following. Let $\mathcal{F}$ be a proper vertex-minor-closed class of stabilizer states and $k$ be the associated parameter in Geelen's conjecture attached to $\mathcal{F}$. Let $\ket{\psi_n}$ be an $n$-qubit stabilizer state in $\mathcal{F}$. Then either $\ket{\psi_n}$ is not $k$-rank-connected, or the following two statements hold
\begin{enumerate}
        \item[$(i)$] the locally accessible information $\alpha_{\rm loc}$ satisfies $\alpha_{\rm loc}\left(\ket{\psi_n}\right)=\Omega\left(\frac{n}{\log n }\right)$, and
             \item[$(ii)$] the distance satisfies $d(\ket{\psi_n})=O\left(\log n \right)$.
\end{enumerate}
\end{theorem}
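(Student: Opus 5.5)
Assume $\ket{\psi_n}$ is $k$-rank-connected, with $k,p$ the constants that Conjecture~\ref{conj:geelen2} attaches to $\mathcal{F}$. The conjecture then gives a single-qubit Clifford unitary $U$ and a circle graph $G$ on the same $n$ vertices such that $\ket{\psi_n}=U\ket{H}$, where $H$ is a rank-$p$ perturbation of $G$. Two observations will let us work with $\ket{H}$ and $\ket{G}$ instead of $\ket{\psi_n}$. First, local Clifford unitaries only permute $\{X,Y,Z\}$ on each qubit and preserve supports. Hence $d$, $\alpha_{\rm loc}$ and every generalized Hamming weight are local-equivalence invariants, as already noted in the proof of Proposition~\ref{prop:alpha_loc_independent}. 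Second, the rank-$p$ perturbation relation is symmetric: if $A_H=A_G+M+D$, then $A_G=A_H+M+D'$. So Lemma~\ref{lemma:perturbations_affect_rank} and its corollaries apply in either direction, and $p$ is a constant independent of $n$.

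For $(i)$, note that $\ket{G}$ is a circle (graph, hence circle stabilizer) state, so Theorem~\ref{thm:alpha_circle_graph} gives $\alpha_{\rm loc}(\ket{G})\geq c\,n/\log n$ for large $n$. Corollary~\ref{lemma:perturbations_alpha_kappa} then yields $\alpha_{\rm loc}(\ket{\psi_n})=\alpha_{\rm loc}(\ket{H})\geq c\,n/\log n-p=\Omega(n/\log n)$.

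For $(ii)$, apply Lemma~\ref{lemma:dist_gen_hamming_weights} with the roles chosen so that the unperturbed graph is the circle graph $G$ and the perturbed one is $H$. This needs $p<\alpha_{\rm loc}(\ket{G})$, which holds for all sufficiently large $n$ by Theorem~\ref{thm:alpha_circle_graph}, since $p$ is fixed. The lemma then gives $d(\ket{H})\leq d_{p+1}(\ket{G})$, and Theorem~\ref{thm:gen_hamming_weight} with $\ell=p+1$ fixed gives $d_{p+1}(\ket{G})=O(\log n)$. Local invariance transfers the bound to $\ket{\psi_n}$. Small $n$ are absorbed into the constants of the $O$ and $\Omega$.

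I expect the distance bound to be the main obstacle. A rank-$p$ perturbation can destroy a single short stabilizer, so the ordinary girth bound on $d(\ket{G})$ is not enough. We need $p+1$ independent short stabilizer substrings supported on $O(\log n)$ qubits, so that at least one survives the perturbation. This is exactly what combining the generalized Hamming weights (Theorem~\ref{thm:gen_hamming_weight}, via Corollary~\ref{lem:many_cycles}) with Lemma~\ref{lemma:dist_gen_hamming_weights} supplies. The one point to check is that the implicit constant in Theorem~\ref{thm:gen_hamming_weight}, namely $10\ell\log n$, depends only on $\ell=p+1$ and hence only on $\mathcal{F}$.
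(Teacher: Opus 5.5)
Your proposal is correct and follows essentially the same route as the paper. The paper also reduces to $\ket{H}$ via local-Clifford invariance of $d$ and $\alpha_{\rm loc}$, obtains $(i)$ from Corollary~\ref{lemma:perturbations_alpha_kappa} and Theorem~\ref{thm:alpha_circle_graph}, and obtains $(ii)$ by using $\alpha_{\rm loc}(\ket{G})>p$ for large $n$ to apply Lemma~\ref{lemma:dist_gen_hamming_weights} together with Theorem~\ref{thm:gen_hamming_weight} at $\ell=p+1$; your remarks on the symmetry of rank-$p$ perturbations and on the constants depending only on $\ell=p+1$ (hence only on $\mathcal{F}$) make explicit what the paper leaves implicit.
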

\begin{proof}
Assume Geelen's weak vertex-minor structure conjecture, and let $k,p\in\mathbb N$ be the parameters associated with the proper vertex-minor-closed family $\mathcal F$ in Conjecture~\ref{conj:geelen2}. Suppose that $\ket{\psi_n}$ is $k$-rank-connected. We claim that both conclusions hold. By Conjecture~\ref{conj:geelen2}, there exist a circle graph $G_n$ and a graph $H_n$ such that $H_n$ is a rank-$p$ perturbation of $G_n$ and $\ket{\psi_n}$ is locally equivalent to the graph state $\ket{H_n}$. Since local Clifford unitaries act coordinate wise on Pauli strings and preserve their supports, both the distance and the locally accessible information are invariant under local equivalence. Therefore, it holds that  $d(\ket{\psi_n})=d(\ket{H_n})$ and $\alpha_{\rm loc}(\ket{\psi_n})=\alpha_{\rm loc}(\ket{H_n})$.

\medskip

We first prove~$(i)$. By
Corollary~\ref{lemma:perturbations_alpha_kappa}, it holds that $\left|\alpha_{\rm loc}(\ket{G_n})-\alpha_{\rm loc}(\ket{H_n})\right|\leq p$. Since $\ket{G_n}$ is a circle stabilizer state, Theorem~\ref{thm:alpha_circle_graph} gives $\alpha_{\rm loc}(\ket{G_n})=\Omega\left(\frac{n}{\log n}\right)$. As $p$ is a constant depending only on $\mathcal F$, it follows that
\begin{align*}
\alpha_{\rm loc}(\ket{\psi_n})=\alpha_{\rm loc}(\ket{H_n})\geq\alpha_{\rm loc}(\ket{G_n})-p=\Omega\left(\frac{n}{\log n}\right).
\end{align*}

We now prove~$(ii)$. By the preceding estimate,
$\alpha_{\rm loc}(\ket{G_n})>p$ for all sufficiently large $n$. Hence the generalized Hamming weight $d_{p+1}(\ket{G_n})$ is well-defined. Since $H_n$ is a
rank-$p$ perturbation of $G_n$, Theorem~\ref{thm:gen_hamming_weight} combined with Lemma~\ref{lemma:dist_gen_hamming_weights} yields that
\begin{align*}
d(\ket{\psi_n})=d(\ket{H_n})
\leq
d_{p+1}(\ket{G_n})=O(\log n).
\end{align*}
This proves the dichotomy stated in the theorem.
\end{proof}


\end{document}